\documentclass[a4paper,abstract=true]{scrartcl}
\usepackage[margin=1in]{geometry}
\usepackage{amsfonts}
\usepackage{amssymb}
\usepackage{amsmath}
\usepackage{amsthm}
\usepackage{comment}
\usepackage{graphicx}
\usepackage{subcaption}
\usepackage{thm-restate}
\usepackage{mathrsfs}
\usepackage{mathalpha}

\usepackage[ruled,vlined,linesnumbered]{algorithm2e}
\usepackage{hyperref}
\usepackage[noabbrev,capitalize]{cleveref}

\usepackage{csquotes}
\usepackage{lineno}
\usepackage{complexity}
\usepackage{footnote}

\usepackage{tikz}
\usetikzlibrary{calc}
\usetikzlibrary{patterns,decorations.pathreplacing}

\usetikzlibrary{arrows.meta}

\newtheorem{theorem}{Theorem}
\newtheorem{observation}[theorem]{Observation}

\newtheorem{lemma}[theorem]{Lemma}

\newtheorem{cor}[theorem]{Corollary}

\newtheorem{proposition}[theorem]{Proposition}

\newtheorem*{definition*}{Definition}
\newtheorem*{lemma*}{Lemma}

\usepackage{authblk}

\allowdisplaybreaks

\begin{document}

	\title{Flip Distance Between Triangulations of Convex Polygons is NP-Complete}
	
	\author{Joseph Dorfer}
	
	\affil{Graz University of Technology, Graz, Austria, \textit{joseph.dorfer@tugraz.at}}

	\date{}
	
	\maketitle
	
	\begin{abstract}		
		
		The complexity of determining the minimum number of flips that transform one triangulation of a convex polygon into another has been raised as an open problem by Culik and Wood [Inf. Proc. Letters 1982] and has been popularized through the study of extremal pairs of triangulations by Sleator, Tarjan, and Thurston [STOC 1986\& J. Am. Math. Society~1988].
		
		The search for a hardness proof for the flip distance problem has yielded many (weaker) hardness-results in more general settings. Lubiw and Pathak [CCCG 2012, Comp. Geom. 2014] and Pilz [Comp. Geom. 2014] proved that computing the flip distance between triangulations of point sets in general position is \NP-complete. Further, Aichholzer, Mulzer, and Pilz [ESA~2013, DCG 2015] proved that computing the flip distance between triangulations of simple polygons is \NP-complete.
		
		The methods used to obtain these previous hardness results rely heavily on the geometric positioning of the vertices of the point set or the vertices of the polygon and are thus not feasible to prove \NP-hardness for the flip distance problem of triangulations of convex polygons, where vertices are placed on a circle.
		
		We formulate a notion of conflict graphs for flip sequences between triangulations of convex polygons that allows us to use both geometric and combinatorial tools to study flip sequences. Large acyclic subsets of the conflict graph will correspond to short flip sequences. We show \NP-completeness of the flip distance problem by proving that finding the largest acyclic subset of our conflict graphs is \NP-complete.
		
		Triangulations of convex polygons are counted by the famous Catalan numbers and are in bijection with many other famous structures. Therefore, they are an important cornerstone in the study of polytopes (in form of the associahedron) or lattices (in form of the Tamari lattice). As a consequence of our result, we obtain two results that are of independent interest in the area of data structures and discrete optimization: (1) The rotation distance problem of binary trees is \NP-complete; (2) The computation of shortest paths on the edge graph of a simple polytope is \NP-hard.
	\end{abstract}
	
	\newpage
	
	\section{Introduction}
	
	There are many combinatorial objects that can be counted by the well-known Catalan numbers $C_n =~\frac{1}{n+1}\binom{2n}{n}$~\cite{gardner2020time}. Many of these structures allow natural reconfiguration operations via small reconfiguration steps among them. There is a large family of structures for which this natural reconfiguration operation commutes with bijections between structures, that is, two structures that can be transformed into one another via a reconfiguration step can also be transformed into one another after applying a bijection that maps them to a different Catalan structure. Among them are flips in triangulations of convex polygons, rotations in binary trees and rearrangements of bracketings. In the 1960's it was discovered by Tamari and Stasheff that for this big family of structures the reconfiguration graph has a nice lattice structure, the Tamari Lattice~\cite{tamari1962algebra}, and arises as the 1-skeleton of a polytope, the so called associahedron~\cite{stasheff1963homotopy}.
	
	In 1982, Culik and Wood~\cite{CULIK198239} explicitly asked the question about the algorithmic complexity of determining the minimum number of rotations that are needed to transform one binary tree into one another. This question is equivalent to determining the minimum number of reconfiguration steps between any of the numerous structures that are in bijection with binary trees as long as the bijection preserves incidence under the reconfiguration operation. We will study this reconfiguration distance from the viewpoint of flips in triangulations of convex polygons.
	
	\subparagraph{Flips in triangulations of convex polygons.} Let $P$ be a convex polygon with $n$ vertices. Two edges in the interior of $P$ are said to cross, whenever their vertices appear alternatingly along the boundary of the polygon. A triangulation $T$ is a maximal plane collection of straight-line edges that connects vertices of $P$. All faces of $T$ except the outer face are triangles. $T$ has $n$ edges between consecutive vertices on the boundary. We call these edges \emph{boundary edges}. Further, there are $n-3$ pair-wise non-crossing \emph{interior edges} which connect non-consecutive vertices of the polygon and split the polygon into $n-2$ triangles. Triangulations of convex polygons admit two natural generalizations, triangulations of point sets in general position and triangulations of polygons (with and without holes). In both settings, triangulations are again maximal plane collections of straight-line edges. For point sets, an edge connects two points in the plane. For polygons an edge connects two vertices of the polygon and is entirely contained in the interior of the polygon. If we require the point set to be in convex position or the polygon to be convex (and hole free), then we recover the original setting of triangulations of convex~polygons.
	
	A \emph{flip} in a triangulation is an operation that takes a convex quadrilateral formed by two adjacent triangles, removes the diagonal that goes through the quadrilateral (and separates the two triangles) and adds the other diagonal of the quadrilateral. For an illustration, see \cref{fig:flip}. The resulting structure is again a triangulation. A \emph{flip sequence} between an \emph{initial} triangulation $T$ and a \emph{target} triangulation~$T'$ is a sequence $T = T_0, T_1, \ldots, T_{\ell-1} , T_\ell = T'$ such that each $T_i$ is a triangulation and two consecutive triangulations only differ by a single flip. The parameter $\ell$ describes the \emph{length} of a flip sequence. The \emph{flip distance} between $T$ and~$T'$ is the minimum $\ell$ for which a flip sequence of length $\ell$ from $T$ to~$T'$ exists. A flip sequence of minimum length~$\ell$ is called a \emph{shortest flip sequence}.
	
	The \emph{flip graph} of triangulations of $P$ is the graph that has all triangulations of $P$ as vertices and an edge between two triangulations if they can be transformed into one another via a single flip. Our definitions can be restated in terms of the flip graph: A flip sequence is a path in the flip graph. The flip distance corresponds to the length of a shortest path in the flip graph. The \emph{diameter} of the flip graph is the maximal flip distance between any pair of triangulations. Since triangulations of a convex polygon are counted by the Catalan numbers, there are $\Theta(\frac{4^n}{n^{3/2}})$ triangulations of a convex $n$-gon. Due to the high number of vertices in the flip graph, it is not feasible to run standard graph algorithms on the flip graph to compute flip distances.
	
	Sleator, Tarjan, and Thurston \cite{sleator1986rotation} showed that any triangulation of a convex $n$-gon can be transformed into any other using at most $2n-10$ flips for $n>12$. This provides an upper bound on the diameter of the flip graph. Further, they showed that the bound on the diameter is tight for infinitely many large enough $n$. For their proof of the lower bound they glue together the starting and target triangulation to form the top and bottom surface of a polyhedron. Then the authors relate the flip distance of the two triangulations to the minimum number of tetrahedra in a $3$-dimensional triangulation of said polyhedron. Afterwards, a lower bound on the number of tetrahedra of such a triangulation is given using a volume argument via hyperbolic geometry.
	Minimizing the number of tetrahedra in a tetrahedralization of a 3-dimensional polytope has been shown to be \NP-hard by Below, de Loera, and Richter-Gebert~\cite{below2000finding}. However, since not every 3-dimensional polytope is obtained by glueing two triangulations along their boundary, this hardness result does not translate back to the flip distance of triangulations.
	Pournin~\cite{pournin2014diameter} later gave a purely combinatorial proof of the diameter of the flip graph by providing families of pairs of triangulations that have distance $2n-10$ from one another. From this proof it follows that the $2n-10$ bound is tight for all $n$ starting from $n=13$.
	
	A question that remained open was the complexity of finding shortest flip sequences between triangulations of convex polygons. Lubiw and Pathak~\cite{lubiw2015flip} and Pilz~\cite{pilz2014flip} independently proved that computing shortest flip sequences between triangulations of point sets is \NP-hard. 
	Both papers also proved NP-hardness of finding shortest flip sequences between triangulations of polygons with holes. 
	Aichholzer, Mulzer, and Pilz~\cite{aichholzer2015flip} proved \NP-hardness for the flip distance problem in simple polygons, getting rid of the necessity of holes. All these hardness proofs are built around counterexamples to the happy edge property. A \emph{happy edge} is an edge that lies in both the initial and target triangulation. The \emph{happy edge property} holds for a reconfiguration problem if there always exists a shortest flip sequence between any pair of configurations that keeps all happy edges in all intermediate configurations. In our case the configurations are triangulations. There exist examples of triangulations of point sets and (simple) polygons for which the happy edge property does not hold~\cite{bose2009flips,hurtado1996flipping}. As mentioned, these were key ingredients in proving \NP-hardness of the computation of shortest flip sequences between triangulations in the mentioned settings.
	 
	In contrast, Sleator, Tarjan, and Thurston~\cite[Lemma~3]{sleator1986rotation} proved that the happy edge property holds for triangulations of convex polygons. Concretely, the authors proved that if two triangulations of a convex polygon have an interior edge in common, then any shortest flip sequence never flips this interior edge and any flip sequence that would flip this interior edge is at least two flips longer than the optimum. Further, they showed that if it is possible to flip one interior edge of $T$ creating a triangulation that has one interior edge more in common with $T'$ than $T$ has, then there exists a shortest flip sequence that starts with this flip. In other words, whenever one can greedily add an edge from the target triangulation, it is optimal to do so. We point out that this does not automatically yield a greedy algorithm, since there are situations where it is not possible to add an edge from the target triangulation directly. Still, these nice properties have not been sufficient to find a polynomial time algorithm that computes the flip distance between triangulations of convex polygons.
	
	We settle the question about the complexity of computing shortest flip sequences between triangulations of convex polygons by proving that it is in fact \NP-hard to compute the flip distance.
	
	\begin{restatable}{theorem}{Main}\label{thm:main}
		Given two triangulations $T$ and $T'$ of the same convex polygon on $n$ vertices and an integer~$k$, deciding whether the flip distance between $T$ and $T'$ is at most $k$ is \NP-complete.
	\end{restatable}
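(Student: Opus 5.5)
Membership in \NP{} is immediate: by Sleator, Tarjan, and Thurston the flip distance is at most $2n-10$, so a certificate is a flip sequence of length at most $\min(k,2n-10)$. Each intermediate triangulation is checked in polynomial time by verifying that every step replaces the diagonal of a convex quadrilateral formed by two adjacent triangles. The substance of \cref{thm:main} is therefore \NP-hardness. Following the abstract, I would go through a combinatorial model of flip sequences, called a conflict graph, and reduce from an \NP-hard acyclic-subset problem. The natural candidate is Minimum Feedback Vertex Set (equivalently, Maximum Acyclic Induced Subgraph) on directed graphs, restricted to a class we can embed, for instance graphs of maximum degree $3$, where the problem stays hard.

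\textbf{Step 1: conflict graphs.} I would restrict to pairs $T,T'$ in which every edge $e\in T\setminus T'$ has a natural partner edge $e'\in T'\setminus T$ that it would ideally be flipped to directly. I would then make this formal: flipping $e$ to $e'$ in a single flip is possible only if certain other edges have already been flipped, and certain others have not yet been flipped. These precedence constraints define a directed conflict graph $G(T,T')$ on the edges.

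The key lemma should say that the flip distance equals $|T\setminus T'|+\min|F|$, where $F$ ranges over the sets of edges that must be flipped ``indirectly'' (at least twice) and $G-F$ must be acyclic. Proving this equivalence needs two directions.

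The upper bound is straightforward. Flip the edges of $F$ to auxiliary positions first. Then process the acyclic remainder in topological order, one flip each. Finally, repair the edges of $F$ with one extra flip each.

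The lower bound is the hard direction. I would use the happy edge property of \cite{sleator1986rotation} and a careful counting argument. Say an edge is ``used well'' if it is flipped exactly once, and then necessarily directly to its partner. I must show that the set of edges used well induces an acyclic subgraph, since any cycle would force some edge on it to move before its prerequisite. I must also show that every other edge costs at least one additional flip. This needs an analysis of which intermediate triangulations admit the direct flip. A charging scheme that assigns each extra flip to a distinct edge of $F$ is where I expect most of the difficulty.

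\textbf{Step 2: gadgets.} Given a directed graph $D$ of bounded degree, I would build $T,T'$ on a convex polygon with $\mathrm{poly}(|D|)$ vertices so that $G(T,T')$ simulates $D$. Each vertex of $D$ becomes a chain of edges, for example a fan region whose flips must proceed in order, forming a path in the conflict graph. Each arc $(u,v)$ becomes a local interleaving of the two chains' regions along the polygon boundary that creates precedence $u\to v$.

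The geometric issue is that in a convex polygon the conflicts arise only from crossing and cyclic-order relations. Arbitrary digraphs must therefore be realized via chords that cross in controlled ways. I would lay out the vertices of $D$ as long chords and realize each arc at a designated crossing zone. Spurious conflicts would be killed by inserting many happy edges that separate gadgets, which are never flipped by the happy edge property. Long chains give weights, so that breaking a vertex chain in the middle costs as much as deleting the vertex entirely. This makes optimal feedback sets in $G$ correspond to feedback vertex sets of $D$. Setting $k=|T\setminus T'|+(\text{target FVS size})\cdot(\text{gadget weight})$ completes the reduction.

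I expect the main obstacle to be the lower-bound half of Step 1. Showing that no clever flip sequence beats the conflict-graph bound, for example by routing edges through unexpected intermediate diagonals that create and resolve several conflicts at once, requires a robust invariant. My first attempt would be a potential function counting, per gadget, the edges not yet in $T'$, together with a proof that each flip decreases it by at most one except along an acyclic pattern.
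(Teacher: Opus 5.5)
Your \NP-membership argument is fine, and your overall plan matches the paper's at a high level: build a conflict graph and show that large acyclic subsets correspond to short flip sequences. Two parts of the plan, however, have genuine gaps.

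\textbf{Step 1.} You assert an exact identity $d(T,T')=|T\setminus T'|+\min|F|$ and leave its lower bound open. Exactness is not something you can expect. A direct flip of $e$ to $e'$ needs the four sides of the quadrilateral spanned by $e\cup e'$ to be present, and these are in general edges of neither triangulation, so your upper bound silently omits set-up flips. In the lower bound, a single auxiliary edge can unblock several conflicts at once, so there is no reason a charging scheme giving one extra flip per element of $F$ exists.

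The missing idea is amplification by the $\beta$-blow-up. Each conflict-graph vertex becomes a pair of fans $\Lambda(\Delta),\Lambda(\Delta')$ of $\beta$ edges on $\beta$ private boundary vertices $V^O(\Delta)$. The paper then proves only approximate bounds, $\beta(2|\Gamma|-ac(H))-3n^2\le d(\beta T,\beta T')\le\beta(2|\Gamma|-ac(H))+n(2n-5)$, which suffice once $\beta=6(n^2+n)$. The $O(n)$ set-up flips that insert the two supporting edges become negligible.

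The blow-up also gives the lower bound a concrete mechanism in place of a potential function. Call a pair direct if some $T_a$ with $a\le gone(\Delta)$ already contains an edge of $\Lambda(\Delta')$.
\begin{itemize}
\item Ordering direct pairs by $gone$ shows they induce an acyclic set. This is your ``used well'' idea, and it is the easy part.
\item For an indirect pair, at time $gone(\Delta)$ no private vertex carries an interior edge of $\beta T\cup\beta T'$. After peeling ears and deleting the hull, one finds at least $\beta-3n$ edges lying in neither $\beta T$ nor $\beta T'$, each of which must later be flipped away.
\item These counts add up over distinct indirect pairs. That additivity is the real difficulty, and your proposal gives no handle on it.
\end{itemize}

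\textbf{Step 2.} This does not work as proposed. Conflict graphs of convex triangulations are heavily constrained: for two pairs of the same type (above, below, crossing), every conflict points left to right along the spine. Your device for suppressing spurious conflicts is also unsound. An edge of $T\cap T'$ is crossed by no edge of either triangulation, so it splits the instance into independent sub-instances and destroys \emph{all} conflicts across it, not selected ones.

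Long chains also do not create weight, since a directed path is cut by deleting any single vertex. Weight, and robustness against set-up costs, comes from parallel copies with identical conflicts. The paper uses exactly this: $|C|+1$ copies per literal pair, and $\beta$ fan edges per pair.

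The paper avoids realizing general digraphs altogether. It reduces from Planar Separable Monotone Max-2SAT, places variable gadgets on the spine, and places positive and negative clause gadgets above and below it following the aligned planar drawing. It then certifies acyclicity of the constructed set by ordering above pairs, then crossing pairs, then below pairs, with tie-breaking conventions that exclude the few backward conflicts.
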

	
	Our proof uses a notion of conflict graphs for flip sequences between triangulations that allows us to encode instances of a planar Max-2SAT variant as pairs of triangulations.
	
	\begin{figure}[ht]
		\centering
		\begin{subfigure}{0.3\textwidth}
			\includegraphics[scale=0.4,page=46]{Gadgets}
			\caption{A rotation in a binary tree}
			\label{fig:rotation}
		\end{subfigure}
		\hfill
		\begin{subfigure}{0.4\textwidth}
			\includegraphics[scale=0.5,page=54]{Gadgets}
			\caption{A flip in a triangulation}
			\label{fig:flip}
		\end{subfigure}
		
		\vspace{0.5cm}
		
		\begin{subfigure}{0.8\textwidth}
			\centering
			\includegraphics[scale=0.5,page=47]{Gadgets}
			\caption{The bijection between triangulations and binary trees preserves flips. The top boundary edge of the triangulation corresponds to the root of the binary tree.}
			\label{fig:fliprot}
		\end{subfigure}
		
		\caption{A rotation in a binary tree and its corresponding flip in a triangulation. First depicted individually and then on top of one another to illustrate the reconfiguration-preserving~bijection.}
		\label{fig:intro}
	\end{figure}
	
	\newpage
	
	\subparagraph{Rotations in Binary Trees.} For a further description of the connection between flips in triangulations of convex polygons and rotations in binary trees we refer to~\cite{sleator1986rotation}. For more connections between Catalan structures, we refer to~\cite{gardner2020time}. A binary tree has two types of nodes, \emph{external} and \emph{interal}. Further, there are three types of relations among nodes, \emph{parent}, \emph{left child}, and \emph{right child}. Every node except for the \emph{root} has a parent. Internal nodes have a left child and a right child, external nodes have no children. A \emph{rotation} in a binary tree is a node change that transforms one binary tree into another: For every internal non-root vertex $b$ and its parent $a$, a rotation swaps $a$ and $b$ in the binary tree. Node $b$ becomes the parent of $a$. If $b$ was the left child of $a$, then $a$ becomes the right child of $b$. Node $b$ keeps its left child, $a$ keeps its right child and the former right child of $b$ becomes the left child of $a$. A symmetric swap can be described if $b$ was the right child. For an illustration of the concepts and the relation to flips in triangulations, see \cref{fig:intro}. The \emph{rotation} graph of binary trees is the graph that has as vertex set all binary trees on $n$ labeled nodes. Two binary trees share an edge if they can be transformed into one another via a single rotation. Rotations are the most commonly used operation when balancing binary trees, that is, creating binary trees where all leaves have roughly the same distance to the root~\cite{knuth1998art,tarjan1983data}. Further, rotations in binary trees provide a framework for studying the space of phylogenetic trees in computational biology. The related local reconfiguration operation is known as nearest-neighbor interchange~(NNI)~\cite{erdHos2021rooted,gambette2017rearrangement}.
	
	The following lemma captures the connection between flips in triangulations of convex polygons and rotations in binary trees.
	
	\begin{lemma}[Lemma 1 in \cite{sleator1986rotation}]\label{lem:binary}
		The rotation graph of binary trees on $n$ nodes is isomorphic to the flip graph of triangulations of convex ($n+2$)-gons.
	\end{lemma}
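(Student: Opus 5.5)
The plan is to construct an explicit bijection $\Phi$ from triangulations of a convex $(n+2)$-gon to binary trees with $n$ internal nodes, and then show that $\Phi$ maps flips to rotations and back. Label the polygon vertices $v_0,v_1,\ldots,v_{n+1}$ in order and distinguish the boundary edge $r=v_0v_{n+1}$ as the root edge, as in \cref{fig:fliprot}. Given a triangulation $T$, let the internal nodes of $\Phi(T)$ be the $n$ triangles of $T$ and the external nodes be the $n+1$ boundary edges other than $r$. The root is the unique triangle incident to $r$. A triangle $\Delta$ with root-side edge $e$ has two other edges $e_\ell,e_r$, taken in the order in which they appear on the polygon boundary. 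Its left (respectively right) child is the triangle on the far side of $e_\ell$ (respectively $e_r$), or the boundary edge itself if $e_\ell$ (respectively $e_r$) lies on the boundary. Since the dual graph of a triangulation of a convex polygon is a tree, this rule defines a rooted binary tree. An equivalent description is that every edge $e\neq r$ of $T$ corresponds to the node below it.

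Next I show that $\Phi$ is a bijection by induction on $n$. The root triangle is $v_0v_kv_{n+1}$ for a unique $k$. Its two sides cut off triangulations of the sub-polygons $v_0\ldots v_k$ and $v_k\ldots v_{n+1}$, and these sub-triangulations become the left and right subtrees. The inverse is built the same way: the left subtree has some number $a$ of internal nodes, so it has $a+1$ leaves, and this forces $k=a+1$. Recursing on the two sub-polygons then recovers $T$ uniquely. This is the standard Catalan recursion, so it is routine. The labeled-node convention in the paper's definition of the rotation graph matches, because the in-order position of a node determines its label, namely the apex $v_k$ of its triangle.

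The main step is to check that flips correspond exactly to rotations. Take an interior edge $e$ of $T$ with adjacent triangles $\Delta_1$ and $\Delta_2$, where $\Delta_1$ is the parent of $\Delta_2$ across $e$. Flipping $e$ replaces the quadrilateral $Q=\Delta_1\cup\Delta_2$ with its other diagonal. Name the root-side edge of $\Delta_1$ and the remaining three sides of $Q$ as $A$, $B$, $C$ in boundary order; these are the subtrees hanging off $Q$. Before the flip, with $\Delta_2$ on the left, say, the tree restricted to $Q$ is $\Delta_1(\Delta_2(A,B),C)$. After the flip it is $\Delta_2'(A,\Delta_1'(B,C))$. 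This is exactly the rotation described in the paper, with the in-order sequence $A,B,C$ preserved and all nodes outside $Q$ unchanged. The symmetric case is handled the same way.

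Conversely, every rotation at a node $b$ with parent $a$ comes from flipping the edge shared by the two triangles corresponding to $a$ and $b$. Hence $\Phi$ preserves adjacency in both directions and is a graph isomorphism. The only real obstacle I expect is bookkeeping: making the left/right convention consistent so that the in-order labels, and therefore the labeled nodes, agree before and after the flip. The quadrilateral picture settles this once the orientation is fixed.
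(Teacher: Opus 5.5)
Your proposal is correct and is essentially the argument the paper relies on: the paper does not prove this lemma itself but cites Sleator, Tarjan, and Thurston and only describes the same dual-tree bijection (\cref{fig:fliprot}), taking tree nodes to be the edges of the triangulation with the designated boundary edge as root. You instead take triangles as internal nodes, but sending each triangle to its root-side edge makes the two descriptions identical, as you note, and your explicit checks (the Catalan recursion for bijectivity and the quadrilateral computation $\Delta_1(\Delta_2(A,B),C)\mapsto\Delta_2'(A,\Delta_1'(B,C))$ for flips versus rotations) supply exactly the details the paper leaves to the citation.
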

	
	The bijection that transforms a binary tree into a triangulation and back is depicted in \cref{fig:fliprot}. It is obtained by taking the binary tree as a tree in the dual graph of the triangulation. The nodes of the binary tree are the edges of the triangulation. One boundary edge of the triangulation is a designated root node of the dual graph. Two nodes in the binary tree share an edge if their corresponding edges share a triangle in the triangulation and one edge (representing the parent) is closer to the designated root edge than the other (the child).
	
	From \cref{lem:binary} we immediately conclude an equivalent version of \cref{thm:main} that is of independent~interest.
	
	\begin{theorem}
		Given two binary trees $B$ and $B'$ with $n$ internal nodes and an integer $k$. Deciding whether the rotation distance of $B$ and $B'$ is at most $k$ is \NP-complete.
	\end{theorem}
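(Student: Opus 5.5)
This theorem follows from \cref{thm:main} together with \cref{lem:binary}. We only need to show that the bijection between binary trees and triangulations can be computed efficiently and that it preserves distances. Both membership in \NP{} and \NP-hardness then transfer.

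\textbf{Membership in \NP.} By the result of Sleator, Tarjan, and Thurston, the rotation distance between two binary trees with $n$ internal nodes is at most $2n-6$, a polynomial bound. If $k \geq 2n-6$, the answer is trivially yes. Otherwise a certificate is a sequence of at most $k$ rotations, each given as a pair consisting of a non-root internal node and its parent. A verifier applies the rotations one after another, which takes polynomial time, and checks that the final tree equals $B'$. Alternatively, we could map $B,B'$ to triangulations and use the \NP-membership part of \cref{thm:main} directly.

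\textbf{\NP-hardness.} We reduce from the flip distance problem of \cref{thm:main}. Let $T,T'$ be triangulations of a convex $(n+2)$-gon and let $k$ be an integer. Fix one boundary edge $r$ as the root and apply the bijection described after \cref{lem:binary}. The edges of the triangulation become nodes: the interior edges become the $n-1$ internal non-root nodes, the edge $r$ becomes the root, and the other boundary edges become external nodes. The parent–child relations are read off by a traversal of the dual tree starting at $r$. Left and right children are distinguished by the cyclic order of the polygon's vertices. This construction takes linear time. Since the same root edge and the same vertex order are used for both triangulations, we obtain binary trees $B,B'$ with $n$ internal nodes.

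The key step is to check that the isomorphism of \cref{lem:binary} is realized by this fixed map. Then a flip of an interior edge $e$ corresponds exactly to the rotation of $e$'s node with its parent, and paths in the flip graph correspond to paths in the rotation graph of the same length. Hence the flip distance of $T,T'$ equals the rotation distance of $B,B'$, so $(T,T',k)$ is a yes-instance if and only if $(B,B',k)$ is.

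The only real obstacle is the bookkeeping in this last step. One must ensure that the chosen root and orientation are the same for $T$ and $T'$, so that both trees live in one rotation graph with consistent labels. By \cref{lem:binary} and \cref{fig:fliprot}, this consistency is exactly what the lemma provides, so nothing beyond the stated lemma is needed.
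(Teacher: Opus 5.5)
Your proposal is correct and takes the same route as the paper, which deduces this theorem directly from \cref{thm:main} via the flip-preserving bijection of \cref{lem:binary}; your extra bookkeeping (fixed root edge and vertex order, polynomial-time computability of the map, \NP-membership via a polynomial bound on the distance) just spells out that one-line deduction. One small slip: the $2n-6$ bound of Sleator, Tarjan, and Thurston only holds for $n\ge 11$, so your ``trivially yes'' shortcut should use a bound valid for all $n$, e.g.\ $2n-2$ (flip both corresponding triangulations to a common fan), and this does not affect the rest of the argument.
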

	
	\section{Further Related Work and Implications}
	
	\subparagraph{Further Algorithmic Results for Flips in Triangulations.} There have been many attempts to find efficient algorithms for the flip distance of triangulations of convex polygons. The search of such algorithms has yielded many sub-results. Among those results is a polynomial time algorithm for finding the flip distance of triangulations of point sets, not necessarily in general position, with no empty five-holes~\cite{eppstein2010happy}. There also exists an upper bound on the flip distance in the number of crossings of the edges of the initial and target triangulation~\cite{hanke1996edge}. While this bound may work well for point sets in general position, it is not applicable for triangulations of convex polygons. It is easy to construct triangulations of linear flip distance, but with quadratically many crossings. For the construction of simple polynomial time computable upper and lower bounds on the flip distance, we refer to~\cite{baril2006efficient} and the related work of this paper. To the best of our knowledge it remains an open problem, whether there exists a polynomial time computable approximation that guarantees an approximation ratio better than $2$. For point sets in general position, however, it is known that such an approximation cannot exist unless the Unique Games Conjecture fails~\cite{pilz2014flip}. Another line of research dealt with obtaining FPT-algorithms: First, there were several simple FPT-algorithms that are a mix of an application of the happy edge property and brute-force~\cite{bosch2023improved,cleary2002restricted,lucas2010improved}. These results were followed up by algorithms that make use of a topological ordering of flips in triangulations. Flips that appear earlier in the topological ordering have to appear earlier in any flip sequence and any flip sequence that respects the topological ordering is a valid flip sequence. It is significantly easier to compute the topological ordering than an actual flip sequence~\cite{feng2021improved,kanj2017computing}. The aforementioned techniques were combined to yield the to date most efficient FPT-algorithm for the flip distance between triangulations of convex point sets~\cite{li20253}.
		
	\subparagraph{Flip Graphs for Plane Graph Reconfiguration Problems.} The notion of a flip graph can also be defined when replacing triangulations with other plane structures.
	
	We could replace triangulations with pseudo-triangulations. A pseudo-triangulation is a collection of pair-wise crossing-free edges such that every inner face has exactly three convex angles. A pseudo-triangulation is called pointed, if every vertex has one angle of at least $180$ degrees. We refer to~\cite{rote2006pseudo} for a survey on pseudo-triangulations. A flip in a (pointed) pseudo-triangulation removes one edge and adds another edge such that the result is again a (pointed) pseudo-triangulation. The flip graph of pseudo-triangulations is connected for any point set~\cite{bronnimann2006counting} and has diameter $O(n\log(n))$~\cite{BEREG2004141}. On convex point sets, the notions of triangulations and pseudo-triangulation coincides. We conclude:
		
	\begin{cor}
		Given two (pointed) pseudo-triangulations $PT$ and $PT'$ and an integer $k$, deciding whether the flip distance from $PT$ to $PT'$ is at most $k$ is \NP-complete.
	\end{cor}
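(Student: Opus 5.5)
The plan is to reduce directly from \cref{thm:main}, using the remark made just before the statement: for a point set in convex position, pseudo-triangulations and triangulations coincide. The same holds for pointed pseudo-triangulations, so both variants of the corollary follow from one argument.

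\textbf{Membership in \NP.} The certificate is a flip sequence of length at most $k$. Its length is polynomially bounded because the flip graph of pseudo-triangulations has diameter $O(n\log n)$~\cite{BEREG2004141}. So if the distance is at most $k$, we may assume $k$ is at most this bound, or else answer yes immediately. Checking that each step is a valid pseudo-triangulation flip is polynomial: we test crossing-freeness, that every inner face has exactly three convex angles, and, in the pointed case, that every vertex has a reflex angle.

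\textbf{Hardness: the objects coincide.} Given an instance $(T,T',k)$ of \cref{thm:main}, we place the $n$ polygon vertices on a circle and output the same instance, reading $T$ and $T'$ as (pointed) pseudo-triangulations of this convex point set. To justify this, we verify that on a convex point set the set of (pointed) pseudo-triangulations equals the set of triangulations.
\begin{itemize}
\item Every inner face of a crossing-free edge set on convex points is a convex polygon. Each of its angles is convex, so it has exactly three convex angles only if it is a triangle.
\item The convex hull edges must be present, since otherwise the outer structure is not a pseudo-triangulation of the point set.
\item So a pseudo-triangulation is a crossing-free edge set, containing the hull, all of whose inner faces are triangles. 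That is exactly a triangulation.
\item Every vertex of a convex point set has an outer angle of at least $180$ degrees, so every such structure is automatically pointed.
\end{itemize}

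\textbf{Hardness: the flips coincide.} A pseudo-triangulation flip removes one edge and inserts another so that the result is again a pseudo-triangulation. On convex point sets the result is therefore a triangulation differing in exactly one edge. The only such operation is the diagonal exchange in the quadrilateral formed by the two triangles adjacent to the removed edge, which is an ordinary flip. Hence the two flip graphs are identical, the flip distances agree, and the reduction is trivially polynomial.

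The only step needing care is pinning down the definitions: that the hull edges are always present, and that flips are exactly one-edge exchanges. Once these are fixed, the corollary follows immediately from \cref{thm:main}.
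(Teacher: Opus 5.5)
Your proposal is correct and takes the same route as the paper, whose entire justification is the remark that on convex point sets (pointed) pseudo-triangulations coincide with triangulations, so hardness transfers directly from \cref{thm:main}; your checks that the objects, pointedness and flips all coincide, and your membership argument via the $O(n\log n)$ diameter bound, make explicit what the paper leaves implicit. One cosmetic point: a polynomial-time reduction should output rational coordinates, e.g.\ the points $(i,i^2)$ on a parabola instead of points on a circle, which changes nothing since only convex position matters.
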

	
	When considering crossing-free perfect matchings, replacing one edge with another will not yield another perfect matching. Instead, we remove two edges and add two edges such that we obtain another crossing-free perfect matching. In convex point sets, the flip distance between any two perfect matchings can be computed in polynomial time~\cite{matchings}. This is interesting, since perfect matchings of convex point sets are also Catalan structures. The number of matchings on a convex $2n$ point set is counted by the $n$-th Catalan number. Even though both triangulations and matchings are counted by Catalan numbers and there exist nice bijections between them, the flip distance problem is \NP-hard for one and polynomial time solvable for the other. Clearly, no bijection between the two structures  commutes with flips and the flip graphs are not isomorphic. When considering point sets in general position, the flip distance problem is \NP-hard for perfect matchings~\cite{MatchingHardness}. The same holds when considering almost perfect matchings, where a flip again removes one edge and adds another edge such that we again get an almost perfect matching. This can be interpreted as the one isolated vertex ``stealing" the edge from another vertex. In point sets in general position, the flip distance problem is \NP-complete~\cite{aichholzer2025flippingoddmatchingsgeometric}. On the contrary, for point sets in convex position the complexity of the flip distance problem remains open.
	
	When considering paths as our underlying structure, the flip distance problem is polynomial time solvable for point sets in convex position~\cite{aichholzer2025linear}. For point sets in general position, the complexity of the flip distance computation is still open. We would, however, be very surprised if the problem is not \NP-hard.
	
	All results on flip distance problems for the reconfiguration of plane graphs that appeared before 2026 followed the pattern that they were easy to solve for convex point sets and \NP-hard for point sets in general position. This pattern was broken when \NP-completeness for the flip distance problem of non-crossing spanning trees was proven, already in the case of convex point sets~\cite{Trees2026hards}.
	
	\subparagraph{The Associahedron and its Generalizations.} We refer to~\cite{ceballos2015many} for a survey on the associahedron. The associahedron $K_n$ is an $(n-2)$-dimensional convex polytope discovered by Stasheff~\cite{stasheff1963homotopy}. The $1$-skeleton of the associahedron is exactly the flip graph of triangulations of a convex polygon with $n+1$ vertices.
	
	The associahedron admits many generalization, for some of which, separate \NP-hardness proofs are known for the respective distance problems. Those include graph associahedra and polymatroids~\cite{ito2023hardness}. These hardness results now also follow as a corollary of \cref{thm:main}.
	Further, the combinatorial distance problem is \NP-hard, and even \APX-hard, for polymatroids that are described by linarly many inequalities~\cite{cardinal2023shortest}.
	
	Conversely, there exists an FPT-algorithm for the rotation distance in graph associahedra~\cite{cunha2025computing} and a polynomial time algorithm for the special case of stellohedra~\cite{CARDINAL2024103877}.
	
	For some other generalizations of the associahedron, NP-hardness of computing the distance of two elements when walking along the 1-skeleton follows as a corollary of our result.
	
	\begin{cor}
		Given two elements that each describe a vertex of the following polytope, computing the distance of the two elements in the $1$-skeleton of the polytope is \NP-hard:
		\begin{itemize}
			\setlength\itemsep{0em}
			\item Generalized associahedra and cluster algebras~\cite{MR1941227,MR1887642}.
			\item Brick polytopes~\cite{MR2820759,MR3327085}.
			\item Generalized permutahedra~\cite{MR2487491}.
			\item Quotientopes ~\cite{MR3964495}.
			\item The Mutliassociahedron~\cite{MR3221537,MR2167478}.
			\item Operahedra~\cite{LAPLANTEANFOSSI2022108494}
		\end{itemize}
	\end{cor}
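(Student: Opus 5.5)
The corollary asks us to show that computing distances in the $1$-skeleton of each listed polytope family is \NP-hard. The plan is a reduction from \cref{thm:main} in which the input size stays polynomial. For every family I would show that some polynomially describable instance has, as its $1$-skeleton, the flip graph of triangulations of a convex $(n+2)$-gon, that is, the ordinary associahedron. Alternatively, the ordinary associahedron's graph should sit inside the family's $1$-skeleton as a distance-preserving subgraph. Since the associahedron's graph is the rotation graph of binary trees by \cref{lem:binary}, an algorithm for the polytope distance would compute flip distances. Hardness then transfers, provided a triangulation (or binary tree) can be converted in polynomial time into the encoding of the corresponding polytope vertex.

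I would go through the list as follows.

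\begin{itemize}
\item \emph{Generalized associahedra and cluster algebras.} The cluster complex of type $A_{n}$ is exactly the triangulation complex of an $(n+3)$-gon. Clusters correspond to triangulations, and mutations correspond to flips, so the type $A$ case is literally our problem.
\item \emph{Brick polytopes.} Pilaud and Santos realise the associahedron as the brick polytope of a suitable sorting network, or equivalently of the subword complex for $c^{w_0}$ in type $A$. Its vertices are pseudoline arrangements in bijection with triangulations, and flips coincide.
\item \emph{Generalized permutahedra.} Loday's associahedron is a generalized permutahedron given explicitly by $O(n^2)$ inequalities. Its vertices are computable from binary trees in polynomial time.
\item \emph{Quotientopes.} The Tamari lattice is a lattice quotient of the weak order (the sylvester congruence), and its quotientope is an associahedron.
\item \emph{Multiassociahedra.} For $k=1$, $k$-triangulations are ordinary triangulations.
\item \emph{Operahedra.} The operahedron of a path-shaped (linear) tree recovers the associahedron.
\end{itemize}

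In each case I would check two things. First, the polytope's edges are exactly flips, so that the distances agree. Second, the vertex description (cluster, pseudoline arrangement, coordinate vector, permutation class, etc.) is computable from a triangulation in time polynomial in $n$. The instance is then an honest polynomial-size encoding of the \NP-hard instance from \cref{thm:main}.

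The main obstacle is that for some families it is not completely clear that the associahedron is an instance of the family rather than merely a related object. The subtlest cases are quotientopes, where one must confirm that the sylvester congruence yields exactly the associahedron's graph, and multiassociahedra, where polytopality in general is open. For multiassociahedra I would restrict to $k=1$ and argue only about the graph: the flip graph of $k$-triangulations for $k=1$ is the associahedron's graph, so the distance question is already hard there. A secondary subtlety is the notion of input: the statement takes ``elements that each describe a vertex'', so I would fix for each family the natural combinatorial encoding and note that hardness holds for that encoding. This yields only \NP-hardness rather than membership, matching the statement.
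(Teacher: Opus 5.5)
Your proposal is correct and takes the same route as the paper. The paper justifies the corollary in one sentence: each listed family contains the ordinary associahedron (whose graph is the flip graph, equivalently the rotation graph by \cref{lem:binary}) as a special case, so hardness is inherited from \cref{thm:main}. Your case-by-case identification of the specializing instance (type $A$ clusters, the suitable brick/subword complex, Loday's realization, the sylvester congruence, $k=1$, linear trees) and your remark that vertex encodings must be computable from triangulations in polynomial time make explicit what the paper leaves implicit.
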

	
	Further, there are dual graphs of spherical subword complexes~\cite{MR3144391,MR2047852} which are conjectured to be realizable as polytopes, but no such realization is known. Computing the distance between two of their vertices is also \NP-hard, as they generalize the associahedron.
	
	Moreover, there are polyhedra whose $1$-skeleton contains the flip graph of triangulations of convex polygons as a subgraph such as cosmohedra~\cite{Arkani_Hamed_2025} or permuto-associahedra~\cite{reiner1994coxeter}. However, it is not trivially clear that any shortest path between two vertices in this subgraph will only walk along vertices within the subgraph. There might be shorter paths that leave and re-enter the subgraph. Therefore, the complexity of computation of shortest combinatorial paths remains an open question for these polytopes. To prove \NP-hardness it would be sufficient to prove that shortest paths between vertices in the subgraph that is isomorphic to the flip graph of triangulations of convex polygons never leave this subgraph. Analogs to this property are known as strong convexity or non-leaving face property and have been proven for similar structures~\cite{ceballos2016diameter,williams2017w}.
	
	\subparagraph{Discrete Optimization.} In 2022, De Loera, Kafer, and Sanità~\cite{de2022pivot} proved that computing shortest monotone paths in polytopes is \NP-hard. As a consequence, any efficiently computable pivot rule for the simplex method cannot be guaranteed to reach an optimal solution of a linear program in the minimum number of non-degenerate pivots, unless \P=\NP. The polytopes that appear in their proof are degenerate, that is, their vertex-edge graph and their feasible basis exchange graph do not coincide. The authors of~\cite{de2022pivot} therefore motivate the question about the computation of shortest paths and shortest monotone paths in non-degenerate, or \emph{simple}, polytopes. The associahedron is such a simple polytope (for example see~\cite{devadoss2009realization}) and it can be described using $\Theta(n^2)$ linear equations. As a corollary of~\cref{thm:main} we obtain.
	
	\begin{cor}
		Given a simple polytope $\mathcal{P}$, described by linear inequalities, two vertices $x,y$ of~$\mathcal{P}$, and an integer $k$. Deciding whether $x$ and $y$ have a distance of at most $k$ in the graph of $\mathcal{P}$ is \NP-hard.
	\end{cor}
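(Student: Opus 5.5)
The corollary will follow by a polynomial-time reduction from the flip distance problem of \cref{thm:main} to shortest paths on the graph of a simple polytope given by linear inequalities.

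\textbf{Step 1: the polytope.} Given triangulations $T,T'$ of a convex $n$-gon and an integer $k$, take $\mathcal{P}$ to be a realization of the associahedron whose graph is the flip graph of triangulations of the $n$-gon. Concretely, I would use the Shnider--Sternberg/Loday realization. In the triangulation language it can be written as follows. Fix weights, for instance $c_{ij} = (j-i)^2$ or another strictly convex function of the diagonal. The polytope is the intersection of the hyperplane $\sum_i x_i = \binom{n-1}{2}$ (suitably normalized) with one halfspace $\sum_{i\le \ell\le j} x_\ell \ge c_{ij}$ for each diagonal $\{i,j\}$. This gives $\Theta(n^2)$ inequalities with integer coefficients of polynomial bit size.

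The facets correspond to diagonals, and a vertex is the intersection of the facets of the $n-3$ diagonals of a triangulation. Hence the polytope is simple: it has dimension $n-3$, and each vertex lies on exactly $n-3$ facets. Two vertices are adjacent exactly when they share $n-4$ facets, which is exactly when the triangulations differ by a flip. I will cite this standard realization (see \cite{devadoss2009realization,ceballos2015many}) rather than reprove it.

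\textbf{Step 2: the vertices.} The vertices $x$ and $y$ corresponding to $T$ and $T'$ must be computed in polynomial time. Either the explicit Loday formula applies (coordinates are products of subtree sizes in the associated binary tree, by \cref{lem:binary}), or I solve the $(n-3)\times(n-3)$ linear system given by the tight facets of the diagonals of $T$, resp.\ $T'$, by Gaussian elimination. Both options produce rational vertices of polynomial encoding length.

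\textbf{Step 3: correctness.} Since the graph of $\mathcal{P}$ is isomorphic to the flip graph via a map sending $T\mapsto x$ and $T'\mapsto y$, the distance between $x$ and $y$ equals the flip distance of $T$ and $T'$. Thus deciding whether this distance is at most $k$ answers the instance of \cref{thm:main}, and \NP-hardness follows.

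\textbf{Main obstacle.} The only real work is ensuring the realization used is given explicitly by polynomially many inequalities of polynomial bit size. Its graph must also coincide exactly with the flip graph, and simplicity must hold in the non-degenerate sense the corollary concerns. I would handle this by citing a concrete realization (Loday's) and checking directly that each vertex is tight on exactly the $n-3$ facets of its diagonals. That check is what gives simplicity.
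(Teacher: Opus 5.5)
Your proposal is correct and takes essentially the same route as the paper. The paper simply notes that the associahedron is a simple polytope described by $\Theta(n^2)$ linear inequalities whose graph is the flip graph, so \cref{thm:main} transfers directly. You also fill in details the paper leaves implicit: an explicit Loday-type realization, polynomial-time computation of the vertices $x,y$, and simplicity via each vertex lying on exactly $n-3$ facets. Two small points to tidy up: make sure your interval indexing and right-hand sides agree with the cited realization, and eliminate the equation $\sum_i x_i=\mathrm{const}$ by projecting out one coordinate if you want non-degeneracy in the LP sense.
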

	
	In parallel to our work, Black and Steiner~\cite{black2026finding} also proved that finding shortest paths in simple polytopes is \NP-hard by a clever construction of a polytope that is associated to the \textsc{Partition with even sum} problem. Moreover, from their proof it also follows that computing shortest monotone paths in simple polytopes is \NP-hard. Consequently, for LP's for simple polytopes there is no efficiently computable pivot rule for the simplex method that can always reach an optimal solution in the minimum number of total pivots unless \P=\NP.
	
	\subparagraph{The Tamari Lattice and its generalizations.} The Tamari Lattice has been introduced by Tamari~\cite{tamari1962algebra}. In general, a lattice is a partially ordered set with a \emph{join} relation and a \emph{meet} relation. For every pair of elements $x$ and $y$ in the lattice there exists a join $x\lor y$ that is the unique least upper bound of $x$ and $y$. There further exists a meet $x\land y$ that is the unique greatest common lower bound of $x$ and $y$. For the Tamari Lattice, this cover relation has first been described in terms of bracketings, defined by $(((ab)c)d) < ((ab)(cd)) < (a(b(cd)))$ and $ (((ab)c)d) < ((a(bc))d) < (a((bc)d)) < (a(b(cd)))$. Bracketings are another well known strucure that is counted by the Catalan numbers. There exists a bijection between bracketings and triangulations such that the cover relation translates to triangulations. One triangulation covers another triangulation if they differ by a single flip and the edge that is different is ''steeper" with respect to some reference line in the covering triangulation compared to the covered triangulation.
	
	Many generalizations of the Tamari Lattice have been introduced in the past. Some of them are nicely surveyed in~\cite{vonbell2025framinglatticesflowpolytopes}. We conclude that finding shortest paths that walk along cover relations in any such generalization is \NP-hard.
	
	\begin{cor}
		Given two elements of a lattice, computing the shortest path along cover relations is \NP-hard in the following lattices:
		\begin{itemize}
			\setlength\itemsep{0em}
			\item The Tamari lattice~\cite{tamari1962algebra}.
			\item Framing lattices by von Bell and Ceballos~\cite{vonbell2025framinglatticesflowpolytopes}.
			\item The $\nu$-Tamari lattices of Préville-Ratelle and Viennot~\cite{preville2017enumeration}.
			\item The alt $\nu$-Tamari lattices of Ceballos and Chenevière~\cite{ceballos2023linear}.
			\item The Type A Cambrian lattice of Reading~\cite{reading2006cambrian}.
			\item The $(\epsilon,I,\overline{J})$-cambrian lattices by Pilaud~\cite{pilaud2020cambrian}.
			\item The Permutree lattice by Pilaud and Pons~\cite{pilaud2018permutrees}.
			\item The Grassmann-Tamari order of Santos, Stump, and Welker~\cite{santos2017noncrossing}.
			\item The Grid-Tamari lattices of McConville~\cite{mcconville2017lattice}.
		\end{itemize}
	\end{cor}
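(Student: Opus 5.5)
This corollary follows from \cref{thm:main} by exhibiting, for each listed lattice, a polynomial-size instance in which the Hasse diagram (the graph of cover relations, taken undirected) is isomorphic to the flip graph of triangulations of a convex polygon. Under such an isomorphism, shortest paths along cover relations are exactly shortest flip sequences, so an algorithm for the lattice problem decides flip distance.

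The first step is the Tamari lattice itself. I will use the classical fact recalled just before the statement. There is a bijection between bracketings, or binary trees, and triangulations of a convex $(n+2)$-gon under which a cover relation corresponds to a single flip, oriented by ``steepness''. Conversely, every flip between two triangulations is a cover relation in one direction or the other, because the two diagonals of the quadrilateral are comparable in steepness. So the undirected Hasse diagram of the Tamari lattice equals the flip graph, which is also the rotation graph of \cref{lem:binary}. Given triangulations $T,T'$ and an integer $k$, map them to lattice elements in polynomial time. The length of a shortest cover-relation path between these elements equals the flip distance, and \cref{thm:main} then gives \NP-hardness.

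The second step handles each generalization by specialization. For every listed family I will choose a parameter that recovers the Tamari lattice exactly, with its cover relations:
\begin{itemize}
\item for $\nu$-Tamari and alt $\nu$-Tamari, take $\nu=(NE)^n$;
\item for type A Cambrian lattices, take the linear (bipartite-free, all-same-sign) Coxeter element;
\item for $(\epsilon,I,\overline{J})$-Cambrian and permutree lattices, take the decoration that is constant ``down'' (or ``up'');
\item for Grassmann--Tamari, take $k=1$ or the appropriate degenerate case;
\item for Grid--Tamari, take a $1\times n$ grid;
\item for framing lattices, take a flow polytope on a path-like graph whose framing lattice is known to be Tamari, as noted in the cited survey.
\end{itemize}
In each case the instance must be describable in size polynomial in $n$, and the cited definitions must give the Tamari lattice as a lattice, not merely as a poset with the same elements. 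That guarantees the cover relations, and hence the Hasse diagrams, coincide. The reduction is then the composition of the Tamari reduction with the identity embedding.

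The main obstacle is the bookkeeping in the second step rather than any new idea: each specialization has to be justified from the cited source, and the input encoding of ``two elements'' has to allow Tamari elements to be written down in polynomial time. For the less standard families, notably framing lattices and Grassmann--Tamari, I would first look in the cited papers for an explicit statement that the Tamari lattice arises as a special case. If that fails, I would fall back on showing that the Tamari lattice appears as an interval of the larger lattice. Shortest cover paths between two elements of an interval $[x\land y, x\lor y]$ might leave the interval, so this route requires an additional convexity argument, and I would avoid it whenever a direct specialization is available.
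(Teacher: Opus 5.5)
Your two-step route matches the paper's. The paper gives no argument beyond noting that Tamari cover relations are exactly flips and that each listed family contains the Tamari lattice. Your caveat about shortest paths leaving an interval matches the paper's own warning about cosmohedra and permuto-associahedra. For the generalizations, however, all the content sits in your second step, and several of the specializations you name are wrong or incomplete.

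\emph{Grassmann--Tamari.} Taking $k=1$ fails: no two singletons cross, so there is a single maximal noncrossing set and the order is a point. The Tamari case is $k=2$. There the $2$-subsets of $[n]$ are the segments of an $n$-gon and the noncrossing relation is the usual one. Facets are triangulations together with the $n$ boundary segments, i.e.\ $2n-3$ elements, one more than the dimension $2n-4$ of the order polytope of $[2]\times[n-2]$.

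\emph{Grid--Tamari.} Your $1\times n$ grid looks like the same mistake. If $\binom{[n]}{k}$ corresponds to a $k\times(n-k)$ rectangle, as in Santos--Stump--Welker, then a single-row grid is the trivial case and you want the $2\times(n-2)$ rectangle. Check McConville's convention.

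\emph{Alt $\nu$-Tamari.} Here $\nu$ alone does not determine the lattice; there is also an increment vector $\delta$. With $\nu=(NE)^n$ and $\delta=0$ you get the Dyck (Stanley) lattice. For $n=3$ its Hasse diagram is a $4$-cycle with a pendant edge, not a pentagon. You must take the maximal $\delta$, which is the choice that recovers the $\nu$-Tamari lattice.

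\emph{$(\varepsilon,I,\overline{J})$-Cambrian.} A constant signature is not enough: $I$ and $J$ must be the full index sets. Only then do you get Reading's $\varepsilon$-Cambrian lattice rather than, for instance, a $\nu$-Tamari lattice for another $\nu$. In the Cambrian cases the choice of $\varepsilon$, or of Coxeter element, is actually irrelevant. Every type A Cambrian Hasse diagram is the flip graph of triangulations of a convex polygon.

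With these corrections the argument goes through.
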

	
	\section{Preliminaries: Linear Representation, Blow-ups, and Conflict Graphs}\label{sec:prelim}
	
	In this section we introduce some important tools for our proof. We point out that counterparts of these tools have been developed for the study of the diameter of the flip graph of non-crossing spanning trees~\cite{bjerkevik2024flippingnoncrossingspanningtrees}. We adapt them for the study of flip sequences between triangulations. To maintain consistency with previous work~\cite{Trees2026hards,bjerkevik2024flippingnoncrossingspanningtrees}, we adopt similar notation wherever possible.
	
	\subparagraph{Linear Representation.} A usual way to represent a convex $n$-point polygon and a triangulation of it is by placing $n$ points equidistantly on a circle, with vertices numbered in counterclockwise order. Edges are then straight lines between points. 
	 
	For our approach we use a different way to depict the triangulation, namely a \emph{linear representation} as depicted in \Cref{fig:linear}. In a linear representation, the vertices are placed on a horizontal line called the \emph{spine}. Edges of the triangulation are represented by semicircles above the spine. An exception is made for the $n-1$ edges between consecutive vertices on the spine which are represented by straight lines. This can be interpreted as cutting and unfolding the circle on which the vertices were previously placed.
	
	\begin{figure}[t]
		\centering
		\includegraphics[scale=0.65,page=1]{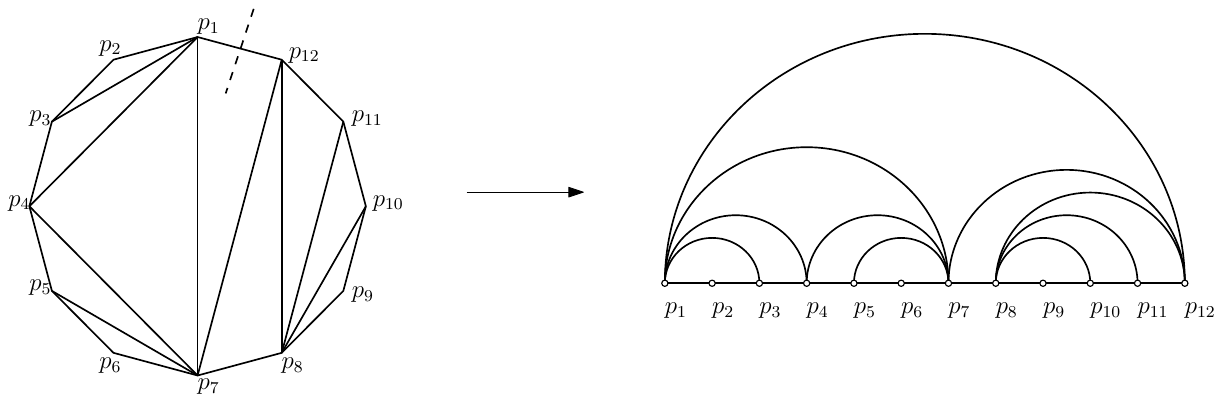}
		\caption{Left: A triangulation of a regular 12-gon. Right: The same triangulation in a linear representation.}
		\label{fig:linear}
	\end{figure}
	
	When working with flip sequences between two triangulations $T$ and $T'$, we draw $T$ on top of the spine and $T'$ below the spine, such that the two triangulations share the same points on the spine that represent the vertices of the polygon. For an illustration, see \Cref{fig:twotriangles}.
	
	The spine induces a natural ordering of its vertices by saying $p_i<p_j$ if the vertex $p_i$ lies further to the left on the spine than vertex $p_j$.
	
	\subparagraph{Blow-ups.}
	
	Let $\Gamma$ be the set of all pairs of triangles $(\Delta,\Delta')$ where $\Delta$ is a triangle in $T$ and $\Delta'$ a triangle in $T'$ such that $\Delta$ and $\Delta'$ have exactly one edge $e$ in common and this common edge is an edge that joins two consecutive spine vertices.
	
	To construct the $\beta$-\emph{blow-up} of the triangulations $T$ and $T'$, we apply the following procedure to every pair $(\Delta,\Delta') \in \Gamma$. We subdivide $e$ into $\beta+1$ edges placing $\beta$ additional vertices, one at every division point. We write $V(\Delta)$ to denote the set of newly introduced vertices at the division points of $e$ together with the original vertices of $e$. Further, we write $V^O(\Delta)$, if we refer to $V(\Delta)$, but without the vertices of $e$.
	
	Let $v$ be the vertex of $\Delta$ that is not incident to $e$. To again obtain a triangulation of the new polygon, we add $\beta$ new edges. One from every vertex in $V^O(\Delta)$ to $v$. We write $\Lambda(\Delta)$ to denote the set of all edges from $v$ to $V^O(\Delta)$. We call $\Lambda(\Delta)$ the \emph{fan} of $\Delta$.
	
	We apply the same procedure to $\Delta'$ in $T'$ and define the fan $\Lambda(\Delta')$ of $\Delta'$ analogously. For an illustration of a $2$-blow-up, see \Cref{fig:blowup}.
	
	The triangulations that result from applying a $\beta$-blow up are denoted by $\beta T$ and $\beta T'$. These triangulations then both have $n + \beta \lvert \Gamma \rvert$ vertices, where $n$ denotes the number of vertices of the triangulations $T$ and $T'$. This may, for large values of $\beta$, be significantly larger than just $n$. Further, the triangulations $\beta T$ and $\beta T'$ each have $(n-3) + \beta \lvert \Gamma \rvert$ interior edges.
	
	\begin{figure}[htb]
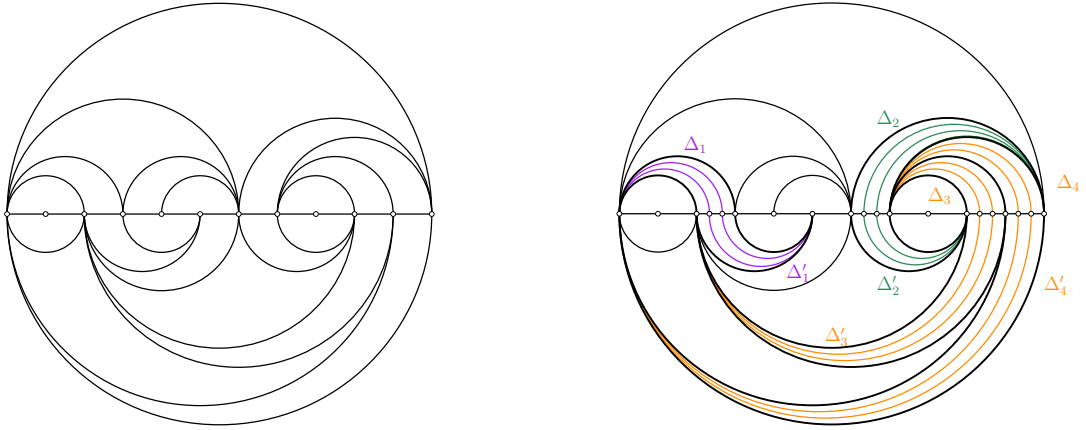

		\centering
    	\begin{subfigure}{0.48\textwidth}
    		\centering
			\includegraphics[scale=0.6,page=2]{linear_representation}
			\caption{Two triangulations in the same linear representation}
			\label{fig:twotriangles}
		\end{subfigure}
		\hfill
		\begin{subfigure}{0.48\textwidth}
			\centering
			\includegraphics[scale=0.6,page=3]{linear_representation}
			\caption{Two triangulations in a linear representation with a 2-blow-up}
			\label{fig:blowup}
		\end{subfigure}
		\caption{Further concepts involving the linear representation}
	\end{figure}
	
	\subparagraph{Conflict graphs.} We introduce notation to capture when edges of a blow-up triangulation~$\beta T$ need to be removed in a flip sequence before new edges of a blow-up triangulation $\beta T'$ can be added via flips.
	
	The \emph{conflict graph} $H=H(T,T')$ is a directed graph that has vertex set $\Gamma$ and a directed edge from a pair $(\Delta_i,\Delta'_i)$ to a pair $(\Delta_j,\Delta'_j)$ if, after applying a blow up to the triangulations, one edge in $\Lambda(\Delta_i)$ (and hence all edges in $\Lambda(\Delta_i)$) crosses an edge (and hence all edges) in $\Lambda(\Delta'_j)$. A directed edge in the conflict graph represents the fact that we cannot have edges of $\Lambda(\Delta'_j)$  in the triangulation as long as we still have edges from $\Lambda(\Delta_i)$ as part of our triangulation.
	
	If we consider the illustration from \Cref{fig:blowup}, we see that a conflict graph has a directed edge from $(\Delta_1,\Delta'_1)$ to $(\Delta_3,\Delta'_3)$. Further, there is a bi-directed edge between $(\Delta_2,\Delta'_2)$ and $(\Delta_3,\Delta'_3)$, as well as  between $(\Delta_2,\Delta'_2)$ and  $(\Delta_4,\Delta'_4)$. Finally, there is also a directed edge from  $(\Delta_4,\Delta'_4)$ to $(\Delta_3,\Delta'_3)$.
			
	\subparagraph{Above, below and crossing pairs.} For our upcoming \NP-hardness reduction, we will only use three \emph{types of triangle pairs} which are depicted in \cref{fig:pairs}. For a pair of triangles $(\Delta_i,\Delta'_i)$ We denote the triangle of~$T$ by $v_{i,1}v_{i,2}v_{i,3}$ and the triangle of~$T'$ by $v_{i,1}v'_{i,2}v_{i,3}$. Note that we use the letter $v_i$ to denote points now instead of $p_i$ in \cref{fig:linear}. We used the indices for the $p$'s to count the vertices in the order along the spine, whereas now the index for $v$'s denote which pair of triangles they belong to which may not be tied to an order along the spine. We call a triangle pair \emph{above} and color it in green if $v_{i,2}<v'_{i,2}<v_{i,1}<v_{i,3}$. A triangle pair will be colored in orange and called \emph{below} if $v_{i,1}<v_{i,3}<v_{i,2}<v'_{i,2}$. The last type of triangle pairs will be called crossing and colored in purple if $v'_{i,2}<v_{i,1}<v_{i,3}<v_{i,2}$. We remark that for all three types of triangle pairs there also exists a mirrored counterpart which can be observed in \cref{fig:blowup}. These will, however, not show up in our construction and do not get a name. Further, above, below, and crossing pairs together with their mirrored counterparts make up all possible triangle pairs.
	
	\begin{figure}[htb]
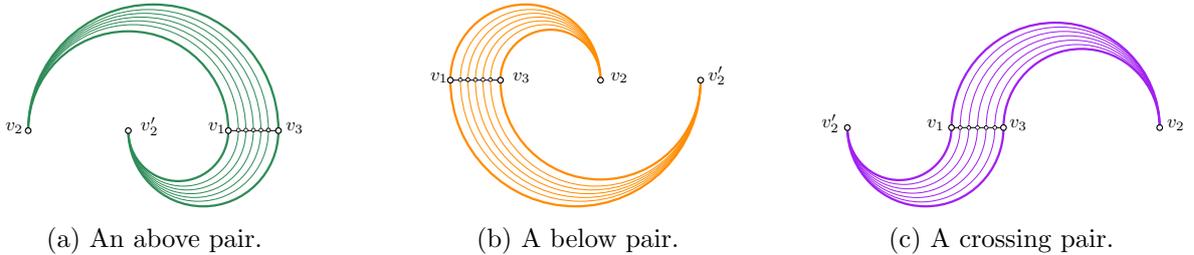

		\centering
		\begin{subfigure}{0.3\textwidth}
			\centering
			\includegraphics[scale=0.7,page=43]{Gadgets}
			\caption{An above pair.}
		\end{subfigure}
		\hfill
		\begin{subfigure}{0.3\textwidth}
			\centering
			\includegraphics[scale=0.7,page=44]{Gadgets}
			\caption{A below pair.}
		\end{subfigure}
		\hfill
		\begin{subfigure}{0.3\textwidth}
			\centering
			\includegraphics[scale=0.7,page=45]{Gadgets}
			\caption{A crossing pair.}
		\end{subfigure}
		\caption{The three types of triangle pairs.}
		\label{fig:pairs}
	\end{figure}
	
	We will now investigate what conflicts between triangle pairs of the same type look like.
	
	\begin{lemma}\label{lem:abc}
		For two triangle pairs $(\Delta_i,\Delta'_i)$ and $(\Delta_j,\Delta'_j)$ with edges $e_i$ and $e_j$ on the boundary, we observe the following conflicts: If $(\Delta_i,\Delta'_i)$ and $(\Delta_j,\Delta'_j)$ are both pairs of the same type (above, below, or crossing) and we have an edge $(\Delta_i,\Delta'_i)\to(\Delta_j,\Delta'_j)$ in the conflict graph, then the spine edge of $(\Delta_i,\Delta'_i)$, denoted by $e_i$, is further to the left than the spin edge of $(\Delta_j,\Delta'_j)$, $e_j$, in the linear representation.
	\end{lemma}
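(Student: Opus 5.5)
Throughout, write the pair $(\Delta_i,\Delta'_i)$ as in the definition of the three types: $\Delta_i=v_{i,1}v_{i,2}v_{i,3}$, $\Delta'_i=v_{i,1}v'_{i,2}v_{i,3}$, with spine edge $e_i=v_{i,1}v_{i,3}$ joining consecutive spine vertices. Distinct pairs have distinct spine edges, because a spine edge lies in exactly one triangle of $T$. The plan is to argue by contradiction: assume $e_j$ lies strictly to the left of $e_i$ and that the conflict edge $(\Delta_i,\Delta'_i)\to(\Delta_j,\Delta'_j)$ exists, and derive a contradiction separately for each of the three types.

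The tool is the standard fact about the linear representation: two semicircles cross iff their endpoints interleave along the spine. An edge of $\Lambda(\Delta_i)$ is a semicircle from the apex $v_{i,2}$ to a blow-up point $x$ in the interior of $e_i$. An edge of $\Lambda(\Delta'_j)$ goes from $v'_{j,2}$ to a point $y$ in the interior of $e_j$. By assumption $y<x$, and both $y$ and $x$ lie strictly between consecutive original spine vertices. We write out what interleaving forces in each case.

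\emph{Crossing type.} Here $v'_{j,2}<y$ and $x<v_{i,2}$. Interleaving of $[v'_{j,2},y]$ and $[x,v_{i,2}]$ would require either $v'_{j,2}<x<y<v_{i,2}$ or $x<v'_{j,2}<v_{i,2}<y$. Both contradict $y<x$, so no conflict edge exists at all.

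\emph{Above type.} Both apexes are to the left, and $y<x$. The only interleaving compatible with this is $v'_{j,2}<v_{i,2}<y<x$. Since $v_{j,2}<v'_{j,2}$, we get $v_{j,2}<v_{i,2}<v_{j,1}$, with $v_{i,2}$ strictly inside this range. Moreover both $v_{i,1}$ and $v_{i,3}$ are $\geq v_{j,3}>v_{j,1}$.

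\emph{Below type.} This is the mirror image. Interleaving forces $y<x<v'_{j,2}<v_{i,2}$, hence $v_{j,3}\le v_{i,1}<v_{i,3}<v_{j,2}<v_{i,2}$.

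To close the above and below cases, use planarity of $T$. The edge $v_{j,2}v_{j,1}$ (above case), respectively $v_{j,3}v_{j,2}$ (below case), is an edge of $T$, possibly a boundary edge. It separates the vertices strictly between its endpoints from those outside. The triangle $\Delta_i$ has one vertex on each side of this edge: $v_{i,2}$ and $e_i$ are split by it, and $v_{i,2}$ is not an endpoint. So some edge of $\Delta_i$ crosses an edge of $T$, which is impossible. Concretely, in the above case $v_{i,2}v_{i,3}$ interleaves with $v_{j,2}v_{j,1}$, and the below case is symmetric.

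The main point requiring care is the strictness bookkeeping: the cases where $v_{i,2}$ coincides with some $v_{j,\cdot}$, or where $e_i$ and $e_j$ share an endpoint. The strict inequalities coming from interleaving, together with $v_{j,2}<v'_{j,2}$ (respectively $v_{j,2}<v'_{j,2}$ in the below case), are exactly what rule these out.
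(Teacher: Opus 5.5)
Your crossing case is correct and matches the paper's argument, read contrapositively. You are also right that in the above and below cases the second interleaving must be ruled out by a planarity argument; the paper just reads off $v_{i,2}<v'_{j,2}<v_i<v_j$ and never discusses the alternative.

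\textbf{The below case fails as written.} For a below pair $v_{j,2}<v'_{j,2}$. So from $y<x<v'_{j,2}<v_{i,2}$ you only get $v_{j,3}\le v_{i,1}<v_{i,3}\le v'_{j,2}$. Nothing places $v_{j,2}$ to the right of $e_i$, so your inequality $v_{i,3}<v_{j,2}$ is unjustified. Your closing remark, which lists $v_{j,2}<v'_{j,2}$ for both cases, shows where the intended reversal was lost.

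The subcase $v_{j,2}\le v_{i,1}$ cannot be excluded using $T$ at all. Take spine positions $v_{j,1}=1$, $v_{j,3}=2$, $v_{j,2}=4$, $v_{i,1}=5$, $v_{i,3}=6$, $v'_{j,2}=8$, $v_{i,2}=10$, $v'_{i,2}=12$. Both pairs satisfy the below conditions, and the fan edges $xv_{i,2}$ and $yv'_{j,2}$ interleave. Yet $\Delta_j$ and $\Delta_i$ sit over the disjoint intervals $[1,4]$ and $[5,10]$. So no edge of $\Delta_i$ crosses $v_{j,3}v_{j,2}$, or any other edge of $T$ you could point to.

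The underlying mistake is calling the below case the ``mirror image'' of the above case. A below pair is the $180^\circ$ rotation of an above pair, and that rotation exchanges $T$ and $T'$. So the contradiction must be found in $T'$, and it is there:
\begin{itemize}
\item The chain $v_{j,1}<v_{j,3}\le v_{i,1}<v_{i,3}\le v'_{j,2}<v_{i,2}<v'_{i,2}$ gives $v_{j,1}<v_{i,1}<v'_{j,2}<v'_{i,2}$.
\item Hence the edges $v_{j,1}v'_{j,2}$ and $v_{i,1}v'_{i,2}$ of $T'$ cross.
\item Equivalently, the fan edges $yv'_{j,2}\in\Lambda(\Delta'_j)$ and $xv'_{i,2}\in\Lambda(\Delta'_i)$ interleave as $y<x<v'_{j,2}<v'_{i,2}$ in $\beta T'$.
\end{itemize}

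\textbf{A smaller slip in the above case.} From $v_{i,2}<y$ you only get $v_{i,2}\le v_{j,1}$, not strict inequality. If $v_{i,2}=v_{j,1}$, your chosen edge $v_{j,2}v_{j,1}$ shares the endpoint $v_{i,2}$ and crosses no edge of $\Delta_i$. Two fixes work:
\begin{itemize}
\item Use $v_{j,2}v_{j,3}$ instead, since $v_{j,2}<v_{i,2}<v_{j,3}<v_{i,3}$ holds strictly.
\item More cleanly, use fan edges: $v_{j,2}y\in\Lambda(\Delta_j)$ and $v_{i,2}x\in\Lambda(\Delta_i)$ interleave strictly as $v_{j,2}<v_{i,2}<y<x$, contradicting planarity of $\beta T$. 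This is the exact counterpart of the $\beta T'$ argument for below pairs.
\end{itemize}

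With these two repairs your proof is complete.
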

	
	\begin{proof}
		We denote $\Delta_i = v_{i,1}v_{i,2}v_{i,3}$, $\Delta'_i = v_{i,1}v'_{i,2}v_{i,3}$, $\Delta_j = v_{j,1}v_{j,2}v_{j,3}$, and $\Delta'_j = v_{j,1}v'_{j,2}v_{j,3}$.
		
		First assume, both pairs are above pairs. Then we study edges of the form $v_{i,2}v_i$ with $v_{i,1}<v_i<v_{i,3}$ that cross edges of the form $v'_{j,2}v_j$ with $v_{j,1}<v_j<v_{j,3}$. For this we need $v_{i,2} < v'_{j,2} < v_i < v_j$ and consequently $v_{i,3}<v_{j,1}$ hence $e_i$ is further to the left than $e_j$.
		
		Second, let both pairs be below pairs. The statement follows analogously as for above pairs if we rotate the linear representation by 180 degrees.
		
		Finally, consider two crossing pairs. We require that the edges $v'_{j,2}v_j$ with $v_{j,1}<v_j<v_{j,3}$ are crossed by edges $v_iv_{i,2}$ with $v_{i_1} < v_i<v_{i,3}$. This can happen in two forms. Either $v'_{j,2} < v_i < v_j < v_{i,2}$ or $v_i < v'_{j,2} < v_{i,2} < v_j$. In both cases we conclude that $v_i<v_j$ and therefore $e_i$ is further to the left than $e_j$.
	\end{proof}
	
	\cref{lem:abc} will be helpful when we try to prove acyclicity of subsets of the conflict graph. We will use it in the following form.
	
	\begin{lemma}\label{lem:acyclic}
		Given a subset $S$ of the vertices of the conflict graph $H=H(T,T')$ that has the following two properties.
		\begin{itemize}
						\setlength\itemsep{0em}
			\item Every vertex of $S$ represents a pair of triangles of one of the three types (above, below, crossing) that we introduced in this section.
			\item Every edge in $S$ are either between two pairs of the same type, or they are directed from an above pair to a crossing pair, from an above pair to a below pair, or from a crossing pair to a below pair.
		\end{itemize}
		Then $S$ forms an acyclic subset.
	\end{lemma}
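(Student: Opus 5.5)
We will show that the induced subgraph $H[S]$ has no directed cycle by exhibiting a potential function that strictly increases along every edge of $H[S]$. The plan is to combine a ranking of the three types with the spine position given by \cref{lem:abc}.

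To each vertex $(\Delta_i,\Delta'_i)\in S$ we assign a type rank $r_i\in\{0,1,2\}$: above pairs get $0$, crossing pairs get $1$, below pairs get $2$. We also assign a position $x_i$, namely the position along the spine of the spine edge $e_i$. Spine edges of distinct pairs in $\Gamma$ are distinct, since each pair is determined by its common boundary edge: $\Delta$ and $\Delta'$ are the unique triangles of $T$ and $T'$ containing $e$. Hence the positions $x_i$ are pairwise distinct and totally ordered from left to right. We then order $S$ lexicographically by $(r_i,x_i)$.

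Now take an arbitrary edge $(\Delta_i,\Delta'_i)\to(\Delta_j,\Delta'_j)$ of $H$ with both endpoints in $S$. By the second hypothesis there are two cases.

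\begin{itemize}
\item If both pairs have the same type, then $r_i=r_j$, and \cref{lem:abc} gives that $e_i$ lies to the left of $e_j$, so $x_i<x_j$.
\item Otherwise the edge goes from above to crossing, above to below, or crossing to below. In each of these cases $r_i<r_j$.
\end{itemize}

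In both cases $(r_i,x_i)<(r_j,x_j)$ strictly in lexicographic order.

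Suppose $H[S]$ contained a directed cycle $u_1\to u_2\to\dots\to u_m\to u_1$. Following the cycle, the lexicographic key would strictly increase at every step and return to its starting value, giving $(r_{u_1},x_{u_1})<(r_{u_1},x_{u_1})$, a contradiction. Hence $S$ is acyclic. Equivalently, sorting $S$ by the key yields a topological order.

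The only real content is \cref{lem:abc}, which is assumed. The one point needing care is that the hypothesis "every edge in $S$" refers to edges of $H$ with both endpoints in $S$, i.e. the induced subgraph. We also need the remark that spine edges are distinct, so that the order on positions is strict; this only matters for well-definedness, since \cref{lem:abc} already yields strict inequality. Note that the case of a single vertex with a self-loop is excluded: a pair cannot conflict with itself, since $\Lambda(\Delta_i)$ and $\Lambda(\Delta'_i)$ share endpoints within $V(\Delta_i)$ and lie on opposite sides of the spine edge, so they do not cross. Alternatively, this case is covered by the same-type case of \cref{lem:abc}, which would require $e_i$ to lie strictly to the left of itself.
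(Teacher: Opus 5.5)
Your main argument is correct and is essentially the paper's proof. The paper excludes single-type cycles via \cref{lem:abc} and mixed-type cycles via the precedence above $\to$ crossing $\to$ below; your lexicographic key (type rank, spine position) packages both into one topological order. This is exactly the order the paper names when it invokes this lemma in the proof of the reduction.

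Your closing remark on self-loops, however, is wrong on both counts and should be replaced.

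First, the two fans of one pair \emph{do} cross as soon as $\beta\ge 2$. For an above pair and $u<w$ in $V^O(\Delta_i)$ we have $v_{i,2}<v'_{i,2}<u<w$, so $v_{i,2}u$ and $v'_{i,2}w$ have alternating endpoints. Drawing $T$ above and $T'$ below the spine is only a convention; crossing means alternation along the boundary. Below and crossing pairs behave the same way.

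Second, \cref{lem:abc} does not cover $i=j$. Its proof passes from $v_i<v_j$ to ``$e_i$ lies left of $e_j$'', which uses that $v_i$ and $v_j$ sit on different subdivided spine edges. For $i=j$ its hypothesis holds while its conclusion fails.

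The right observation is that $H$ only has edges between distinct pairs. This is implicit in the definition: the parenthetical ``and hence all edges'' is false for $i=j$, since fan edges sharing an endpoint in $V^O(\Delta_i)$ do not cross. It is also explicit in \cref{lem:directededge}. If loops were counted, every vertex would carry one and the lemma would fail for every nonempty $S$. With that convention stated, your proof is complete.
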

	
	\begin{proof}
		By \cref{lem:abc}, $S$ does not induce a directed cycle that consist entirely of pairs of triangles of a single type. By the second condition of the lemma, we have no directed paths from crossing pairs to above pairs or from below pairs to crossing or above pairs. In particular, we have no directed cycles that contain pairs of triangles of different types in $S$. We conclude that $S$ has no cycles at all.
	\end{proof}
	
	\cref{lem:acyclic} is the crucial tool when arguing that a given set $S$ is acyclic. Instead of making big tedious checks to ensure acyclicity, we will only have to check that $S$ fulfills the second condition of the lemma. As mentioned, for our reduction we will only use pairs of triangles that are of the three types, above, below, and crossing. Therefore, we get the first condition for free.
	
	\section{Outline of the Proof}\label{sec:puttogether}
	
	Our proof of \cref{thm:main} goes via an intermediate problem. First, we will proof in \cref{sec:hard} that computing maximum acyclic subsets in the conflict graph of two triangulations with a given linear representation is \NP-hard. The proof reduces from a variant of \textsc{planar Max-2SAT}.
	
	\begin{restatable}{proposition}{achard}\label{prop:achard}
		Given  triangulations $T$ and $T'$ with a linear representation and corresponding conflict graph $H$, as well as an integer $k$, it is \NP-complete to decide whether $H$ has an acyclic subset of size at least $k$.
	\end{restatable}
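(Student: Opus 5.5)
Membership in \NP{} is immediate: a certificate is a subset $S\subseteq\Gamma$ with $|S|\ge k$. The conflict graph $H$ can be computed in polynomial time from $T$ and $T'$, because for each pair $(\Delta_i,\Delta'_i),(\Delta_j,\Delta'_j)\in\Gamma$ we only need to test whether a single representative edge of $\Lambda(\Delta_i)$ crosses one of $\Lambda(\Delta'_j)$. That is an order comparison of four points on the spine. Acyclicity of $H[S]$ is then checked by topological sorting. The work is in hardness, which I would prove by reduction from a planar variant of \textsc{Max-2SAT}. The natural candidate is \textsc{planar Max-2SAT} in which the variable--clause incidence graph is planar and, ideally, admits a layout with all variables on a line and clauses nested above or below it, as in planar monotone rectilinear 3SAT. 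The instance asks whether at least $m$ clauses can be satisfied, and the plan is to output a target value $k=k_0+m$. Here $k_0$ is a fixed count contributed by variable gadgets regardless of the assignment.

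For the construction, each variable $x$ becomes a \emph{variable gadget}: a bidirected structure of triangle pairs whose maximum acyclic subsets come in exactly two optimal ``states'', encoding $x=$ true or false. A long alternating chain of above/crossing pairs with conflicts in both directions between consecutive pairs should work. There an acyclic subset can keep all pairs of one parity, so if the chain has length $2L$ the optimum is $L$. Any deviation from the two parity classes loses at least one element. Each clause $(\ell_1\lor\ell_2)$ becomes a single \emph{clause pair}, for instance a below pair, placed so that it participates in a directed cycle with the chain pairs encoding the falsifying state of $\ell_1$ and, separately, with those of $\ell_2$. The clause pair can then be added to an acyclic subset exactly when one of its literals is set to satisfy it. Long chains, of length polynomial in the number of clauses, make it never profitable to break a variable's parity to rescue a clause. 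The chains also let signals be carried along the spine, where the planar layout guides the nesting of the semicircles.

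The correctness argument has two directions. Given an assignment satisfying $m$ clauses, take the corresponding parity classes plus the satisfied clause pairs. Acyclicity should then be certified by \cref{lem:acyclic}: every conflict among chosen pairs goes between pairs of the same type or in the allowed direction above$\to$crossing$\to$below. For the converse, take a maximum acyclic subset and normalize it. Each variable chain can be exchanged for a pure parity class without decreasing the size, because each clause touches the chain only at boundedly many pairs while a non-parity choice loses at least one element per disturbance. After this exchange the chosen clause pairs correspond to satisfied clauses.

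The main obstacle is geometric realizability. Conflicts are not arbitrary: they are dictated by crossing patterns of fans of semicircles on one spine, and by \cref{lem:abc} same-type conflicts always point rightward. So every cycle must mix types, for example above$\to$below and then back via crossing pairs or mirrored interactions. I must also ensure no \emph{unintended} conflicts arise between gadgets far apart along the spine. My first attempt would be to give each gadget its own interval of the spine, so that its fans stay local. Connections would then be routed by long edges whose nesting follows the planar embedding, since planarity is exactly what prevents unwanted crossings. Afterwards I would verify, case by case, the crossing conditions for each of the constantly many local configurations.
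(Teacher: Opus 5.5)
Your membership argument and overall strategy agree with the paper: reduce from a planar Max-2SAT variant with the variables on a line and positive and negative clauses on opposite sides, then certify acyclicity with \cref{lem:acyclic}. But both of your gadgets fail as described. The clause gadget computes the wrong function. If the single clause pair closes a directed cycle with the falsifying pairs of $\ell_1$ and, \emph{separately}, with those of $\ell_2$, it is blocked as soon as \emph{either} literal is false. So it can be added only when \emph{both} literals are true. You have encoded $\ell_1\wedge\ell_2$, and the forward direction breaks for every clause satisfied by exactly one literal. (Reinterpreting the source problem does not rescue this. In the monotone layout you have in mind, the number of satisfied clauses $x_i\wedge x_j$ or $\bar x_i\wedge\bar x_j$ is a supermodular function of the set of true variables, so maximizing it is polynomial.)

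The missing idea is to use \emph{two} clause pairs with a bidirected conflict between them, each forming a 2-cycle with the falsifying pairs of only one literal. Then at most one of them fits into an acyclic set, and one fits exactly when some literal is true. This is the paper's gadget: a crossing pair $C_1$ and an above pair $C_2$ for a positive clause, and a below pair $\bar C_1$ and a crossing pair $\bar C_2$ for a negative clause. Even then \cref{lem:acyclic} does not apply directly. Clauses sharing a variable create forbidden-direction conflicts: $C'_1\to C_2$ (crossing to above) and $\bar C_1\to\bar C'_2$ (below to crossing). The paper keeps such pairs out of $S$ together by the tie-breaking rule ``prefer $C_2$, resp.\ $\bar C_1$'', which your plan lacks.

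The variable gadget fails as well. An open chain of $2L$ pairs with bidirected conflicts between consecutive pairs has $L+1$ maximum acyclic subsets, not two: the parity can switch once, by skipping two adjacent pairs, at no cost. Hence ``any deviation from the two parity classes loses at least one element'' is false. Clauses attached at different places of one chain can then read different truth values, and your normalization step in the converse direction collapses. Closing the chain into a cycle only makes an inconsistency cost a constant; in either case the price never grows with the length, so ``long chains'' buy nothing.

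The paper avoids chains entirely. $x_i$ is a single above pair and $\bar x_i$ a single below pair forming a 2-cycle, and each is subdivided into $\lvert C\rvert+1$ copies so that every copy of $x_i$ conflicts in both directions with every copy of $\bar x_i$. An acyclic set then takes copies from at most one side. Taking neither loses $\lvert C\rvert+1$, which exceeds what all clause pairs together can contribute, so the converse becomes a short counting argument.

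Finally, what you postpone as ``the main obstacle'' is most of the actual proof. It consists of the concrete placement: clause gadgets nested according to the aligned drawing, and gadgets glued along the spine without gaps so that $\Gamma$ consists only of gadget pairs. It also requires the exhaustive check of all resulting conflicts, including those between positive and negative clauses at a shared variable (Lemmas~\ref{obs:conflict}--\ref{obs:conflict3} and Observation~\ref{obs:conflicts4}).
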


	Then, to show our main results, we will show that, if we apply $\beta$-blow-ups for large values of $\beta$, then the flip distance between the two triangulations $\beta T$ and $\beta T'$ depends, up to small deviations, only on the size of the largest acyclic subset of the conflict graph $H=H(T,T')$. We will denote the size of the largest acyclic subset of $H$ by $ac(H)$.

	In \cref{sec:flipsequence}, we will derive a conflict-graph-based upper bound on the flip distance of $\beta T$ and~$\beta T'$. Let $S$ be an acyclic subset of size $ac(H)$. Our upper bound is constructive and produces a flip sequence that treats triangle pairs in $S$ differently from triangle pairs that are not in $S$. If $(\Delta,\Delta')$ is not part of $S$, then we will invest $2\beta$ flips, one flip for every edge in $\Lambda(\Delta)$ and one flip for every edge in $\Lambda(\Delta')$.	After applying those flips, we end up with a set of $\beta$ edges that are contained in both triangulations. These flips are meant for getting edges out of the way to create room for the next sequence of flips.
	
	For every pair $(\Delta,\Delta')$ in $S$, we will invest a small number of set-up flips where the number only depends on the size of $T$ and $T'$, but not on $\beta$. These flips allow us to then insert all the edges of $\Lambda(\Delta')$ using one flip per edge. Combining everything gives an upper bound on the flip distance.
	
	\begin{restatable}{proposition}{upper}\label{prop:upper}
		Given two triangulations $T$ and $T'$ with a linear representation. Let $H=H(T,T')$ be the conflict graph of $T$ and $T'$ and $ac(H)$ the size of its largest acyclic subset. The flip distance between $\beta T$ and $\beta T'$ is at most $\beta (2\lvert \Gamma \rvert - ac(H)) + n(2n-5)$.
	\end{restatable}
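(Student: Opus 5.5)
We give an explicit flip sequence from $\beta T$ to $\beta T'$. Fix a maximum acyclic subset $S\subseteq\Gamma$ with $\lvert S\rvert=ac(H)$ and a topological order $(\Delta_1,\Delta'_1),\dots,(\Delta_s,\Delta'_s)$ of $S$ along the edges of $H$.

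The sequence has three phases, each bounded separately.

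\textbf{Phase 1: clear the pairs outside $S$.} For every pair $(\Delta,\Delta')\notin S$, flip away all $\beta$ edges of $\Lambda(\Delta)$. Each new vertex $u\in V^O(\Delta)$ then becomes an ear of degree $2$. Flipping the fan edges one by one from one end of $e$ to the other does this. Concretely, each flip replaces $vu$ by an edge from the neighbouring subdivision vertex to a vertex of the adjacent triangle. The result is that the subdivided segment is triangulated "locally", e.g.\ with a fan from an endpoint of $e$. Since $\Lambda(\Delta')$ triangulates the same region on the other side, we can arrange that this local triangulation of the segment is one that is also reachable toward $\beta T'$. At the very end we spend at most $\beta$ further flips to turn it into $\Lambda(\Delta')$. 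Hence each pair outside $S$ costs at most $2\beta$ flips in total: $\beta$ here and $\beta$ at the end. The point is that afterwards these subdivision vertices behave like a single spine edge. The configuration is then essentially $T$ with the blow-ups of pairs outside $S$ collapsed, plus fans for pairs in $S$.

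\textbf{Phase 2: insert the fans of $S$.} Process the pairs of $S$ in topological order. When we reach $(\Delta_i,\Delta'_i)$, every pair with an $H$-edge into it is either outside $S$, so its $\Lambda(\Delta)$ is already gone, or earlier in the order, so its fan $\Lambda(\Delta_j)$ has already been converted. Consequently no remaining fan edge crosses $\Lambda(\Delta'_i)$. Now perform set-up flips only among the $O(n)$ non-fan edges of the current triangulation, so that the triangle $\Delta'_i=v_{i,1}v'_{i,2}v_{i,3}$ becomes present except for its subdivided side. For instance, first make $v_{i,1}v'_{i,2}$ and $v_{i,3}v'_{i,2}$ edges by flipping the at most $2n$ obstructing original edges; none of these is a fan edge, by the conflict-freeness just established. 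Next, convert the fan $\Lambda(\Delta_i)$ from apex $v_{i,2}$ into the fan $\Lambda(\Delta'_i)$ from apex $v'_{i,2}$, one flip per edge. Sweeping along $e_i$ and flipping each $v_{i,2}u$ to $v'_{i,2}u'$ inside the quadrilateral formed with the neighbouring subdivision vertex should give this, possibly with a constant overhead. The cost per pair is therefore $\beta$ plus at most $2n-5$ set-up flips.

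\textbf{Phase 3: finish.} Flip the remaining $O(n)$ original edges into $T'$ and complete the fans of the pairs outside $S$ ($\beta$ each, already counted). Summing gives
\[
2\beta(\lvert\Gamma\rvert-ac(H))+\beta\, ac(H)+\lvert S\rvert(2n-5)+O(n),
\]
and since $\lvert S\rvert\le n$, careful counting fits this within $n(2n-5)$.

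\textbf{Expected main obstacle.} The hardest part is the set-up step of Phase 2. We must show that, with at most about $2n-5$ flips of non-fan edges, we can reach a state where $\Lambda(\Delta_i)$ converts to $\Lambda(\Delta'_i)$ at exactly one flip per edge, without disturbing fans processed earlier. This is where the absence of in-edges from unprocessed $\Lambda(\Delta_j)$ is used. My first attempt would be to contract each untouched fan and each collapsed segment to a single vertex or edge. This reduces the situation to an ordinary $n$-gon triangulation, where any two triangulations are within $2n-5$ flips of each other. In particular, the standard flip path inside that contracted polygon can realize the required triangle $\Delta'_i$ with the apex on the correct side.
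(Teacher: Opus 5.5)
Your overall architecture is the paper's. Pairs outside $S$ cost $2\beta$ flips each, by flipping both $\Lambda(\Delta)$ and $\Lambda(\Delta')$ to a common local fan. Pairs of $S$ are processed in topological order with $O(n)$ set-up flips, followed by one flip per fan edge. The gap is the set-up step of Phase~2, which cannot be carried out as stated. You want $v_{i,1}v'_{i,2}$ and $v_{i,3}v'_{i,2}$ to be edges, i.e.\ you want to cut off the region of $\Delta'_i$, while $\Lambda(\Delta_i)$ is still present. But each edge $v_{i,2}u$ with $u\in V^O(\Delta_i)$ joins a vertex outside that region to a vertex on its boundary, so it must cross one of the two sides. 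For above, below and crossing pairs it is $v'_{i,2}v_{i,3}$ that crosses every edge of $\Lambda(\Delta_i)$, and also $v_{i,1}v_{i,2}$; for an above pair, $v_{i,2}<v'_{i,2}<u<v_{i,3}$ alternate. So your claim that none of the obstructing edges is a fan edge is false. The configuration you aim for cannot coexist with $\Lambda(\Delta_i)$, so it is not reachable by $O(n)$ flips of non-fan edges, and the conversion that should follow has no valid starting point. Your contraction heuristic has the same defect: in the contracted $n$-gon, producing $\Delta'_i$ requires flipping a side of $\Delta_i$, and that flip does not lift to $O(1)$ flips in the blow-up.

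The missing idea is to use the supporting edges $v_{i,1}v'_{i,2}$ and $v_{i,2}v'_{i,2}$ instead. The region to isolate is then the convex polygon $\bar P$ bounded by $v_{i,1}v'_{i,2}$, $v'_{i,2}v_{i,2}$, $v_{i,2}v_{i,3}$ and the subdivided spine edge, which contains both fans. Inside $\bar P$ one can always flip so as to raise the degree of $v'_{i,2}$ by one. Hence at most $\beta+1$ flips turn $\Lambda(\Delta_i)\cup\{v_{i,1}v_{i,2}\}$ into $\Lambda(\Delta'_i)\cup\{v'_{i,2}v_{i,3}\}$. This is your sweep, now legitimate; its first flip $v_{i,1}v_{i,2}\to v'_{i,2}u_1$ (with $u_1$ the subdivision vertex next to $v_{i,1}$) exists only because $v_{i,2}v'_{i,2}$ is present.

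Two points then still need proof. First, no retained edge may cross a supporting edge. This does not follow from ``no remaining fan crosses $\Lambda(\Delta'_i)$'' alone. It follows from the case check (\cref{lem:support}) that an edge crossing a supporting edge but no edge of $\Lambda(\Delta_i)$ must cross all of $\Lambda(\Delta'_i)$; the source property and planarity of $\beta T'$ then exclude such an edge. Second, the set-up must not flip retained fan edges. Instead of contracting, observe that the current triangulation and a completion of the retained edges plus the supporting edges share all but at most $n-3$ interior edges. Splitting along shared edges and applying Lemma~2 of \cite{sleator1986rotation} gives at most $2(n-3)$ flips that never touch a shared edge (\cref{lem:symmdiff}). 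With $2(n-3)$ set-up flips per iteration and at the end, $\beta+1$ flips per conversion, and $ac(H)+1\le n$, your count closes at $n(2n-5)$.

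A minor slip in Phase~1: the vertices of $V^O(\Delta)$ cannot all become degree-$2$ ear tips, since consecutive boundary vertices cannot both be tips. What you need, and then describe, is the fan from an endpoint of $e$, which is common to the blown-up regions of $\Delta$ and $\Delta'$.
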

	
	\cref{sec:lowerbound} is dedicated to finding a matching lower bound. For a flip sequence $F$, we distinguish two sets of pairs of triangles $(\Delta,\Delta')$. A pair will be called direct, if some triangulation in the flip sequence already contains edges of $\Lambda(\Delta')$ while it still contains edges of $\Lambda(\Delta)$ and indirect otherwise. We will then bound the number of direct pairs in a flip sequence by the cardinality of the largest acyclic subset of the conflict graph. For any indirect pair, we will show that the flip sequence needs to introduce many edges that are neither part of $\beta T$ nor of $\beta T'$. These edges introduce additional flips that need to be made in the flip sequence, allowing us to make estimates on the minimum number of flips in a flip sequence. The crucial part of the proof is showing that for distinct indirect pairs the sets of introduced edges are reasonably disjoint which implies that the number of additional flips behaves in an additive way.
	
	\begin{restatable}{proposition}{lower}\label{prop:lower}
		Given two triangulations $T$ and $T'$ with a linear representation. Let $H=H(T,T')$ be the conflict graph of $T$ and $T'$ and $ac(H)$ the size of its largest acyclic subset. The flip distance between $\beta T$ and $\beta T'$ is at least $\beta (2\lvert\Gamma\rvert - ac(H)) - 3n^2$.
	\end{restatable}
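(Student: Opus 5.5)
We want a lower bound of $\beta(2|\Gamma| - ac(H)) - 3n^2$ for any flip sequence $F$ from $\beta T$ to $\beta T'$. Following the outline, we split the pairs of $\Gamma$ into \emph{direct} and \emph{indirect} pairs with respect to $F$, and charge flips separately to the two classes.

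\textbf{Baseline count.} Each of the $\beta|\Gamma|$ fan edges of $\beta T$ lies in no edge of $\beta T'$ except possibly via coincidence with the fans of $\beta T'$. Since paired fans share the vertex set $V^O(\Delta)$ but have different apexes $v_{i,2}\neq v'_{i,2}$, edges $v_{i,2}u$ and $v'_{i,2}u$ are distinct. So each such edge must be flipped out at least once, and each edge of $\Lambda(\Delta')$ must be created at least once. A flip removes one edge and creates one, so the number of flips is at least the number of removal events and at least the number of creation events. The naive bound is therefore $\beta|\Gamma|$ plus the number of flips creating edges of neither triangulation. We refine this as follows: a flip that removes an edge of $\Lambda(\Delta)$ and directly creates an edge of $\Lambda(\Delta')$ counts once for both, while any other flip counts for at most one ``useful'' event. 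The goal is to show that every indirect pair forces roughly $\beta$ extra flips whose created edges are in neither triangulation, or equivalently that the flips touching indirect pairs number about $2\beta$ per pair.

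\textbf{Direct pairs.} Call $(\Delta,\Delta')$ direct if some intermediate triangulation contains an edge of $\Lambda(\Delta)$ and one of $\Lambda(\Delta')$ simultaneously. I first show that the direct pairs form an acyclic subset of $H$. Suppose there were a cycle $P_1\to P_2\to\dots\to P_m\to P_1$ of direct pairs. An arc $P_i\to P_{i+1}$ means every edge of $\Lambda(\Delta_i)$ crosses every edge of $\Lambda(\Delta'_{i+1})$. Let $t_i$ be the time at which $P_i$ witnesses directness. At time $t_{i+1}$ some edge of $\Lambda(\Delta'_{i+1})$ is present, so at that moment no edge of $\Lambda(\Delta_i)$ is present. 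To turn this into a cyclic contradiction I would refine the timing, using the last time an edge of $\Lambda(\Delta_i)$ is present and the first time an edge of $\Lambda(\Delta'_i)$ is present. Directness means $\text{first}'(i)<\text{last}(i)$, and the arc forces $\text{last}(i)<\text{first}'(i+1)$. Chaining these around the cycle gives a strictly increasing cyclic sequence, which is a contradiction. Hence there are at most $ac(H)$ direct pairs.

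\textbf{Indirect pairs (main obstacle).} For an indirect pair, all $\beta$ edges of $\Lambda(\Delta)$ have disappeared before the first edge of $\Lambda(\Delta')$ appears. At that intermediate moment the $\beta+2$ vertices of $V(\Delta)$ must be triangulated by edges other than both fans. Since the inner vertices of $V^O(\Delta)$ have consecutive neighbours only on the spine, each inner vertex needs an interior edge. So roughly $\beta-O(1)$ edges incident to $V^O(\Delta)$ lie in neither $\beta T$ nor $\beta T'$ (apart from boundedly many exceptions). Each such edge has to be created and later destroyed, which costs about $\beta$ extra flips beyond the baseline, giving $2\beta$ per indirect pair.

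The difficulty is additivity. One extra edge could serve two indirect pairs, namely an edge between $V^O(\Delta_a)$ and $V^O(\Delta_b)$. I plan to handle this by fixing, for each indirect pair, its moment $\tau$ and a set of roughly $\beta-O(n)$ ``private'' edges. An edge with both endpoints in blown-up groups is charged half to each group, and I then argue that edges leaving a group, in any triangulation, number at most $O(n)$ per group-pair by planarity. In a triangulation, the edges between two fixed consecutive vertex intervals form a non-crossing ladder whose non-degenerate part is controlled. The sharper approach is to count distinct extra edges. Within a single triangulation, at most $2(\#\text{groups})+O(n)$ edges go between distinct groups in a non-shared fan fashion, so the double counting loses only $O(n)$ per pair, which totals $O(n^2)\le 3n^2$. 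Summing gives at least $\beta|\Gamma|+\beta(|\Gamma|-ac(H))-3n^2$ flips, which is the claimed bound.
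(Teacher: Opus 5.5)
Your overall structure matches the paper's: split into direct and indirect pairs, show $\delta_F\le ac(H)$, and make each indirect pair pay about $\beta$ distinct edges outside $\beta T\cup\beta T'$, each needing its own removal flip. However, the step you yourself flag as the main obstacle is not resolved, and the tools you propose for it are false. In a single triangulation the number of edges between $V^O(\Delta_a)$ and $V^O(\Delta_b)$ is not $O(n)$. A zigzag between the two intervals has about $2\beta$ such edges, and one vertex $w\in V^O(\Delta_b)$ can be joined to all of $V^O(\Delta_a)$. Half-charging also fails, because different indirect pairs are witnessed at different times. At $gone(\Delta_a)$ the extra edges of $a$ may all be the fan from $w$, earning $a$ only $\beta/2$. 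At $gone(\Delta_b)$ the extra edges of $b$ may be entirely different ones. So the credited total is about $\tfrac32\beta$, although $2\beta$ distinct extra edges exist. Your per-pair count is also shaky: an inner vertex of $V^O(\Delta)$ can be an ear tip with no interior edge, and then the chord over it is the extra edge.

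The missing idea is to count vertices rather than edges. For each indirect pair $p$, the paper builds inside $T_{gone(\Delta_p)}$ a subgraph $G_p$ that uses no edge of $\beta T\cup\beta T'$, contains $\beta-l_p$ vertices of $V^O(\Delta_p)$, and has at most $3n$ components. The construction is: cut away everything beyond edges of $\beta T\cup\beta T'$, strip ears lying inside a single $V(\Delta)$, then delete the boundary; fewer than $3n$ ears survive. Alongside $G_p$ it records $l_p$ chords $K_p$ between non-consecutive vertices of $V(\Delta_p)$, one for each stripped tip. Since the sets $V^O(\Delta_p)$ are pairwise disjoint, $U=\bigcup_p G_p$ has at least $\sum_p(\beta-l_p)$ vertices and at most $3n\bar\delta_F$ components. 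Hence $U$ has at least $\sum_p(\beta-l_p)-3n\bar\delta_F$ distinct edges, however edges are shared among the $G_p$. Adding the disjoint chords gives $\beta\bar\delta_F-3n\bar\delta_F$ extra edges, and $\bar\delta_F\le n$ yields the $3n^2$.

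Your acyclicity argument for direct pairs also needs repair. An arc $P_i\to P_{i+1}$ only says that $\Lambda(\Delta_i)$ and $\Lambda(\Delta'_{i+1})$ are never present together. Since fan edges can be deleted and later re-inserted, this does not give $\mathrm{last}(i)<\mathrm{first}'(i+1)$. Worse, with your definition (coexistence at \emph{some} time), direct pairs need not be acyclic. For a bidirected conflict $i\leftrightarrow j$, a wasteful sequence can delete $\Lambda(\Delta_i)$, let $\Lambda(\Delta_j)$ and $\Lambda(\Delta'_j)$ coexist, delete both, and then reinsert $\Lambda(\Delta_i)$ next to $\Lambda(\Delta'_i)$. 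That makes both $i$ and $j$ direct. Use instead $gone(\Delta)$, the first index at which no edge of $\Lambda(\Delta)$ is present, so that $\Lambda(\Delta)$ is present at every earlier index. Call a pair direct if an edge of $\Lambda(\Delta')$ is present at some index $\le gone(\Delta)$. An arc $i\to j$ then forces $gone(\Delta_i)\le\mathrm{first}'(j)\le gone(\Delta_j)$. Equality is impossible because one flip removes one edge, so $gone$ strictly increases along arcs.
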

	
	We want to point out that the obtained bounds are far from optimal for small values of $\beta$. If $\beta$ is of roughly the same order or smaller than $n$, then the lower bound on the flip distance is negative and the upper bound is quadratic in the size of the triangulations. However, we know that there exist upper bounds on the flip distance that are linear in the size of the triangulations. For large values of $\beta$, at least quadratic in $n$, the terms in $\beta$ dominate the flip distance and the size of the largest acyclic set determines the flip distance between $\beta T$ and $\beta T'$ up to ``rounding errors" in $n$.
	
	We combine the three partial results to obtain a final proof for \cref{thm:main}.
	
	\Main*
	
	\begin{proof}
		Containment in \NP~follows from the fact that a shortest flip sequence between $T$ and~$T'$ has length that is linear in the size of $T$ and $T'$ and each occurring triangulation has linear size. Therefore, a flip sequence of length at most $k$ serves as a certificate that can be checked in time and space that is polynomial in the input size.
		
		By \cref{prop:achard} it is \NP-hard to check whether a conflict graph $H$ with $\Gamma$ as the set of triangle pairs of triangulations $T$ and $T'$ on $n$ vertices with a given linear representation has an acyclic subset of size at least $k$.
		
		We now consider the $\beta = 6(n^2+n)$-blow-up of $T$ and $T'$, denoted by $\beta T$ and $\beta T'$. Since the size of the triangulations increases by a polynomial factor, this reduction step is polynomial.
		
		To obtain the statement of the theorem, we show that the following properties are equivalent. In particular, if one property is \NP-hard to decide, then so is the other.
		\begin{itemize}
			\item $H$ has an acyclic subset of size at least $k$.
			\item The flip distance between $T$ and $T'$ is at most $\beta (2\lvert\Gamma\rvert - k) +n(2n-5)$.
		\end{itemize}
		
		One direction is the statement of \cref{prop:upper}. Assume for the other direction that~$H$ does not contain an acyclic set of size $k$ or larger. Then, by \cref{prop:lower}, we obtain a lower bound on the flip distance of $\beta T$ and $\beta T'$.
		
		\begin{equation*}
			\beta(2\lvert \Gamma \rvert - (k-1)) - 3n^2 = \beta(2\lvert \Gamma \rvert - k) + \beta - 3n^2 > \beta (2\lvert\Gamma\rvert - k) +n(2n-5)
		\end{equation*}
		
		The last inequality follows since $\beta = 6(n^2+n) - 3n^2 > 2n^2 - 5n$ and concludes the proof.
	\end{proof}

	\section{Hardness of Finding Maximum Acyclic Subsets}\label{sec:hard}
	
	Our proof that finding shortest flip sequences between two triangulations of a convex point set is \NP-hard goes via an intermediate problem. First, we show that it is \NP-hard to compute maximum acyclic subsets in a directed graph $G$, even if $G$ is the conflict graph of two triangulations with a given linear representation. A similar trick was used to prove that finding shortest flip sequences between non-crossing spanning trees is \NP-hard~\cite{Trees2026hards}. We, however, use a simplified conflict graph with simpler variable gadgets.
	
	\achard*
	
	The problem is contained in \NP\ and a valid acyclic set serves as a certificate. Any subset of~$\Gamma$ is of polynomial size in the input. Checking that the set is indeed acyclic can be done using only polynomial space and time.
		
	To prove \NP-hardness we reduce from the \NP-hard problem \textsc{Planar Separable Monotone Max-2Sat}~\cite{buchin2019geometric}. For \textsc{Planar Separable Monotone Max-2SAT} we are given a \textsc{2SAT} formula $\phi$ with a set $X$ of variables and a set $C$ of clauses that each contain two literals. The vertex-clause incidence graph $G_\phi$ admits an \emph{aligned drawing} $D(G_\phi)$, that is, a drawing that has all variables in $X$ along a horizontal line $L$. Every clause in $C$ lies entirely below or above $L$ and no edge of $G_\phi$ crosses $L$ or any other edge in $D(G_\phi)$. Any clause that lies in the upper half plane only contains positive literals. Any clause in the lower half plane contains only negative literals. Additionally, we are given an integer $k'$. \textsc{Planar Separable Monotone Max-2SAT} asks, whether we can assign truth values to $X$ such that at least $k'$ clauses in $C$ are satisfied.
	
	Our reduction from \textsc{Planar Separable Monotone Max-2SAT} to finding maximum acyclic subsets in the conflict graph utilizes small gadgets. These gadgets will be combined into two trianglations $T$ and $T'$ in a linear representation such that for the resulting conflict graph $H$ the number of vertices in an acyclic subset of $H$ corresponds to clauses in $\phi$ that can be fulfilled. We remark that we construct fewer triangles than we would need to form a full triangulation. The final triangulations $T$ and $T'$ will be obtained by triangulating bigger faces in the interior arbitrarily. We continue by describing the construction of each gadget in more detail. At the same time, we describe how we construct a set $S$ which we will prove to be acyclic.
	
	\begin{figure}[h]
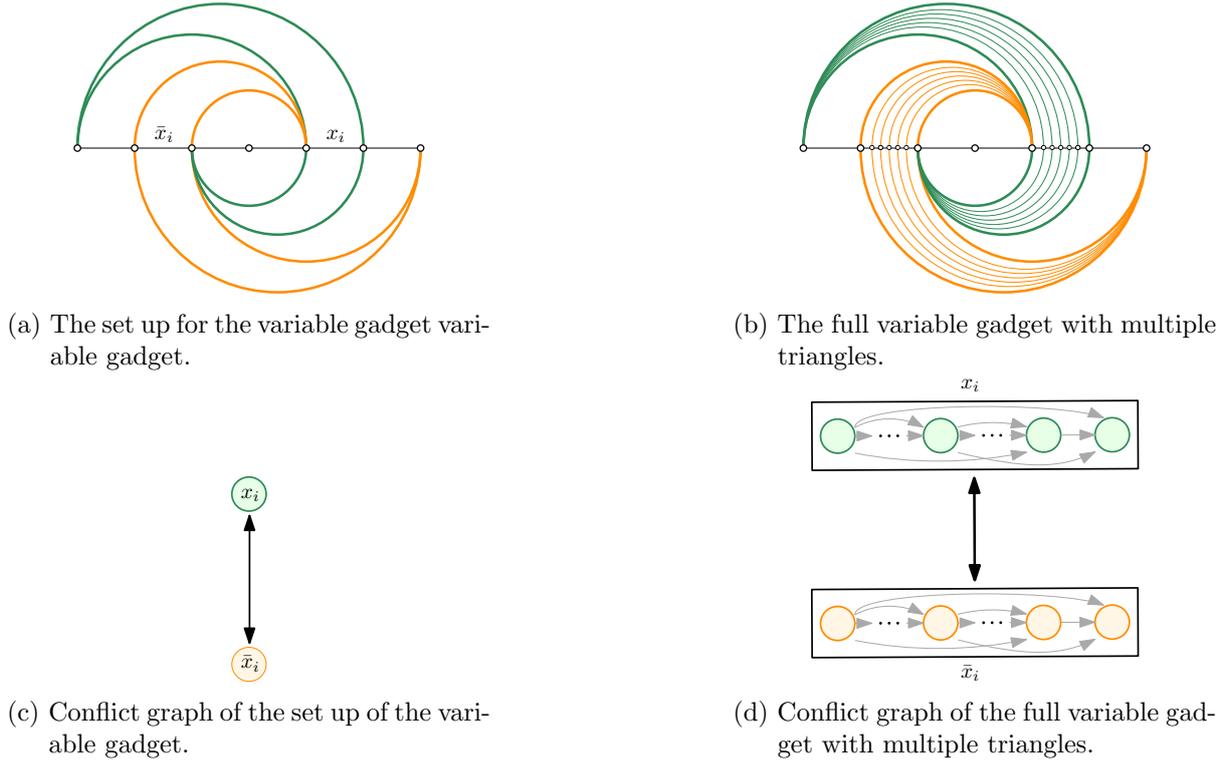

		\centering
		\begin{subfigure}{0.4\textwidth}
			\centering
			\includegraphics[scale=0.8,page=5]{Gadgets}
			\caption{The set up for the variable gadget.}
			\label{fig:variable}
		\end{subfigure}
		\hfill
		\begin{subfigure}{0.4\textwidth}
			\centering
			\includegraphics[scale=0.8,page=6]{Gadgets}
			\caption{The full variable gadget with multiple triangles.}
			\label{fig:variableup}
		\end{subfigure}
		\begin{subfigure}{0.4\textwidth}
			\centering
			\includegraphics[scale=0.8,page=7]{Gadgets}
			\caption{Conflict graph of the set up of the variable gadget.}
			\label{fig:vconflict}
		\end{subfigure}
		\hfill
		\begin{subfigure}{0.4\textwidth}
			\centering
			\includegraphics[scale=0.8,page=8]{Gadgets}
			\caption{Conflict graph of the full variable gadget with multiple triangles.}
			\label{fig:vconflictup}
		\end{subfigure}
		\caption{The variable gadget and its conflict graph.}
		\label{fig:wholevariable}
	\end{figure}
	
	\subparagraph{Variable gadget.} For every variable $x_i$ we introduce a variable gadget as depicted in \cref{fig:wholevariable}. Its structure is based on two pairs of triangles denoted by $x_i$ and $\bar{x}_i$ and represented by a green above pair, resp. an orange below pair. Let $\Delta_{x_i}$ and $\Delta'_{x_i}$ as well as $\Delta_{\bar{x}_i}$ and $\Delta'_{\bar{x}_i}$ denote the individual triangles of the triangle pairs. Observe that after performing a blow-up, the edges of $\Lambda(\Delta_{x_i})$ cross the edges of $\Lambda(\Delta'_{\bar{x}_i})$ and the edges of $\Lambda(\Delta_{\bar{x}_i})$ cross the edges of $\Lambda(\Delta'_{x_i})$. Therefore, we have conflicts $x_i\to\bar{x}_i$ and $\bar{x}_i\to x_i$ in the conflict graph. See \cref{fig:variable} and \cref{fig:vconflict} for an illustration of the set up.
	
	An acyclic subset of the conflict graph can now contain $x_i$ or $\bar{x}_i$, but not both. The choice, which of the two pairs is chosen, corresponds to the truth assignment of the variable $x_i$.
	
	To obtain the final variable gadget, we need to perform an additional construction step. The reason for this is that we will later connect variable gadgets to the clause gadgets. This will lead to additional conflicts in the conflict graph. We want to make sure that the truth assignment of the variables is well-defined. This well-definedness could be at risk, if an acyclic subset contains neither of $x_i$ or $\bar{x}_i$ and can, in return, contain more triangle pairs from clause gadgets. To counter this, we not only introduce just a single triangle for $x_i$ and $\bar{x}_i$, but $\lvert C\rvert+1$ of them. To achieve this, like in the blow-up operation, we subdivide the edge that the two triangles of a pair share with the boundary by placing $\lvert C \rvert$ vertices on it and adding edges from all the introduced vertices to the vertex that is not incident to the boundary edge, see \cref{fig:variableup}.
	
	The resulting conflict graph is depicted in \cref{fig:vconflictup}. We observe that all pairs of triangles that result from subdividing $x_i$ have a bidirected conflict with all pairs of triangles that result from subdividing $\bar{x}_i$. For better visibility, we collect all these bidirected conflicts by one single vertical edge in \cref{fig:vconflictup}. All other new conflicts that arise are between above pairs or between below pairs and are in accordance with \cref{lem:acyclic}. We collect the main insight about the variable gadget in the following lemma.
	
	\begin{lemma}
		The conflict graph for the variable gadget has exactly two maximum acyclic subsets of cardinality $\lvert C \rvert+1$ given by all pairs of triangles that arise from subdividing $x_i$, respectively~$\bar{x}_i$.
	\end{lemma}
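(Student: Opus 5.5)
We argue directly from the structure of the conflict graph of the variable gadget. Denote by $X_i=\{x_i^1,\dots,x_i^{\lvert C\rvert+1}\}$ the above pairs obtained by subdividing $x_i$ and by $\bar X_i=\{\bar x_i^1,\dots,\bar x_i^{\lvert C\rvert+1}\}$ the below pairs obtained by subdividing $\bar x_i$. The vertex set of the gadget's conflict graph is $X_i\cup\bar X_i$, of total size $2(\lvert C\rvert+1)$. The first step is to record the edges. For every $a\in X_i$ and $b\in\bar X_i$, the fan of the $T$-triangle of $a$ crosses the fan of the $T'$-triangle of $b$, and vice versa, exactly as in the unsubdivided set-up (\cref{fig:variable}). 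Subdivision only shrinks the spine edge inside the original one and keeps the apex, so the crossing pattern is preserved and we get a bidirected edge $a\leftrightarrow b$. All remaining edges join two pairs of the same type.

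For the upper bound, let $S$ be an acyclic subset. If $S$ contained some $a\in X_i$ and some $b\in\bar X_i$, the bidirected edge would give a directed $2$-cycle, which is impossible. Hence $S\subseteq X_i$ or $S\subseteq\bar X_i$, so $\lvert S\rvert\le \lvert C\rvert+1$. Equality forces $S=X_i$ or $S=\bar X_i$. This also shows that there are at most two maximum acyclic subsets.

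For the lower bound, we show that $X_i$ and $\bar X_i$ are themselves acyclic. Inside $X_i$ every pair is an above pair, so every induced edge is between pairs of the same type, and \cref{lem:acyclic} applies, or directly \cref{lem:abc}: every edge goes from a pair with spine edge further left to one further right, so no cycle exists. The same argument works for $\bar X_i$, which consists of below pairs. Therefore both sets are acyclic of size $\lvert C\rvert+1$, they are the maximum acyclic subsets, and there are exactly two of them.

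The main obstacle is justifying rigorously that every cross pair $(a,b)$ with $a\in X_i$, $b\in\bar X_i$ really gives conflicts in both directions after subdivision. The paper asserts this from the figure. I would verify it with the coordinate descriptions of above and below pairs: the below pair's apex $v_{\bar x,2}$ lies to the right of the above pair's triangle $v_{x,1}v_{x,3}$, and the above pair's $v_{x,2}$ lies to the left, positioned so that the interleaving condition for crossing semicircles holds for every subdivision point. Since the subdivision points all lie strictly between the original endpoints, the interleaving holds uniformly for all of them.
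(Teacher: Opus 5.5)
Your proof is correct and is essentially the paper's own argument, which the paper records only as an observation. The bidirected conflicts between every subdivided $x_i$-pair and every subdivided $\bar{x}_i$-pair confine an acyclic set to one side, and each side is acyclic because all of its conflicts are between pairs of the same type (\cref{lem:abc}, \cref{lem:acyclic}). One slip in your closing sketch: the two conflicts together with planarity of $T$ force the $T$-apex $v_{\bar{x},2}$ to lie between the two spine edges, with $v'_{x,2}<v_{\bar{x},2}\le v_{x,1}$, and it is the $T'$-apex $v'_{\bar{x},2}$ that lies at or to the right of $v_{x,3}$. So the conflict $\bar{x}_i\to x_i$ comes from the interleaving $w<v'_{x,2}<v_{\bar{x},2}<u$, where $w$ lies inside the $\bar{x}_i$ spine edge and $u$ inside the $x_i$ spine edge; this does not affect your argument that subdivision preserves the crossings.
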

	
	To obtain an acyclic set $S$ from a truth assignment we will add all the pairs labeled $x_i$ to $S$ whenever $x_i$ has value \textsc{True} assigned to it. Otherwise, we add pairs labeled $\bar{x}_i$ to $S$. Conversely, from a large acyclic subset $S$, we reconstruct a truth assignment for the variables by assigning \textsc{True} to $x_i$ if $S$ contains triangle pairs labeled $x_i$. Otherwise, we assign \textsc{False}.
	
	\subparagraph{Positive clause gadgets.} Consider a clause from the upper half plane that contains two positive literals $x_i\lor x_j$ with $i<j$. We add a clause gadget that consists of the two pairs of triangles $C_1$ and $C_2$ as depicted in \cref{fig:clausepos}. For the conflict graph in \cref{fig:cconflictpos} we use a single vertex for triangle pairs $x_i$ and $\bar{x}_i$ (resp. $x_j$ and $\bar{x}_j$). All conflicts that we observe hold for every pair of triangles that resulted from subdividing the original triangles. We observe that we obtain three bidirected conflicts: between $\bar{x}_i$ and $C_1$, $\bar{x}_j$ and $C_2$, and between $C_1$ and $C_2$.
	
	\begin{figure}[htb]
		\centering
		\begin{subfigure}{0.4\textwidth}
			\centering
			\includegraphics[scale=0.4,page=9]{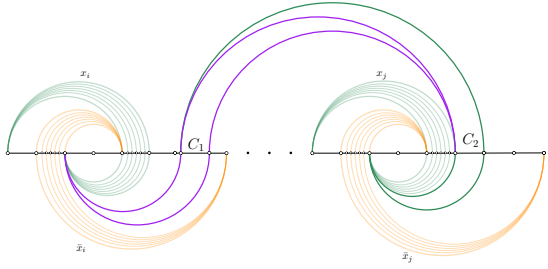}
			\caption{The clause gadget for a positive clause together with the relevant variable gadgets.}
			\label{fig:clausepos}
		\end{subfigure}
		\hfill
		\begin{subfigure}{0.4\textwidth}
			\centering
			\includegraphics[scale=0.7,page=11]{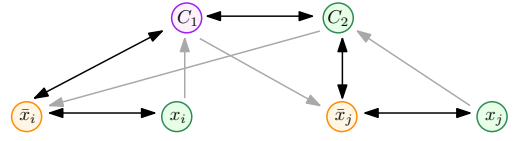}
			\caption{Conflict graph of a clause gadget with positive literals.}
			\label{fig:cconflictpos}
		\end{subfigure}
		\caption{The positive clause gadget and its conflict graph.}
	\end{figure}

	\begin{figure}[htb]
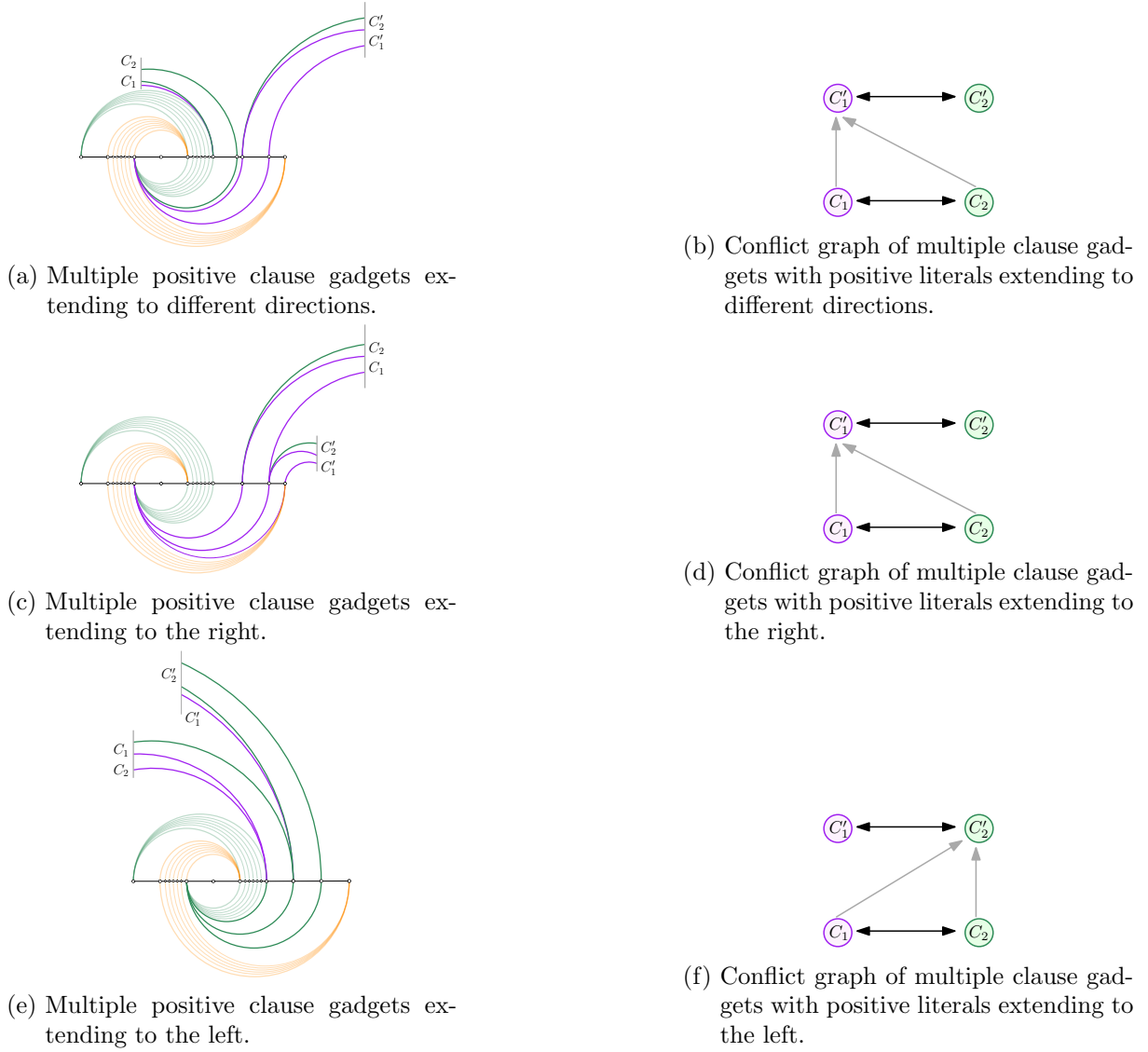

		\centering
		\begin{subfigure}{0.4\textwidth}
			\centering
			\includegraphics[scale=0.4,page=31]{Gadgets}
			\caption{Multiple positive clause gadgets extending to different directions.}
			\label{fig:clausepos1}
		\end{subfigure}
		\hfill
		\begin{subfigure}{0.4\textwidth}
			\centering
			\includegraphics[scale=0.7,page=32]{Gadgets}
			\caption{Conflict graph of multiple clause gadgets with positive literals extending to different directions.}
			\label{fig:cconflictpos1}
		\end{subfigure}
		\begin{subfigure}{0.4\textwidth}
			\centering
			\includegraphics[scale=0.4,page=33]{Gadgets}
			\caption{Multiple positive clause gadgets extending to the right.}
			\label{fig:clausepos2}
		\end{subfigure}
		\hfill
		\begin{subfigure}{0.4\textwidth}
			\centering
			\includegraphics[scale=0.7,page=34]{Gadgets}
			\caption{Conflict graph of multiple clause gadgets with positive literals extending to the right.}
			\label{fig:cconflictpos2}
		\end{subfigure}
		\begin{subfigure}{0.4\textwidth}
			\centering
			\includegraphics[scale=0.4,page=35]{Gadgets}
			\caption{Multiple positive clause gadgets extending to the left.}
			\label{fig:clausepos3}
		\end{subfigure}
		\hfill
		\begin{subfigure}{0.4\textwidth}
			\centering
			\includegraphics[scale=0.7,page=36]{Gadgets}
			\caption{Conflict graph of multiple clause gadgets with positive literals extending to the left.}
			\label{fig:cconflictpos3}
		\end{subfigure}
		\caption{Multiple positive clause gadgets and their conflict graph.}
		\label{fig:conflictpos}
	\end{figure}
	
	The idea for the positive clause gadget is the following: If the triangle pairs corresponding to $x_i$ are in $S$, then no triangles corresponding to $\bar{x}_i$ are in $S$. We can add $C_1$ without having a bidirected conflict with the variable gadgets. If $C_1$ is in $S$, then this corresponds to the clause $x_i\lor x_j$ being satisfied because $x_i$ holds. Conversely, if triangle pairs of the form $\bar{x}_i$ are in $S$, we cannot add $C_1$ since it would cause a bidirected conflict and the set is not acyclic. We make a similar observation for having triangle pairs for $x_j$ and $C_2$ in $S$. If $C_2$ is in $S$, then this corresponds to the clause $x_i\lor x_j$ being satisfied by $x_j$ being true. It might be the case that both, triangle pairs from $x_i$ and triangle pairs from $x_j$, are in $S$. Then we could add either $C_1$ or $C_2$ to $S$. We cannot add both at the same time since the two have a bidirected conflict between them. As a convention, we will always add the above pair $C_2$. This concludes the high level idea of the positive clause gadget. For every clause gadget, we add no pair of triangles to our set $S$ if the clause is not fulfilled by the truth assignment. Conversely, we add exactly one pair of triangles, if the clause is satisfied. If we combine this with the way in which we add triangle pairs from variable gadgets, we observe that $S$ does not induce any bidirected conflicts.
	
	The remainder of the discussion of positive clause gadgets deals with single-directed conflicts. We analyse all the single-directed conflicts that can occur between positive clause gadgets and variable gadgets as well as between different positive clause gadgets. The goal is to argue that all single-directed conflicts that occur in the set $S$ that we construct along the way fit into the description of \cref{lem:acyclic}. This will guarantee in the end that $S$ induces an acyclic subgraph of~$H$.
	
	\Cref{fig:conflictpos} on the left side depicts all ways to combine multiple clause gadgets at the same variable gadget. By doing so, also conflicts between triangle pairs of different clause gadgets arise. In \cref{fig:conflictpos} on the right side, we display all directed conflicts to potential other clause gadgets that might be incident at the gadget of $x_i$ or $x_j$. We collect all the conflicts in the following lemma. Recall that we have a crossing whenever vertices of edges in $\Lambda(\Delta_i)$ and $\Lambda(\Delta'_j)$ appear in an alternating fashion along the boundary. For all combinations of positive clause gadgets, we study, which edges cross.

	\begin{lemma}\label{obs:conflict}
		All conflicts that occur in positive clause gadgets together with their variable gadgets are of the following forms:
		\begin{itemize}
						\setlength\itemsep{0em}
			\item Bidirected conflicts of the form $x_i\leftrightarrow\bar{x}_i$, $\bar{x}_j \leftrightarrow C_1$, $\bar{x}_i\leftrightarrow C_2$, $C_1\leftrightarrow C_2$. 
			\item Directed conflicts between triangle pairs of the same type ($x_j \to C_2$, $C_2 \to C'_2$, $C_1\to C'_1$).
			\item Directed conflicts from above pairs to crossing pairs ($x_i\to C_1$, $C_2 \to C'_1$).
			\item Directed conflicts from above pairs to below pairs ($C_2\to \bar{x}_i$).
			\item Directed conflicts from crossing pairs to below pairs ($C_1\to\bar{x}_j$).
			\item Directed conflicts from crossing pairs $C_1$ to above pairs $C_2$ at the same variable.
		\end{itemize}
	\end{lemma}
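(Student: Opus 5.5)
The plan is a finite case analysis based on the crossing criterion. There is an edge $(\Delta_a,\Delta'_a)\to(\Delta_b,\Delta'_b)$ in $H$ exactly when, for some (equivalently every) blow-up vertex $u$ in the interior of $e_a$ and $w$ in the interior of $e_b$, the fan edges $v_{a,2}u$ and $v'_{b,2}w$ have alternating endpoints along the spine. So each candidate conflict reduces to comparing the positions of four spine points.

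First, fix coordinates for the gadgets. For the variable gadget of $x_i$, record the spine order of $v_{x_i,1},v_{x_i,2},v'_{x_i,2},v_{x_i,3}$ and of the corresponding points of $\bar x_i$, as drawn in \cref{fig:variable}. Do the same for the positive clause gadget of $x_i\lor x_j$ with $i<j$: the crossing pair $C_1$ and the above pair $C_2$ of \cref{fig:clausepos}, with their apexes $v_{C_1,2},v'_{C_1,2},v_{C_2,2},v'_{C_2,2}$ placed at the chosen variable-gadget endpoints. The subdivisions of the $x_i$ and $\bar x_i$ triangles only split a spine edge. Every sub-pair therefore inherits the same alternation pattern with respect to any other gadget, so it suffices to treat each variable pair as a single vertex, as the statement already does.

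Second, enumerate the ordered pairs of gadget pairs and test alternation, in three groups.

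\emph{Inside one gadget together with its two variables.} Here there are only six pairs, $x_i,\bar x_i,x_j,\bar x_j,C_1,C_2$, so at most $30$ ordered tests. These yield the bidirected conflicts listed in the statement, the directed ones $x_j\to C_2$, $x_i\to C_1$, $C_2\to\bar x_i$, $C_1\to\bar x_j$, and no others. In particular I would check that $x_i$ and $C_2$ are not in conflict, and that neither is $x_j$ and $C_1$.

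\emph{Two clause gadgets sharing a variable.} I split into the three nesting configurations of \cref{fig:conflictpos}: one gadget extending left and one right, both extending right, and both extending left. Planarity of $D(G_\phi)$ forces the clause gadgets incident to a variable to be nested according to the aligned drawing, and this fixes the relative order of their apexes. I then verify that the only conflicts are $C_2\to C_2'$ and $C_1\to C_1'$, both between pairs of the same type, $C_2\to C_1'$, and the crossing-to-above conflicts $C_1\to C_2'$ at the same variable.

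\emph{Clause gadgets and variable gadgets not sharing a variable, or disjoint clause gadgets.} Their fan edges lie in disjoint or properly nested intervals of the spine. Nested or disjoint chords never alternate, so there are no conflicts at all.

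The main obstacle is the second group. There the nesting must be derived carefully from the planar aligned drawing, and one must make sure that no unexpected conflict, such as an above-to-above edge in the wrong direction or a below-to-crossing edge, appears. For this I would use \cref{lem:abc} to settle same-type conflicts immediately. The remaining cross-type pairs are only checked for which direction alternation occurs, reading the order directly off the fixed placement of the apexes relative to the shared variable's spine edge.
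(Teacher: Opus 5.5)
Your plan is correct and is essentially the paper's own argument: the paper also only checks the alternation criterion for fan edges over the configurations drawn in its figures, namely one clause gadget with its two variable gadgets and the three ways two positive clause gadgets can meet at a common variable. Your remark that gadgets without a common variable have disjoint or nested fans, and hence no conflicts, is a correct way to close that enumeration. One caveat: the first bullet of the statement has $i$ and $j$ swapped relative to the sentence just before the lemma, and your single-gadget check will produce $\bar x_i\leftrightarrow C_1$ and $\bar x_j\leftrightarrow C_2$ (the only version consistent with $x_i\to C_1$ and $x_j\to C_2$), so you should not claim it reproduces the printed list verbatim.
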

	
	For the last point of \cref{obs:conflict} we observe that we have a directed conflict from a crossing pair to an above pair. We have to avoid conflicts of this form to be able to guarantee acyclicity of $S$ if we want to apply \cref{lem:acyclic}. To overcome this, our convention on tiebreaks between $C_1$ and $C_2$ comes into action. Whenever we could put both $C_1$ and $C_2$ from the same positive clause gadget into $S$, we choose $C_2$, by our convention. That way, if we put one $C_2$ above pair of a clause in $S$, then we will never put some $C'_1$ crossing pair from a different clause that contains the same variable in our constructed subset. This follows from a simple argument: Since $C_2$ is in $S$, so is $x_i$ and not $\bar{x}_i$, therefore we could have added $C'_2$ and would also have done so by our convention.
	
	\vspace{-20pt}
	
	\begin{figure}[htb]
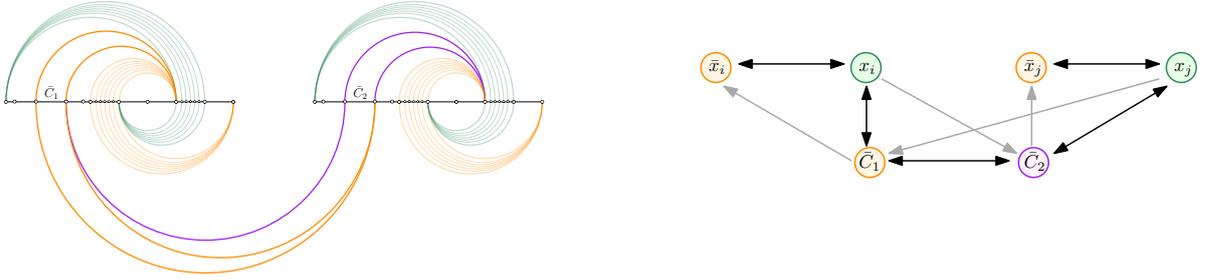

		\centering
		\begin{subfigure}{0.4\textwidth}
			\centering
			\includegraphics[scale=0.4,page=13]{Gadgets}
			\caption{The clause gadget for a negative clause with the relevant variable gadgets.}
			\label{fig:clauseneg}
		\end{subfigure}
		\hfill
		\begin{subfigure}{0.4\textwidth}
			\centering
			\includegraphics[scale=0.7,page=15]{Gadgets}
			\caption{Conflict graph of a clause gadget with negative literals.}
			\label{fig:cconflictneg}
		\end{subfigure}
		\caption{The negative clause gadget and its conflict graph.}
	\end{figure}
	
	\subparagraph{Negative clause gadgets.} Consider a clause from the lower half plane that contains two negative literals $\bar{x}_i\lor\bar{x}_j$ with $i<j$. We add a clause gadget that consists of two pairs of triangles $\bar{C}_1$ and $\bar{C_2}$ as depicted in \cref{fig:clauseneg}. As before, we use a single vertex to represent all triangle pairs arising from $x_i$ and $\bar{x}_i$ in the conflict graph in \cref{fig:cconflictneg}. All conflicts that are depicted hold for every pair of triangles that result from subdividing the original triangles.
	
	We observe that the resulting conflict graph has three bidirected conflicts between $x_i$ and~$\bar{C}_1$, between $x_j$ and $\bar{C}_2$, and between $\bar{C_1}$ and $\bar{C}_2$. The high level idea for the negative clause gadget is that, if for one variable gadget we choose the negative assignment of the variable, in the process fulfilling the clause, then we can add one of the clause pairs to $S$. If $S$ contains $\bar{x}_i$, then we add~$\bar{C}_1$. If $S$ contains $\bar{x}_j$, then we add $\bar{C}_2$. If $S$ contains both $\bar{x}_i$ and $\bar{x}_j$, we can still only add one of $\bar{C}_1$ and $\bar{C}_2$. We set the convention, that if we can choose between the two, we add $\bar{C}_1$ to~$S$. We again conclude from the high level idea that by adding pairs of triangles from negative clause gadgets this way, we never obtain bidirected conflicts in $S$.
	
	\begin{figure}[t]
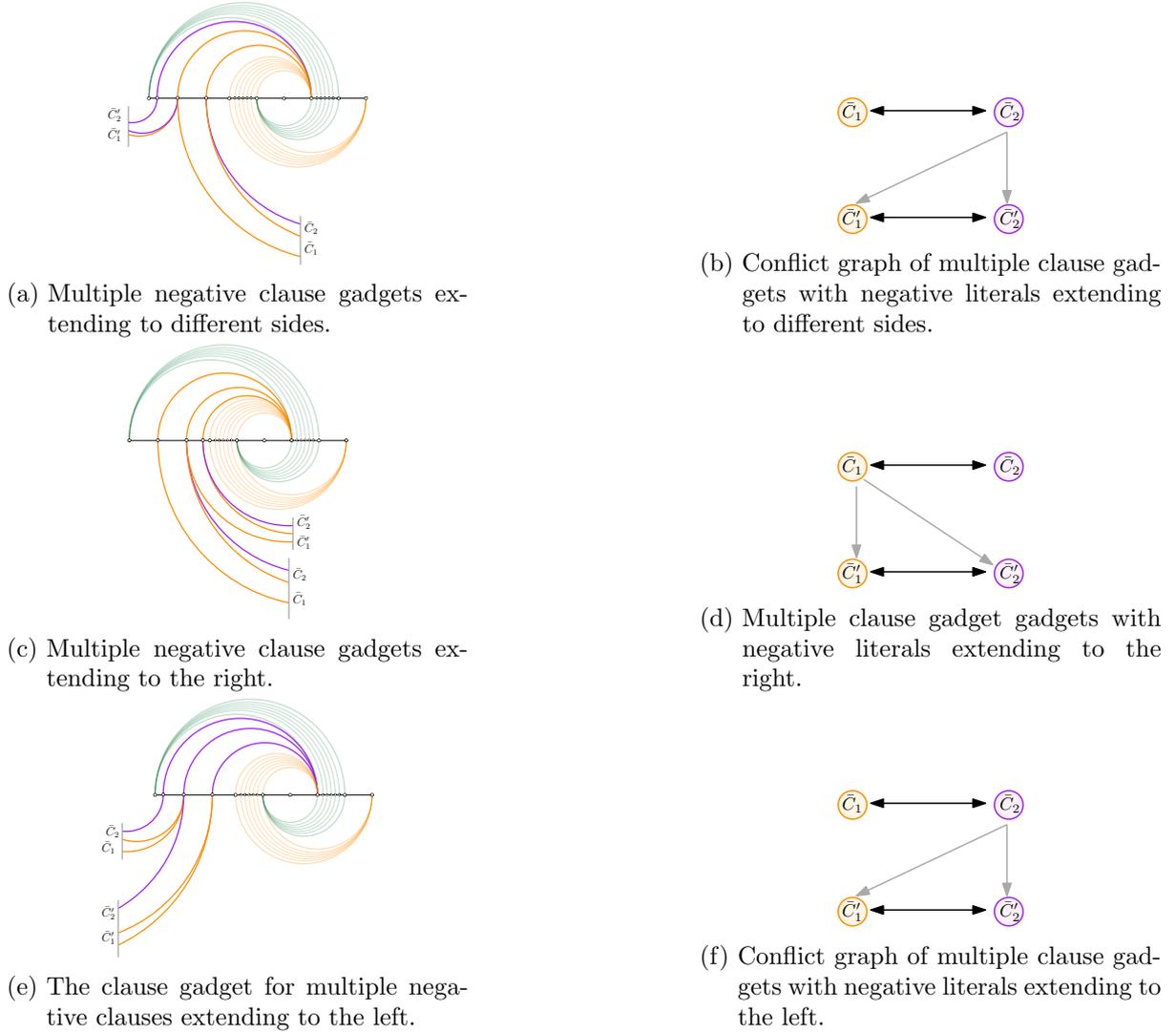

		\centering
		\begin{subfigure}{0.4\textwidth}
			\centering
			\includegraphics[scale=0.4,page=37]{Gadgets}
			\caption{Multiple negative clause gadgets extending to different sides.}
			\label{fig:clausenegm}
		\end{subfigure}
		\hfill
		\begin{subfigure}{0.4\textwidth}
			\centering
			\includegraphics[scale=0.7,page=38]{Gadgets}
			\caption{Conflict graph of multiple clause gadgets with negative literals extending to different sides.}
			\label{fig:cconflictnegm}
		\end{subfigure}
		\centering
		\begin{subfigure}{0.4\textwidth}
			\centering
			\includegraphics[scale=0.4,page=39]{Gadgets}
			\caption{Multiple negative clause gadgets extending to the right.}
			\label{fig:clausenegm2}
		\end{subfigure}
		\hfill
		\begin{subfigure}{0.4\textwidth}
			\centering
			\includegraphics[scale=0.7,page=40]{Gadgets}
			\caption{Multiple clause gadget gadgets with negative literals extending to the right.}
			\label{fig:cconflictnegm2}
		\end{subfigure}
			\centering
		\begin{subfigure}{0.4\textwidth}
			\centering
			\includegraphics[scale=0.4,page=41]{Gadgets}
			\caption{The clause gadget for multiple negative clauses extending to the left.}
			\label{fig:clausenegm3}
		\end{subfigure}
		\hfill
		\begin{subfigure}{0.4\textwidth}
			\centering
			\includegraphics[scale=0.7,page=42]{Gadgets}
			\caption{Conflict graph of multiple clause gadgets with negative literals extending to the left.}
			\label{fig:cconflictnegm3}
		\end{subfigure}
		\caption{Multiple negative clause gadgets and their conflict graph.}
		\label{fig:conflictneg}
	\end{figure}
	
	Again, the remaining part of the discussion of the gadget is entirely devoted to making sure that the directed conflicts are in accordance with the conditions of~\cref{lem:acyclic}.
	
	\Cref{fig:conflictneg} on the left side depicts all ways to combine multiple negative clause gadgets at the same variable gadget. When adding multiple clauses at the same gadget, new conflicts between clause gadgets arise as well. In \cref{fig:conflictneg} on the right side we depict all directed conflicts to potential other clause gadgets that might arise when having multiple clause gadgets at $x_i$ and~$x_j$. We collect all the appearing conflicts in the following lemma.
	
	\begin{lemma}\label{obs:conflict2}
		All conflicts that occur in negative clause gadgets together with their variables gadgets are of the following forms:
		\begin{itemize}
						\setlength\itemsep{0.00em}
			\item Bidirected conflicts of the form $x_i\leftrightarrow\bar{x}_i$, $x_i\leftrightarrow\bar{C}_1$, $x_j\leftrightarrow\bar{C}_2$.
			\item Directed conflicts between pairs of the same type ($\bar{C}_1 \to \bar{x}_i$, $\bar{C}_2 \to \bar{C}'_2$, $\bar{C}_1 \to \bar{C}'_1$).
			\item Directed conflicts from above pairs to crossing pairs ($x_j\to \bar{C}_1$).
			\item Directed conflicts from above pairs to below pairs ($x_j \to \bar{C}_1$).
			\item Directed conflicts from crossing pairs to below pairs ($\bar{C}_2 \to \bar{C}'_1$).
			\item Directed conflicts from below pairs $\bar{C}_1$ to crossing pairs $\bar{C}_2$ at the same variable
		\end{itemize}
	\end{lemma}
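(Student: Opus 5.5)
The plan is to verify \cref{obs:conflict2} by direct inspection of crossings, in the same way \cref{obs:conflict} was handled for positive clauses. I will write out the explicit spine order of all vertices involved in a negative clause gadget $\bar{C}_1,\bar{C}_2$ for $\bar{x}_i\lor\bar{x}_j$, together with the variable gadgets of $x_i$ and $x_j$, as in \cref{fig:clauseneg}. For each triangle pair I record its type: $x$ is an above pair, $\bar{x}$ a below pair, and I read $\bar{C}_1$ as a below pair and $\bar{C}_2$ as a crossing pair, consistent with the last item of the statement. By definition, a conflict $(\Delta_a,\Delta'_a)\to(\Delta_b,\Delta'_b)$ exists exactly when the fan $\Lambda(\Delta_a)$ crosses the fan $\Lambda(\Delta'_b)$. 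Each fan consists of edges from one apex to the interior points of a single spine edge, so the test reduces to checking whether the apex and the spine interval of one fan alternate with the apex and the spine interval of the other. This is a finite comparison of four positions on the spine.

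The first step treats a single negative clause. For each ordered pair among $\{x_i,\bar{x}_i,x_j,\bar{x}_j,\bar{C}_1,\bar{C}_2\}$, I test the alternation condition. Each pair that crosses in both directions is listed as bidirected: $x_i\leftrightarrow\bar{x}_i$, $x_i\leftrightarrow\bar{C}_1$, $x_j\leftrightarrow\bar{C}_2$, and also $\bar{C}_1\leftrightarrow\bar{C}_2$ as stated in the text. The remaining crossings are one-directional, and I sort each by the types of its endpoints. The subdivided copies of $x_i$ and $\bar{x}_i$ share the same apex and lie on a common spine edge, so their crossing behaviour against any other fan is identical. This justifies treating them as a single vertex, as the paper does.

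The second step treats several negative clauses attached to the same variable, in each of the three configurations of \cref{fig:conflictneg}: clauses extending to both sides, all to the right, and all to the left. Here the planarity of the aligned drawing $D(G_\phi)$ matters. Clauses at one variable are nested or placed side by side in the lower half-plane, and their gadget vertices are ordered along the spine according to this nesting. Using this order, I check the cross-gadget crossings between $\bar{C}_1,\bar{C}_2$ of one clause and $\bar{C}'_1,\bar{C}'_2$ of another, as well as between the clause pairs and the other variable's gadget. I then verify that each resulting conflict is either between pairs of the same type, or goes from an above pair to a crossing or below pair, or from a crossing pair to a below pair, with the single exception of $\bar{C}_1\to\bar{C}'_2$ at a shared variable. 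Clauses attached to disjoint variables occupy disjoint, non-interleaving spine intervals, or intervals nested without alternation. Their fans therefore do not cross, and they contribute no conflicts.

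The main obstacle is making sure the case list is exhaustive. That requires a precise coordinate scheme: the order of the apexes of $\bar{C}_1,\bar{C}_2$ relative to the boundary edges of $x_i,\bar{x}_i,x_j,\bar{x}_j$ and relative to the gadgets of nested clauses. I would fix this scheme explicitly by inserting each clause's vertices immediately inside the interval spanned by its variables, ordered by nesting depth. I would then perform the alternation check once per configuration type; by symmetry between left and right extensions, the three figures cover all cases.
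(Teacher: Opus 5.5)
Your proposal is correct and matches the paper's argument. The paper establishes the lemma by inspecting the explicit gadget layouts of \cref{fig:clauseneg} and \cref{fig:conflictneg}: it checks whether fan endpoints alternate along the spine, first for a single clause and then for pairs of clauses sharing a variable in the both-sides, right, and left configurations. Your inclusion of $\bar{C}_1\leftrightarrow\bar{C}_2$, which the text asserts but the statement's bidirected list omits, is right.

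Drop the appeal to left--right symmetry, though. Reflecting the spine sends above, below, and crossing pairs to their unnamed mirrored counterparts, and $\bar{C}_1$ (below) and $\bar{C}_2$ (crossing) play different roles. So the left- and right-extending configurations must each be checked directly, as the paper does with separate figures.
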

	
	As for the positive variable gadget, we want to avoid directed conflicts from the last point of \cref{obs:conflict2}, as they are not covered by \cref{lem:acyclic}. Again, we handle this by our convention on breaking ties between $\bar{C}_1$ and $\bar{C}_2$. Whenever we could put $\bar{C}_1$ and $\bar{C}_2$ from the same negative clause gadget into $S$, we choose $\bar{C}_1$. That way, if we have one $\bar{C}_1$ below pair of a clause in $S$, then we will never put some $\bar{C}'_2$ crossing pair from a different clause, but at the same variable, in $S$. Since $\bar{C}_1$ is in $S$, so is $\bar{x}_i$, therefore, we could add $\bar{C}'_1$ and would do so by our convention.
	
	\subparagraph{Combining positive and negative clause gadgets.} Finally, we consider the conflicts that can occur between different clause gadgets, where one is positive and the other is negative. Again, the objective is to make sure that our constructed set $S$ fulfills the second condition of \cref{lem:acyclic}.
	
	The easy case occurs whenever the two variables that are contained in one clause are distinct from the two variables that are contained in the second clause. We observe that every pair of triangles in a clause gadget has one \emph{short} triangle that has all its vertices within the region of one variable gadget. The other \emph{long} triangle determines conflicts. For a conflict to exist, the long triangles of the pairs have to cross. In the positive clauses the long triangles are in the initial triangulation. In the negative clauses the long triangles are in the target triangulation. For an illustration see \cref{fig:uddisjoint}. Therefore, all conflicts go from triangle pairs in positive clause gadgets to triangle pairs in negative clause gadgets. We observe the following conflicts:
	
	\begin{lemma} \label{obs:conflict3}
		Between pairs of triangles in positive clause gadgets and negative clause gadgets, where the clauses contain distinct variables, the occurring conflicts are:
		\begin{itemize}
						\setlength\itemsep{0em}
			\item Directed conflicts between crossing pairs.
			\item Directed conflicts from above pairs to crossing pairs.
			\item Directed conflicts from above pairs to below pairs.
			\item Directed conflicts from crossing pairs to below pairs.
		\end{itemize}
	\end{lemma}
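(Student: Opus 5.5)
Here is how I would prove \cref{obs:conflict3}. The idea is to reduce everything to the relative position of the long triangles of the two clause gadgets along the spine, and to use the aligned drawing $D(G_\phi)$ to restrict which positions can occur.

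\textbf{Step 1: orientation of all conflicts.} Let the positive clause use variables $x_i,x_j$ and the negative clause use $x_k,x_l$, with $\{i,j\}\cap\{k,l\}=\emptyset$. A conflict from a pair $(\Delta_a,\Delta'_a)$ to a pair $(\Delta_b,\Delta'_b)$ requires that the fan $\Lambda(\Delta_a)$ crosses the fan $\Lambda(\Delta'_b)$. Each short triangle has all its vertices inside one variable gadget. For two different variables these vertex sets are disjoint intervals of the spine. So an edge of a short triangle can only cross an edge that has an endpoint inside that same interval. Every fan edge has one endpoint on the spine edge of its pair, and that spine edge lies inside the variable gadget of its short triangle.

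Since the two clauses share no variable, the fans of one gadget have no endpoints inside the other gadget's variable intervals. Hence a crossing must involve the long triangles of both pairs. Long triangles of positive clauses lie in $T$, and long triangles of negative clauses lie in $T'$. Therefore every conflict goes from the positive pair to the negative pair. In particular, no conflict ever goes from a negative clause pair to a positive clause pair.

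\textbf{Step 2: types of the pairs.} By the gadget construction, the positive clause pairs are $C_1$ (crossing) and $C_2$ (above). The negative clause pairs are $\bar C_1$ (below) and $\bar C_2$ (crossing). The possible directed conflicts are therefore $C_2\to\bar C_1$ (above to below), $C_2\to\bar C_2$ (above to crossing), $C_1\to\bar C_2$ (crossing to crossing) and $C_1\to\bar C_1$ (crossing to below). These are exactly the four items listed in the statement.

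\textbf{Step 3: each conflict is realisable, if we want more than an upper bound.} I would check, via the alternation criterion on spine coordinates, that each of these conflicts can actually occur when the long triangles interleave. This amounts to a case split on the relative order of the intervals $[x_i,x_j]$ and $[x_k,x_l]$. By planarity of $D(G_\phi)$, these intervals are either disjoint or one is nested in the other. In the disjoint case the long triangles do not cross, so there are no conflicts at all. In the nested case the fans of the outer clause's long triangle span over the inner one. The alternation pattern then depends only on whether the apex of the long triangle lies on the left or the right.

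\textbf{Main obstacle.} The key claim is that the fan of a short triangle cannot cross a fan from the other clause. I expect this is where the argument needs the most care, because the two clauses may attach to variable gadgets that are adjacent on the spine. I would handle it by noting that, by construction, each variable gadget's vertices form a contiguous block of the spine, and that fans of short triangles stay inside their block.
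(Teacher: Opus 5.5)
Your Steps~1 and~2 are correct and follow the paper's own argument. Short triangles stay inside one variable block, so a conflict between the two gadgets needs their long triangles to cross. These lie in $T$ for the positive gadget and in $T'$ for the negative one, so every conflict goes from $C_1$ (crossing) or $C_2$ (above) to $\bar{C}_1$ (below) or $\bar{C}_2$ (crossing). That gives exactly the four listed types. This upper-bound reading is all the reduction needs, because the lemma only feeds into \cref{lem:acyclic}.

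You can also avoid your unproved claim about where long-triangle apexes sit. By definition, a conflict from a negative pair $N$ to a positive pair $P$ compares $\Lambda(\Delta_N)$ with $\Lambda(\Delta'_P)$. Both fans are short, so the disjoint-blocks argument alone rules this direction out. The opposite direction always compares two long fans, so a mixed short/long case never arises.

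Step~3, however, is wrong and should be dropped or redone.
\begin{itemize}
\item \emph{Planarity gives no constraint here.} Planarity of $D(G_\phi)$ constrains nothing between a positive and a negative clause. One is drawn above $L$ and the other below, so their edges can never cross. For instance, $x_1\lor x_3$ above and $\bar{x}_2\lor\bar{x}_4$ below is a legal aligned drawing whose intervals interleave.
\item \emph{The nested case has no conflicts.} Every fan edge of the inner gadget has both endpoints weakly between the two endpoints of each long fan edge of the outer gadget. Hence no pair of edges alternates along the spine.
\item \emph{Conflicts come from interleaving.} Conflicts between a positive and a negative gadget on distinct variables arise exactly in the interleaving case you excluded, which is the situation of \cref{fig:uddisjoint}. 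There the positive long fans join the $x_1$ and $x_3$ blocks and the negative long fans join the $x_2$ and $x_4$ blocks, so they alternate.
\end{itemize}
As written, your case analysis would conclude that no such conflicts ever occur. If you want realisability, split on whether the two clause intervals interleave.
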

	
	\begin{figure}[htb]
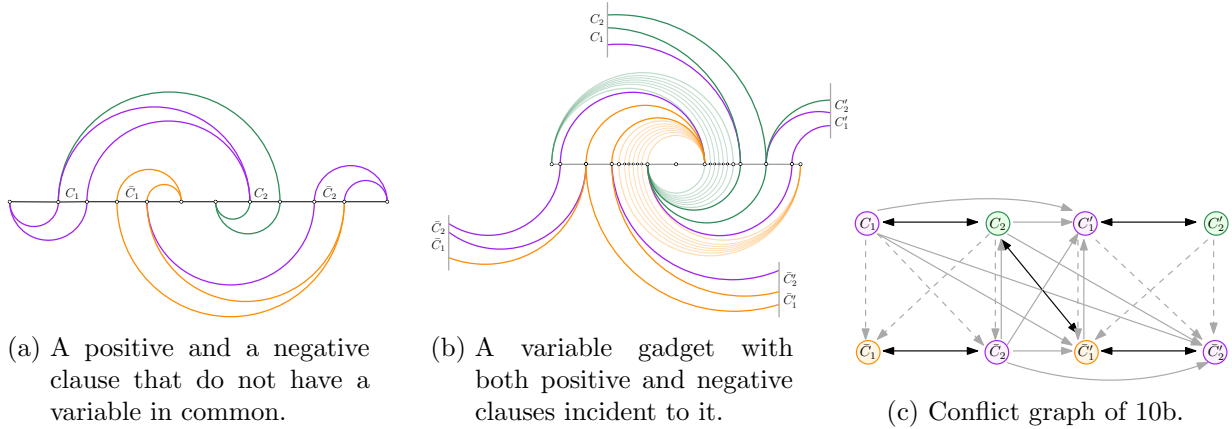

		\centering
		\begin{subfigure}{0.3\textwidth}
			\centering
			\includegraphics[scale=0.4,page=52]{Gadgets}
			\caption{A positive and a negative clause that do not have a variable in common.}
			\label{fig:uddisjoint}
		\end{subfigure}
		\hfill
		\begin{subfigure}{0.3\textwidth}
			\centering
			\includegraphics[scale=0.4,page=17]{Gadgets}
			\caption{A variable gadget with both positive and negative clauses incident to it.}
			\label{fig:clauses}
		\end{subfigure}
		\hfill
		\begin{subfigure}{0.3\textwidth}
			\centering
			\includegraphics[scale=0.55,page=18]{Gadgets}
			\caption{Conflict graph of \ref{fig:clauses}.}
			\label{fig:cconflicts}
		\end{subfigure}
		\caption{Mixing positive and negative clauses: (a) With disjoint variables; (b) and (c) at the same variable.}
	\end{figure}
	
	We observe that all the conflicts in \cref{obs:conflict3} are in accordance to the second condition of \cref{lem:acyclic}.
	
	Finally, for conflicts between positive and negative clauses that share a variable, all options are covered in \cref{fig:clauses}. The resulting conflict graph is depicted in \cref{fig:cconflicts}. Some conflicts are depicted via dashed arrows, this indicates that the conflict might or might not occur depending on how the clauses continue outside the drawing.
	
	\begin{observation}\label{obs:conflicts4}
		The conflicts that occur when mixing positive and negative clauses at a variable are
		\begin{itemize}
						\setlength\itemsep{0em}
			\item Bidirected conflicts between pairs of the form $C'_1$ and $\bar{C}'_1$, $C_2$ and $\bar{C}_2$, and $C_2$ and $\bar{C}'_1$.
			\item (Potential) directed conflicts from above pairs to crossing pairs ($C'_2 \to \bar{C}'_2$, $C_2\to\bar{C}_2$).
			\item (Potential) directed conflicts from above pairs to below pairs ($C_2\to\bar{C}_1$, $C'_2\to\bar{C}'_1$).
			\item Directed conflicts from crossing pairs to below pairs ($C_1\to \bar{C}'_1$, $C_1\to\bar{C}_1$, $C'_1\to\bar{C}'_1$).
			\item Directed conflicts between crossing pairs.
			\item Directed conflicts from the below pair $\bar{C}'_1$ to the crossing pair $C'_1$ and from the crossing pair $\bar{C}_2$ to the above pair $C_2$.
		\end{itemize}
	\end{observation}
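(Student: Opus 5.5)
The plan is to check the claimed conflict list directly from the definition of the conflict graph. A pair $P$ points to a pair $Q$ exactly when the fan edges of the initial triangle of $P$ cross the fan edges of the target triangle of $Q$ in the linear representation. Two chords cross precisely when their endpoints alternate along the spine, so every entry of the list becomes a comparison of the positions of finitely many vertices.

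First I fix coordinates for the configuration of \cref{fig:clauses}. A variable gadget for $x_i$ occupies an interval of the spine. Inside it sit the above pair $x_i$ and the below pair $\bar{x}_i$. The short triangles of the clause pairs attached at this variable also sit inside it. These are $C_1,C_2$ from a positive clause with the variable on one side, $C'_1,C'_2$ from a positive clause extending to the other side, and $\bar{C}_1,\bar{C}_2,\bar{C}'_1,\bar{C}'_2$ analogously for negative clauses. Each long triangle reaches to another variable interval. For positive clauses the long triangle is the one in $T$ (above the spine); for negative clauses it is the one in $T'$ (below the spine). I write each fan edge as a chord $uv$ whose endpoints either lie inside this interval or lie to its left or right, at a position only known up to that side.

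The main step is a case analysis over ordered pairs $(P,Q)$ with $P$ a positive-clause pair and $Q$ a negative-clause pair at the common variable, and conversely. For each ordered pair I test whether a chord in $\Lambda(\Delta_P)$ alternates with a chord in $\Lambda(\Delta'_Q)$. Two observations prune the work.
\begin{itemize}
\item Chords whose endpoints are both inside the interval are determined completely by the gadget's local order.
\item A chord leaving the interval is determined only by which side it exits. The conflict is then forced when the other chord has exactly one endpoint strictly between the first chord's inside endpoint and the exit side. When the outcome depends on where the far endpoints lie relative to each other, I record a potential (dashed) conflict.
\end{itemize}
For each conflict found, I classify the endpoint pairs as above, below or crossing using the definitions of \cref{fig:pairs}. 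This places every conflict in one of the bullets. In particular, it should confirm the three bidirected conflicts listed, together with the two exceptional directions $\bar{C}'_1\to C'_1$ and $\bar{C}_2\to C_2$.

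Finally, I must show that no other conflicts occur. For each remaining ordered pair I exhibit the nesting or disjointness of chord endpoints that rules out alternation. The main obstacle is this completeness check: the positions of long-triangle endpoints outside the gadget are not fixed. I would handle it by arguing that only the side (left or right) of an outer endpoint matters, except in the dashed cases, and that the planarity of the aligned drawing $D(G_\phi)$ makes clause chords on the same side of $L$ nest rather than interleave. The number of cases stays finite, so the check can be completed by exhausting the figure.
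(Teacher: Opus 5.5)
Your proposal is correct and takes the paper's approach. The paper gives no argument beyond inspecting the configurations in \cref{fig:clauses} and reading off the conflict graph in \cref{fig:cconflicts}, where dashed arrows mark the conflicts that depend on where the long triangles end outside the drawing; that is your endpoint-alternation check with its ``potential'' category. One small remark: the appeal to planarity of $D(G_\phi)$ neither is needed nor would help here. The only undetermined cases compare long $T$-chords of positive clauses with long $T'$-chords of negative clauses, which lie in opposite half-planes of the aligned drawing, so their relative order is exactly what the dashed arrows leave open.
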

	
	Conflicts of the first and last point occur between a triangle pair that is only added to $S$ if the variable has assignment \textsc{True} and another triangle pair that requires that the same variable is set to \textsc{False}. Therefore, these conflicts will not occur between triangle pairs in $S$. All other conflicts are allowed by the second condition of \cref{lem:acyclic}.
	
	\subparagraph{The reduction.} We combine all gadgets to one instance. Consider a given instance $\phi$ of \textsc{Planar Separable Monotone Max-2SAT} on $\lvert X \rvert$ variables and $\lvert C \rvert$ clauses with an aligned drawing $D(G_\phi)$ of the clause variable incidence graph $G_\phi$ and an integer $k'$. We construct two triangulations $T$ and $T'$ with their linear representation according to $D(G_\phi)$. First, we place variable gadgets in the linear representation. The gadgets from left to right represent the variables in the same order as they appear on the horizontal separating line in $D(G_\phi)$. Further, we place the variable gadgets such that the rightmost vertex of one variable gadget coincides with the leftmost vertex of the variable gadget immediately to its right.
	
	\begin{figure}[htb]
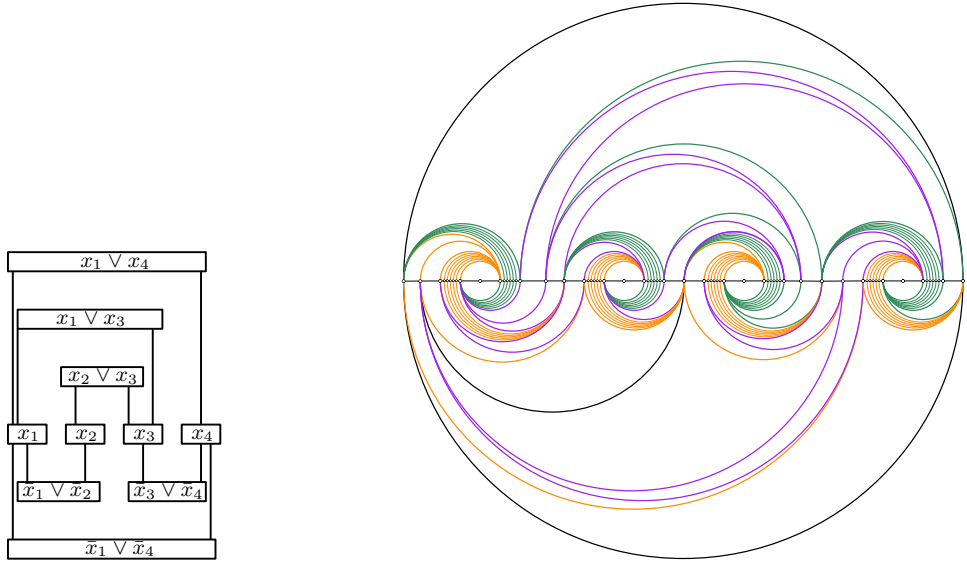

		\centering
		\begin{subfigure}{0.45\textwidth}
			\centering
			\includegraphics[scale=0.9,page=19]{Gadgets}
			\label{fig:max2sat}
		\end{subfigure}
		\begin{subfigure}{0.45\textwidth}
			\centering
			\includegraphics[scale=0.39,page=20]{Gadgets}%
			\label{fig:reduction1}
		\end{subfigure}
		\hfill
		\caption{A planar \textsc{2SAT} formula in an aligned drawing (left) and two triangulations $T$ and~$T'$ that result from the reduction (right).}
		\label{fig:reduction}
	\end{figure}
	
	Next, we add a clause gadget for every clause. The drawing of the clause incidence graph $D(G_\phi)$ determines the positioning of the gadgets. We say that a clause $C_1$ \emph{covers} another clause $C_2$ if in $D(G_\phi)$ $C_2$ is contained in the bounded face whose boundary is formed by the horizontal spine containing the variables, the rectangle representing $C_1$ and the two clause-variable-incidence edges at $C_1$. For a positive clause $C_2$ that is covered by a positive clause $C_1$, we can arrange the positive gadgets to resemble this drawing. We do so by placing the clause gadget of $C_2$ in the face that is bounded by the spine vertices from below and the above and crossing pair pair of $C_1$ from above. In a symmetric way, we can describe the placement of negative clause gadgets.
	
	Further, we add the clauses such that we have no ``gaps" on the spine between the constructed triangles. The left vertex of the pairs coming from $x_i$ is the same vertex as the left vertex of the first triangle of a negative clause gadget. The right vertex of the triangle of one negative clause gadget is the left vertex of the negative clause gadget that comes immediately after in the linear representation. The right vertex of the last negative clause gadget coincides with the left vertex of the triangles coming from $\bar{x}_i$. We repeat this for the second half of the gadget for the positive clause gadgets in a similar manner. That way, all edges of the spine are part of some triangle and $\Gamma$ is fully determined by the variable gadgets and clause gadgets.
	
	Since, $G_\phi$ is planar, we end up with two plane graphs, one above the spine of the linear representation and one below. We arbitrarily extend both of them to triangulations. For an illustration, see \cref{fig:reduction}.
	
	\begin{proposition} 
		Let $\phi$ be a planar separable monotone 2\textsc{SAT} formula with $\lvert X \lvert$ variables and $\lvert C \rvert$ clauses, together with a clause-variable incidence graph $G_\phi$ and an aligned drawing $D(G_\phi)$. Further, let $T$ and $T'$ be the triangulations that are obtained from $D(G_\phi)$ by applying the described reduction. For a given integer $k'$ the following statements are equivalent:
		\begin{itemize}
			\item $\phi$ has a variable assignment that fulfills at least $k'$ clauses.
			\item The conflict graph $H$ of $T$ and $T'$ with their linear representation has an acyclic subset $S$ of size $\lvert X \rvert (\lvert C \rvert+1)+k'$.
		\end{itemize}
	\end{proposition}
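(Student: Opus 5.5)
We prove the two implications separately. For the forward direction, fix an assignment satisfying at least $k'$ clauses and build $S$ exactly as prescribed in the gadget descriptions. For every variable $x_i$ we include all $\lvert C\rvert+1$ pairs of $x_i$ if it is \textsc{True}, and all pairs of $\bar{x}_i$ otherwise. For every satisfied positive clause we include $C_2$ if $x_j$ is \textsc{True}, and otherwise $C_1$. For every satisfied negative clause we include $\bar{C}_1$ if $\bar{x}_i$ holds, and otherwise $\bar{C}_2$. If more than $k'$ clauses are satisfied we simply drop clause pairs until exactly $k'$ remain, since subsets of acyclic sets are acyclic. This gives $\lvert S\rvert=\lvert X\rvert(\lvert C\rvert+1)+k'$.

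To show $S$ is acyclic we verify the second condition of \cref{lem:acyclic}, so it suffices to check that no conflict inside $S$ is a bidirected one or one of the forbidden orientations. We go through \cref{obs:conflict}, \cref{obs:conflict2}, \cref{obs:conflict3} and \cref{obs:conflicts4}. Bidirected conflicts always join a pair requiring a variable to be \textsc{True} with a pair requiring it to be \textsc{False}, or join $C_1$ with $C_2$ of the same clause; neither situation occurs in $S$. The forbidden orientations are crossing$\to$above ($C_1\to C_2'$ at a shared variable), below$\to$crossing ($\bar{C}_1\to\bar{C}'_2$), and the two mixed ones in \cref{obs:conflicts4}. We exclude these with the tie-breaking arguments already given after each lemma, together with the true/false incompatibility for the mixed case. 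Every remaining conflict goes within one type, or from above to crossing, above to below, or crossing to below. So \cref{lem:acyclic} applies.

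For the backward direction, let $S$ be acyclic with $\lvert S\rvert\ge \lvert X\rvert(\lvert C\rvert+1)+k'$. The first step is to normalise $S$ per variable. Because every copy of $x_i$ is in bidirected conflict with every copy of $\bar{x}_i$, $S$ contains copies of at most one of them. Suppose for some $i$ that $S$ misses a copy of $x_i$ and also contains no copy of $\bar x_i$. We then remove from $S$ all clause pairs that conflict with the copies of $x_i$. These are the pairs $\bar{C}_1$ and $\bar{C}_2$ attached to $x_i$, at most $\lvert C\rvert$ of them, since each clause contributes at most one pair to an acyclic set ($C_1\leftrightarrow C_2$). We then add all $\lvert C\rvert+1$ copies of $x_i$, which does not decrease $\lvert S\rvert$. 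If $S$ contains some but not all copies of one literal, we simply complete them.

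The main obstacle is showing that this exchange preserves acyclicity. Adding copies of $x_i$ could close a directed cycle through unidirectional conflicts rather than bidirected ones. One could try to repair that by also discarding offending clause pairs, but we would rather avoid the cycle analysis altogether. The approach we intend to use instead is to not insist that the modified set stay acyclic. We only need the counting inequality: after normalisation $S$ contains at most $\lvert C\rvert+1$ variable pairs per variable, so at least $k'$ of its pairs are clause pairs, i.e.\ at least $k'$ distinct clauses contribute a pair.

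We then read off the assignment: $x_i$ is \textsc{True} iff $S$ contains copies of $x_i$ (after the exchange). We also have to check that each clause contributing a pair is satisfied. A positive pair $C_1$ is incompatible with $\bar{x}_i$ and $C_2$ is incompatible with $\bar{x}_j$. Once every variable has all copies of exactly one literal in $S$, these incompatibilities force the corresponding literal to be \textsc{True}, and symmetrically for negative clauses. This yields at least $k'$ satisfied clauses.
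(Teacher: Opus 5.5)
Your proof is correct. The forward direction is the paper's own argument: the same literal blocks, the same tie-breaking towards $C_2$ and $\bar{C}_1$, and the same appeal to \cref{lem:acyclic}. Trimming to exactly $k'$ clause pairs is a harmless extra.

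The backward direction is also correct, but it takes a detour the paper avoids. The counting you already use shows that no variable gadget can be empty in $S$. If one were, then $x_i\leftrightarrow\bar{x}_i$ and the bound of at most one pair per clause would give $\lvert S\rvert\le(\lvert X\rvert-1)(\lvert C\rvert+1)+\lvert C\rvert=\lvert X\rvert(\lvert C\rvert+1)-1$. The paper therefore reads the assignment directly off $S$: $x_i$ is true iff some copy of $x_i$ lies in $S$, and otherwise some copy of $\bar{x}_i$ does. It gets at least $k'$ clause pairs from the cap of $\lvert C\rvert+1$ variable pairs per gadget. It then derives that each such clause is satisfied from the acyclicity of $S$ itself; for example, $C_1\in S$ excludes every copy of $\bar{x}_i$.

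In your version, the exchange removes clause pairs only in the empty-gadget case, which cannot occur. Because you deliberately give up acyclicity of the modified set $S^*$, your final step also cannot cite acyclicity to exclude, say, $C_1$ together with copies of $\bar{x}_i$ in $S^*$. You need one more sentence covering three facts:
\begin{itemize}
\item the clause pairs of $S^*$ are among those of $S$;
\item copies of $\bar{x}_i$ in $S^*$ force a copy of $\bar{x}_i$ in $S$;
\item any copies of $x_i$ in $S^*$ either had a representative in $S$, or were inserted into an empty gadget only after removing the incompatible $\bar{C}_1,\bar{C}_2$.
\end{itemize}
In the representative case, acyclicity of $S$ together with the fact that all copies share their conflicts excludes the incompatible pairs.

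With that added, your route yields a normal form with full literal blocks. The paper's route is shorter and never leaves the acyclic set.
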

	
	\begin{proof}
		Assume $\phi$ has a variable assignment that fulfills at least $k'$ clauses. We construct an acyclic subset $S$ of the conflict graph of the prescribed size. For every variable $x_i$ we add all of the triangles labeled $x_i$ to $S$ whenever $x_i$ has truth assignment \textsc{True} and all triangles labeled~$\bar{x}_i$ otherwise. For every fulfilled clause we add $C_1$ or $C_2$, respectively $\bar{C}_1$ or $\bar{C}_2$ to $S$, depending on which of the two incident variables is satisfied. Recall that we break ties in favor of $C_2$ and~$\bar{C}_1$. The resulting set $S$ contains $\lvert C \rvert+1$ pairs for each of the $\lvert X \rvert$ variables and at least $k'$ pairs from the fulfilled clauses. This gives us a total of $\lvert X \rvert(\lvert C \rvert+1)+k'$ pairs. By Lemma \ref{obs:conflict}, \ref{obs:conflict2}, \ref{obs:conflict3}, and~\ref{obs:conflicts4} in addition to \cref{lem:acyclic}, $S$ is acyclic. The ordering that witnesses the acyclicity takes first all above pairs, then all crossing pairs then all below pairs in $S$. Within pairs of the same type we sort the pairs by their edge on the spine from left to right. 
		
		Now assume the conflict graph has an acyclic subset $S$ of size at least $\lvert X \rvert(\lvert C \rvert+1)+k'$. Then it has to contain at least one pair of triangles from each variable gadget or otherwise, we miss out on $\lvert C \rvert+1$ triangle pairs, which makes it impossible to obtain the desired cardinality of $S$. To each variable $x_i$ assign \textsc{True} if one of the pairs of triangles labeled $x_i$ is in $S$ and \textsc{False} otherwise. This truth assignment leads to at least~$k'$ clauses being satisfied as witnessed by the at least~$k'$ pairs of triangles that have to come from clause gadgets and that encode truth assignments that lead to the clause being fulfilled.
	\end{proof}
	
	This completes the proof of \cref{prop:achard}.

	\section{Conflict Graph based Upper Bounds on the Flip Distance}\label{sec:flipsequence} 
	
	Throughout this section, we again consider triangulations $T$ and $T'$ and their resulting $\beta$-blow-ups $\beta T$ and $\beta T'$. $H=H(T,T')$ describes the conflict graph (with vertex set $\Gamma$) of $T$ and $T'$, and $ac(H)$ the size of a maximum acyclic set $S$ of $H$. We describe how to construct a flip sequence from $\beta T$ to $\beta T'$ that witnesses the proof of \cref{prop:upper}.
	
	Let $R = \Gamma \setminus S$. Elements of $S$ will be dealt with in the \emph{direct phase}, and elements of $R$ will be dealt with in the \emph{indirect phase}.
	
	Roughly speaking, for each pair in $R$, the flip sequence uses two flips per fan edge. For each pair in $S$ the flip sequence uses only one flip per fan edge, after some relatively small number of ``set-up'' flips.
	
	The indirect phase consists of a \emph{beginning} sequence of $\beta\lvert R \rvert$ flips that transform $T$ to $T_R$, and an \emph{ending} sequence of $\beta \lvert R\rvert$ flips, which, in reverse, transforms $T'$ to $T'_R$. For each pair $(\Delta,\Delta')$ in $R$ we perform one flip on each edge of the fan $\Lambda(\Delta)$ in the beginning sequence, and one flip on each edge of the fan $\Lambda(\Delta')$ in the ending sequence. For one example pair, we illustrate these flips and their order in \cref{fig:indirect}. Different pairs in $R$ are handled independently, and their relative order does not matter. The edges resulting from the indirect phase, denoted by $\Lambda(\Delta\Delta')$ connect vertices from $V(\Delta)$ and in particular do not cross any edges from other triangle pairs. This can be interpreted as removing $(\Delta,\Delta')$ and all its incident conflicts from $H$.
	
	The direct phase transforms $T_R$ into $T'_R$. It operates on pairs in $S$ in an ordering $(\Delta_i,\Delta'_i)$ for $i=1,\ldots,ac(H)$, where no conflict edge goes from a higher index pair to a lower index pair. For each pair $(\Delta_i,\Delta'_i)\in S$, we first perform up to $2n-6$ ``set-up'' flips, and are then able to flip each edge of $\Lambda(\Delta_i)$ directly to an edge of $\Lambda(\Delta'_i)$, as shown in~\cref{fig:direct}. This is where the work is done to create short flip sequences. It is known that flipping every edge twice is sufficient to go from one triangulation to another~\cite[Lemma 2]{sleator1986rotation}. Spending only one flip per edge on a large set of edges is where we save flips.
	
	A high level overview of the two phases is depicted in \cref{fig:schema}.
	
	\begin{figure}[t]
		\centering
		\includegraphics[scale=0.8,page=55]{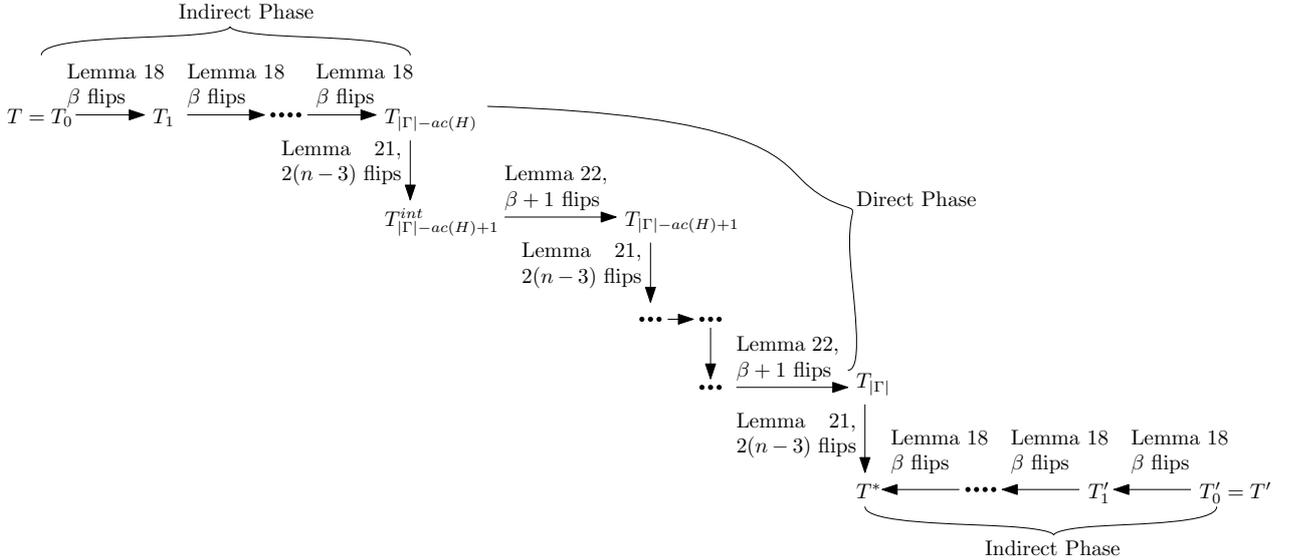}
		\caption{Schematic high level overview of the flip sequence.}
		\label{fig:schema}
	\end{figure}
	
	We will now explain the two phases in further detail.
	
	\subparagraph{Indirect Phase.} We perform the following flip sequence for every triangle pair $(\Delta,\Delta')\in R$. For this, we denote the vertices of $\Delta$ by $v_{1}$, $v_{2}$, and $v_{3}$ such that $v_{2}$ is the vertex that is not incident to the spine edge of $\Delta$. Further, denote the vertices of $\Delta'$ by $v_{1}$, $v'_{2}$, and $v_{3}$.
	
	Let $P$ be the polygon that is bounded by $V^O(\Delta)$, $v_{1}$, $v_{2}$, and $v_{3}$. Similarly, $P'$ is the polygon that is bounded by $V^O(\Delta')$, $v_{1}$, $v'_{2}$, and $v_{3}$. 
	
	The part of the beginning (resp. end) phase that is relevant for $(\Delta,\Delta')$ happens entirely in $P$ (resp. $P'$). Inside $P$ and $P'$, we flip to triangulations $T_P$ and $T_{P'}$ that have all vertices incident to $v_1$. In particular, all interior edges of $T_P$ and $T_{P'}$ will coincide as they connect $v_1$ to the same set of vertices. As mentioned, we denote the collection of these edges as $\Lambda(\Delta\Delta')$.
	
	We now analyse, how many flips it takes to build $T_P$ and $T_{P'}$.
	
	\begin{lemma}\label{lem:weg}
		Any triangulation of $P$ (resp. $P'$) can be flipped to $T_P$ (resp. $T_{P'}$) in at most $\beta$ flips.
	\end{lemma}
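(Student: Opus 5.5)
The polygon $P$ has vertices $v_1, v_3$, the $\beta$ vertices of $V^O(\Delta)$ lying between them on the spine, and $v_2$. That is $\beta+3$ vertices in total, so every triangulation of $P$ has exactly $\beta$ interior edges. The plan is the standard fan argument: count how many interior edges of the current triangulation are not yet incident to $v_1$, and show that each can be fixed by one flip that creates a new edge at $v_1$. Since $T_P$ is the fan at $v_1$ (all $\beta$ interior edges incident to $v_1$), we can measure progress by the number of interior edges incident to $v_1$. This number starts at some value $m \ge 0$ and must end at $\beta$.

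The key step is the following claim. If the current triangulation of $P$ is not yet $T_P$, then some interior edge not incident to $v_1$ can be flipped into an edge incident to $v_1$. To find it, consider the triangles incident to $v_1$. They form a fan around $v_1$ whose consecutive neighbours $w_0, w_1, \dots, w_t$ run along the boundary of $P$ from one neighbour of $v_1$ on $\partial P$ to the other. If not every vertex of $P$ is adjacent to $v_1$, then some edge $w_s w_{s+1}$ of this fan is not a boundary edge of $P$, so it is an interior edge not incident to $v_1$.

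Let $v_1 w_s w_{s+1}$ and $w_s w_{s+1} u$ be the two triangles containing this edge. In a convex polygon, flipping $w_s w_{s+1}$ replaces it by $v_1 u$. This raises the number of interior edges at $v_1$ by exactly one, and edges already at $v_1$ are never removed. Hence at most $\beta - m \le \beta$ flips reach $T_P$.

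The only point needing care is that $P$ really is a convex polygon, so that every flip is legal. $P$ is bounded by consecutive polygon vertices together with the chords $v_1v_2$ and $v_2v_3$ of $\Delta$. Its vertices are in convex position on the circle, so any quadrilateral of two adjacent triangles in $P$ is convex. Moreover, $P$ is a union of triangles of the current triangulation: its boundary edges $v_1v_2$ and $v_2v_3$ are present and are never flipped during this phase, and different pairs in $R$ use disjoint polygons $P$. So flips inside $P$ are flips of the whole triangulation, and the count above is valid. The case of $P'$ and $T_{P'}$ is identical, using $v_1, v_3, v'_2$ and $V^O(\Delta')$. I expect no real obstacle; the main thing to state carefully is that the boundary chords of $P$ stay present throughout.
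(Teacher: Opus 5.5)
Your proof is correct and is essentially the paper's argument. The paper simply cites Lemma~2 of \cite{sleator1986rotation} for the fact that one flip can always increase the number of edges at a fixed vertex ($v_1$) by one, which is exactly the fan-at-$v_1$ argument you spell out, and this gives at most $\beta$ flips since $P$ has $\beta+3$ vertices and hence $\beta$ interior edges.
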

	
	\begin{proof}
		As for example argued in~\cite[Lemma 2]{sleator1986rotation} it is always possible to increase the number of edges that are incident to a given, fixed vertex by $1$ in every flip. Therefore, it takes at most $\beta$ flips to flip triangulations $T_P$ and $T_{P'}$ that have all $\beta$ interior edges incident to $v_1$.
	\end{proof}
	
	\begin{figure}[htb]
		\centering
		\includegraphics[scale=0.8,page=21]{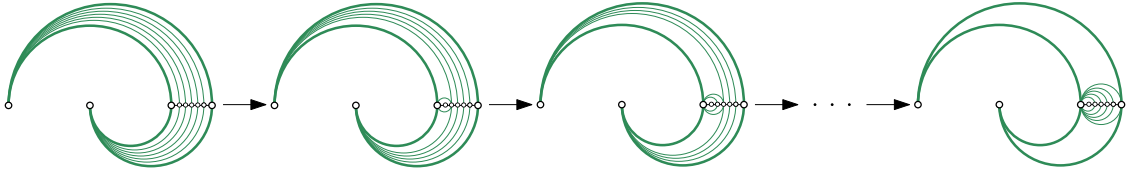}
		\caption{The indirect phase executed on an above pair.}
		\label{fig:indirect}
	\end{figure}
	
	In one iteration of the direct phase, we spend $2\beta$ flips to obtain triangulations that have $\beta$ edges more in common than their predecessors. The interiors and edge sets of the involved polygons in $T$ are disjoint. The same holds for $T'$. Therefore, all the triangle pairs in $R$ can be handled independently in any order. In total, we spent $2\beta(\lvert \Gamma \rvert-ac(H))$ flips to obtain $T_R$ and $T'_{R}$. Furthermore, all edges of pairs in $R$ now coincide between $T_R$ and $T'_{R}$. As such, these edges do not cross any fan edges of any of the pairs in $S$. Since we did not introduce any new triangle fans that may cross existing triangle fans, the remaining conflict graph has as vertices only the triangle pairs of $S$. We collect our findings from the analysis of the indirect phase in the following lemma.
	
	\begin{lemma}\label{lem:indirect}
		The shortest flip sequence from $T$ to $T'$ is at most $2\beta(\lvert \Gamma \rvert - ac(H))$ flips longer than the shortest flip sequence from $T_{R}$ to $T'_{R}$. The conflict graph $H(T_R,T'_{R})$ has as vertex set exactly the triangle pairs in $S$ and is as such acyclic.
	\end{lemma}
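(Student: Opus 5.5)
The plan has two parts, matching the two claims of the lemma: first a path through the flip graph bounding the distance, then a check of which triangle pairs survive in the conflict graph of $T_R$ and $T'_R$.

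\emph{Counting flips.} Any flip sequence from $T_R$ to $T'_R$ can be wrapped into one from (the blown-up) $T$ to $T'$: first the beginning sequence from $T$ to $T_R$, then the given sequence, then the reverse of the ending sequence from $T'_R$ to $T'$. It therefore suffices to bound the lengths of the beginning and ending sequences.

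For each pair $(\Delta,\Delta')\in R$, the only edges of $T$ inside the polygon $P$ are the $\beta$ fan edges $\Lambda(\Delta)$. Together they form a triangulation of $P$. By \cref{lem:weg}, at most $\beta$ flips inside $P$ turn this into $T_P$. Symmetrically, at most $\beta$ flips inside $P'$ turn the restriction of $T'$ to $P'$ into $T_{P'}$.

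These local sequences must not interfere with one another. I would argue this as follows.
\begin{itemize}
\item For distinct pairs in $R$, the polygons $P$ in $T$ have pairwise disjoint interiors, since they are sub-polygons of distinct triangles of $T$. The same holds for the polygons $P'$ in $T'$.
\item Each flip is confined to a quadrilateral inside one such polygon. It therefore leaves all other edges unchanged.
\end{itemize}
Concatenating the local sequences gives at most $\beta|R|$ flips from $T$ to $T_R$, and likewise from $T'$ to $T'_R$. Since $|R|=|\Gamma|-ac(H)$, the total overhead is at most $2\beta(|\Gamma|-ac(H))$, which is the first claim.

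\emph{Conflict graph of $T_R$ and $T'_R$.} Here I would go through the definition of $\Gamma$ for the new pair of triangulations.
\begin{itemize}
\item \emph{Pairs in $R$ disappear.} For $(\Delta,\Delta')\in R$, every triangle of $T_R$ inside $P$ is incident to $v_1$ and uses edges of $\Lambda(\Delta\Delta')$. These same edges are present in $T'_R$. So every spine edge in $V(\Delta)$ lies in identical triangles above and below. A pair sharing exactly one edge, which is a spine edge, is then no longer created; the triangles share more than the spine edge.
\item \emph{Pairs in $S$ are unchanged.} Triangles away from all $P$ and $P'$ are unchanged, so the pairs of $S$ persist with the same fans $\Lambda(\Delta)$ and $\Lambda(\Delta')$. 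Their mutual crossings, and hence the conflicts among them, are exactly as in $H$.
\item \emph{No new conflicts arise.} The new edges $\Lambda(\Delta\Delta')$ join $v_1$ to vertices of $V(\Delta)$. All such vertices lie on the spine between $v_1$ and $v_3$. Any fan of another pair has its endpoints outside this interval, or in it only as the endpoints $v_1,v_3$. Its edges therefore do not separate $v_1$ from the interior spine vertices, so no crossing occurs.
\end{itemize}
It follows that $H(T_R,T'_R)$ is the subgraph of $H$ induced on $S$, which is acyclic by the choice of $S$.

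\emph{Main obstacle.} I expect the delicate step to be the claim that pairs of $R$ vanish from $\Gamma$. The triangle $v_1v_2v_3$ of $T$ itself has been destroyed, so one must check that the triangle of $T_R$ on the spine edge $v_1u$ (with $u$ the first subdivision vertex) coincides with the corresponding triangle of $T'_R$. If instead some degenerate pair persisted, I would fall back on showing that its fans are empty or shared, so that it contributes no conflicts. Either way, acyclicity follows.
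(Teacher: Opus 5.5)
Your proposal is correct and follows essentially the paper's argument: you run \cref{lem:weg} independently inside the interior-disjoint polygons $P$ and $P'$ of the pairs in $R$ and concatenate to get the $2\beta(\lvert\Gamma\rvert-ac(H))$ overhead, and you then observe that the common fans $\Lambda(\Delta\Delta')$ at $v_1$ remove the pairs of $R$ while leaving the pairs of $S$ and their conflicts unchanged. Your argument is somewhat more explicit than the paper's, and the worry in your last paragraph is already settled by your own bullet: both $T_R$ and $T'_R$ contain the same triangle $v_1u_1u_2$ (or $v_1u_1v_3$ when $\beta=1$) on the spine edge $v_1u_1$.
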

	
	\subparagraph{Direct Phase.} We start working in the triangulation $T_R$ and want to transform it to~$T'_R$. For this purpose $T_{R,i}$ denotes a triangulation that coincides with $T'_R$ in all fans $\Lambda(\Delta_	j)$ for the first $i$ triangle pairs $(\Delta_j,\Delta'_j)\in S$. All other pairs of triangles $(\Delta_j,\Delta'_j)\in S$ with $j>i$ the triangulation $T_{R,i}$ still coincides with $\beta T$, by containing the edges of $\Lambda(\Delta_j)\cup \Delta_j$. Further, throughout the entire direct phase, we never flip edges in triangle pairs in $R$ again. All triangulations in this phase coincide with $T_R$ and $T'_R$ in all pairs $(\Delta,\Delta') \in R$ by containing the edges in $\Lambda(\Delta\Delta')$. The described sets of edges does not fix $T_{R,i}$ entirely. There might $n-3$ edges that are not determined by this description (but it will be clear where they come from in the description of the flip sequence). We denote the set of non-specified edges by $A_{R,i}$.

	We iteratively flip triangulations $T_{R,i-1}$ to triangulations $T_{R,i}$. The flip sequence flips $\beta +1$ edges in $\Lambda(\Delta_i)\cup \Delta_i$ to $\beta+1$ edges of $\Lambda(\Delta'_i)\cup\Delta'_i$ using a single flip per edge. To enable these direct flips, we first need a set-up that does not require too many flips. In our case ``not to many" means only depending on $n$, but not $\beta$. To ensure this, the set-up phase only flips edges among the~$n-3$ interior edges of $T_{R,i-1}$ that are not part of any fan of a triangle pair.
	
	We proceed to describe the details of one set-up of the direct phase. Assume we are given the triangulation $T_{R,i-1}$ that contains all the fan edges as described above. We pick the pair $(\Delta_i,\Delta'_i)$ from $S$, which is a source of $H[S\setminus\{(\Delta_1,\Delta'_1),\ldots,(\Delta_{i-1},\Delta'_{i-1})\}]$. We again denote the vertices of $\Delta_i$ by $v_{i,1}$, $v_{i,2}$, and $v_{i,3}$ such that $v_{i,2}$ is the vertex that is not incident to the edge of $\Delta_i$ on the boundary. We further denote the vertices of $\Delta'_i$ by $v_{i,1}$, $v'_{i,2}$, and $v_{i,3}$.
	
	We construct an \emph{intermediate} triangulation $T_{R,i}^{int}$ by first removing the edges in $A_{r,i-1}$ from~$T_{R,i-1}$. Next, we add edges $v_{i,1}v'_{i,2}$ and $v_{i,2}v'_{i,2}$ to it and call these two edges the \emph{supporting} edges. We verify that the resulting collection of edges is still plane. From the fact that $S$ is an acyclic subset of the conflict graph it follows immediately that for any $i$ the collection $(\Delta'_1\cup \Lambda(\Delta'_1))\cup \ldots \cup (\Delta'_{i-1}\cup \Lambda(\Delta'_{i-1}))\cup (\Delta_{i}\cup \Lambda(\Delta_{i}))\cup \ldots \cup (\Delta_{ac(H)}\cup \Lambda(\Delta_{ac(H)}))$ is crossing free. We only need to check that adding edges $v_{i,1}v'_{i,2}$ and $v_{i,2}v'_{i,2}$ still keeps the graph plane, which we do in \cref{lem:support}.
	
	\begin{figure}[htb]
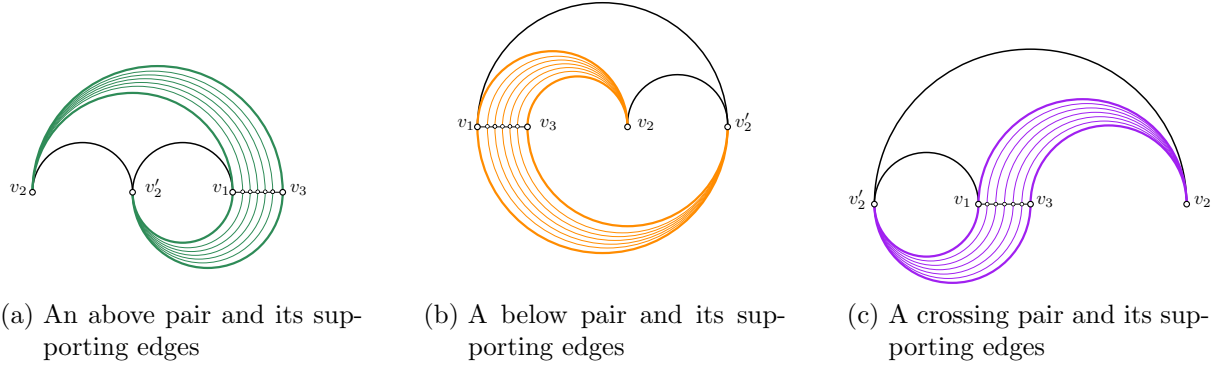

		\centering
		\begin{subfigure}{0.3\textwidth}
			\centering
			\includegraphics[scale=0.7,page=22]{Gadgets}
			\caption{An above pair and its supporting edges}
		\end{subfigure}
		\hfill
		\begin{subfigure}{0.3\textwidth}
			\centering
			\includegraphics[scale=0.7,page=25]{Gadgets}
			\caption{A below pair and its supporting edges}
		\end{subfigure}
		\hfill
		\begin{subfigure}{0.3\textwidth}
			\centering
			\includegraphics[scale=0.7,page=28]{Gadgets}
			\caption{A crossing pair and its supporting edges}
		\end{subfigure}
		\caption{Supporting edges for direct flips for three of the triangle pairs}
	\end{figure}
	
	\begin{lemma}\label{lem:support}
		In the current step of the construction $v_{i,1}v'_{i,2}$ and $v_{i,2}v'_{i,2}$ are not crossed in $T_{R,i}^{int}$.
	\end{lemma}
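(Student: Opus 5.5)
We need to show the two supporting edges $v_{i,1}v'_{i,2}$ and $v_{i,2}v'_{i,2}$ cross no edge of $T^{int}_{R,i}$. By construction, the non-supporting edges of $T^{int}_{R,i}$ are exactly three kinds: boundary and spine edges, the merged fans $\Lambda(\Delta\Delta')$ for pairs in $R$, and the triangle-plus-fan edges $\Delta'_j\cup\Lambda(\Delta'_j)$ for $j<i$ and $\Delta_j\cup\Lambda(\Delta_j)$ for $j\ge i$. Boundary edges cross nothing. So I would argue one class at a time, using that two chords cross iff their endpoints alternate along the spine.

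\emph{Pairs in $R$.} The edges of $\Lambda(\Delta\Delta')$ join $v_1$ to points of $V(\Delta)$, all lying on a single spine edge $e$ of the original polygon. A chord crosses such an edge only if it has an endpoint strictly inside $e$. The supporting edges have endpoints among the original vertices $v_{i,1},v_{i,2},v'_{i,2}$, so they cross nothing here. The same argument handles the edges from the original triangles of $R$-pairs that remain, such as the spine-edge endpoints.

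\emph{Not-yet-processed pairs $j>i$, and the pair $i$ itself.} The edges $v_{i,1}v'_{i,2}$ and $v_{i,2}v'_{i,2}$ are sides of the triangles $\Delta'_i$ and $\Delta_i$ of the original triangulations, or chords inside the quadrilateral they span. For each of the three types I would check this directly from the orderings: above is $v_{i,2}<v'_{i,2}<v_{i,1}<v_{i,3}$, below is $v_{i,1}<v_{i,3}<v_{i,2}<v'_{i,2}$, and crossing is $v'_{i,2}<v_{i,1}<v_{i,3}<v_{i,2}$.

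\begin{itemize}
\item $v_{i,1}v'_{i,2}$ is an edge of $T'$. Since $T$-edges among original vertices and $T'$-edges do not cross one another except through fans, a crossing with a $\Lambda(\Delta_j)$ edge, $j>i$, would mean $\Lambda(\Delta_j)$ crosses an edge of the fan side of $\Delta'_i$. In the blow-up this is a conflict $(\Delta_j,\Delta'_j)\to(\Delta_i,\Delta'_i)$, contradicting that $i$ is a source of the remaining graph.
\item $v_{i,2}v'_{i,2}$ crosses a $\Lambda(\Delta_j)$ edge only if the fan apex or fan base separates $v_{i,2}$ from $v'_{i,2}$. I would show via the type orderings that this again forces $\Lambda(\Delta_j)$ to cross $\Lambda(\Delta'_i)$, since the fan of $\Delta'_i$ sweeps from $v'_{i,2}$ into the spine edge adjacent to $v_{i,1}$. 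This gives the same contradiction.
\item $\Delta_i\cup\Lambda(\Delta_i)$ shares endpoints with the supporting edges, or is nested inside them, so it does not cross them.
\end{itemize}

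\emph{Processed pairs $j<i$.} Here the present edges are $\Delta'_j\cup\Lambda(\Delta'_j)$, which lie in the original $T'$ apart from fan subdivisions. $v_{i,1}v'_{i,2}$ is a $T'$-edge, so it cannot cross $\Delta'_j$. It also cannot cross fan edges of $\Delta'_j$, because those lie inside the triangle $\Delta'_j$, which is a face of $T'$. For $v_{i,2}v'_{i,2}$ I would use the same sweeping argument with roles reversed. A crossing with $\Lambda(\Delta'_j)$ forces $\Lambda(\Delta_i)$ to cross $\Lambda(\Delta'_j)$, i.e.\ a conflict $i\to j$ with $j<i$. This contradicts the ordering of $S$, in which no conflict edge goes from a higher index to a lower one.

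\emph{Main obstacle.} I expect the hard part to be the edge $v_{i,2}v'_{i,2}$, which belongs to neither triangulation. The reduction ``a crossing with it implies a fan–fan crossing'' must be verified separately for each of the three types and for the relative position of the other pair. I would first try a uniform argument: any chord crossing $v_{i,2}v'_{i,2}$ but neither triangle side must enter the region between them, and that region is covered by the fan of $\Delta_i$ together with the fan of $\Delta'_i$. Failing that, I would do the case check on the orderings as in \cref{lem:abc}.
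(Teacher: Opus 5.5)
Your plan is the paper's. You reduce every crossing of a supporting edge to a fan--fan crossing involving pair $i$, then contradict either that $i$ is a source or the ordering/planarity. Your $R$-fan case and your case of $v_{i,1}v'_{i,2}$ against processed pairs (planarity of $\beta T'$) are fine. But the step that carries the lemma is only promised, and the reason in your first bullet is false: edges of $T$ and $T'$ among original vertices cross each other freely, fans or not. The fact you actually need is a propagation property. A chord with no endpoint in $V^O(\Delta_i)$ that crosses a side of $\Delta'_i$ through $v'_{i,2}$ crosses every edge of $\Lambda(\Delta'_i)$, because its endpoint on the $\Delta'_i$ side must lie beyond the subdivided spine edge. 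You need the same property for the triangle sides of $\Delta_j$ and $\Delta'_j$; your treatment of $v_{i,2}v'_{i,2}$ handles only their fans.

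The real gap is $v_{i,2}v'_{i,2}$. The type orderings alone do not force a crossing with $\Lambda(\Delta'_i)$. For an above pair, a chord from a vertex in $(v_{i,2},v'_{i,2})$ to a vertex right of $v_{i,3}$ crosses $v_{i,2}v'_{i,2}$ and misses $\Lambda(\Delta'_i)$ entirely. It is excluded only because it crosses $\Delta_i\cup\Lambda(\Delta_i)$. This is the missing idea, and it is the core of the paper's proof: every edge $f$ of the collection avoids $\Delta_i\cup\Lambda(\Delta_i)$, since those edges belong to the same crossing-free collection. Under that constraint, the ordering check (above case; below and crossing are analogous) puts one endpoint of $f$ in $[v_{i,2},v'_{i,2})$ and the other in $(v'_{i,2},v_{i,1}]$. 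So $f$ crosses all of $\Lambda(\Delta'_i)$, for both supporting edges at once. Then an $f$ from an unprocessed pair gives a conflict into $i$, and an $f$ from a processed pair contradicts planarity of $\beta T'$.

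Your reversed route for $j<i$ (deriving a conflict $i\to j$) is equivalent, but it silently needs planarity of $\beta T'$ to rule out the $\Delta'_i$ side. Your ``uniform argument'' has the right instinct but is misstated: no chord crosses $v_{i,2}v'_{i,2}$ while avoiding both triangle sides. In all three types, $v_{i,2}v'_{i,2}$ is a side of the quadrilateral $v_{i,1}v_{i,2}v_{i,3}v'_{i,2}$, whose diagonals are $v_{i,2}v_{i,1}$ (a side of $\Delta_i$) and $v'_{i,2}v_{i,3}$ (a side of $\Delta'_i$). Phrased that way, it gives a type-free proof. Crossing $v_{i,2}v_{i,1}$ is excluded by the collection, and crossing $v'_{i,2}v_{i,3}$ propagates to all of $\Lambda(\Delta'_i)$ by the property above.
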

	
	\begin{proof}
		We start by observing that $v_{i,1}v'_{i,2}$ and $v_{i,2}v'_{i,2}$ cannot be crossed by any edge in any $\Lambda(\bar{\Delta}_j\bar{\Delta}'_j)$ for any pair $(\bar{\Delta}_j,\bar{\Delta}'_j)\in R$. An edge that crosses an edge in $\Lambda(\bar{\Delta}_j\bar{\Delta}'_j)$ would need one vertex in $V^O(\Delta_j)$, which none of the two edges $v_{i,1}v'_{i,2}$ and $v_{i,2}v'_{i,2}$ have. So it suffices to check for crossings with edges of fans $\Lambda(\Delta_j)$ and $\Lambda(\Delta'_j)$ with $(\Delta_j,\Delta'_j)\in S$. We verify the statement for above, below, and crossing pairs. Note that the three pairs also have a mirrored counterpart. For those the statement follows via a symmetric argument by adapting the order of the vertices.
		
		First, assume $(\Delta_i,\Delta'_i)$ is an above pair. An edge $f$ that crosses $v_{i,1}v'_{i,2}$ or $v_{i,2}v'_{i,2}$, but no edge of $\Lambda(\Delta_i)$, has to have one vertex $u$ with $v_{i,2}<u<v'_{i,2}$ and one vertex $w$ with $v'_{i,2}<w<v_{i,1}<v_{i,3}$. Since all edges of $\Lambda(\Delta'_i)$ have as one vertex $v'_{i,2}$ and another vertex $v_i \in V^O(\Delta_i)$ with $v_{i,1}<v_i<v_{i,3}$, we conclude that $u<v'_{i,2}<w<v_i$ and therefore $f$ crosses any edge in $\Lambda(\Delta'_i)$. If $f$ is in $\Lambda(\Delta_j)\cup \Delta_j$ for some $(\Delta_j,\Delta'_j)\in S$ and $j>i$, then this contradicts that $(\Delta_i,\Delta'_i)$ was a source of $H[S\setminus\{(\Delta_1,\Delta'_1),\ldots,(\Delta_{i-1},\Delta'_{i-1})\}]$, since there is a directed edge $(\Delta_j,\Delta'_j)\to (\Delta_i,\Delta'_i)$. Otherwise, if $f$ is in $\Lambda(\Delta'_j)\cup\Delta'_j$ for some $(\Delta_j,\Delta'_j)\in S$ and $j<i$, then this contradicts the planarity of $\beta T'$. We conclude that neither of the two cases can occur.
		
		If $(\Delta_i,\Delta'_i)$ is a below pair, then $f$ must have one vertex $u$ with $v_{i,2}<u<v'_{i,2}$ and another vertex $w$ with $w<v_{i,1}$ or $v'_{i,2}<w$. For any edge $v_iv'_{i,2}$ in $\Lambda(\Delta'_i)$ we get that in the former case $w < v_{i,1} < v_i < v_{i,3} < v_{i,2} < u <v'_{i,2}$ and in the latter case $v_i < v_{i,3} < v_{i,2} < u < v'_{i,2} < w$. In both cases $f$ crosses all edges of $\Lambda(\Delta'_i)$, because the vertices of the edges appear in alternating order along the boundary. Again, this either contradicts $(\Delta_i,\Delta'_i)$ being a source or $\beta T'$ being planar.
		
		Finally let $(\Delta_i,\Delta'_i)$ be a crossing pair. The edge $f$ has one vertex $u$ with $v'_{i,2} < u < v_{i,1}$ and one vertex $w$ with $w<v'_{i,2}$ or $v_{i,2} < w$. In the former case $w < v'_{i,2} < u < v_{i,1} < v_i$ and in the latter case $u < v_{i,1} < v_i < v_{i,3} < v_{i,2} < w$. In both cases, edges in $\Lambda(\Delta'_i)$ are crossed by $f$ and either $(\Delta_i,\Delta'_i)$ is not a source or $\beta T'$ is not planar, a contradiction.
	\end{proof}
	
	We complete $T_{R,i}^{int}$ arbitrarily to a triangulation by adding less than $n-3$ additional edges. By construction, $T_{R,i-1}$ and $T_{R,i}^{int}$ coincide in at least $\beta\lvert\Gamma\rvert$ interior edges. Consequently, they can differ in at most $n-3$ interior edges. We can transform $T_{R,i-1}$ into $T_{R,i}^{int}$ via a flip sequence that uses at most $2(n-3)$ flips, by \cref{lem:symmdiff}, which is a very direct application of results in \cite{sleator1986rotation}, especially Lemma~2.
	
	\pagebreak
	
	\begin{lemma}\label{lem:symmdiff}
		The flip distance between two triangulations $T_1$ and $T_2$ can be bounded by $2\lvert T_1 \setminus T_2\rvert$.
	\end{lemma}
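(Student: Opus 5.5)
The plan is to route both triangulations through a common intermediate triangulation that keeps all shared edges, following the argument of \cite[Lemma~2]{sleator1986rotation}. Let $k = \lvert T_1 \setminus T_2\rvert$ denote the number of interior edges of $T_1$ not in $T_2$. Both triangulations have the same number of interior edges, so $\lvert T_2\setminus T_1\rvert = k$ as well. If $k=0$ there is nothing to prove, so assume $k \ge 1$.

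First, I would cut the polygon along all common interior edges $T_1\cap T_2$. This splits $P$ into sub-polygons $P_1,\ldots,P_m$. Each $P_j$ is triangulated by both $T_1$ and $T_2$, and in $P_j$ the two triangulations share no interior edge. Let $k_j$ be the number of interior edges of $T_1$ inside $P_j$. Then $\sum_j k_j = k$, and $P_j$ has $k_j+3$ vertices. Flips inside one sub-polygon never touch the others, so it suffices to transform $T_1|_{P_j}$ into $T_2|_{P_j}$ using at most $2k_j$ flips and then sum over $j$.

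Inside a single sub-polygon $P_j$, I fix a vertex $v$ of $P_j$ and use the fan triangulation $F_v$, in which every interior edge is incident to $v$. Exactly as in the proof of \cref{lem:weg}, from any triangulation that is not yet $F_v$ there is a flip increasing the degree of $v$ by one. The reason is that some triangle incident to $v$ has its opposite side being an interior edge, and flipping that edge creates a new edge at $v$.

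Flipping $T_1|_{P_j}$ to $F_v$ therefore takes $k_j - d_1(v)$ flips, where $d_1(v)$ is the number of interior edges of $T_1|_{P_j}$ at $v$. The same procedure, run in reverse, takes $F_v$ to $T_2|_{P_j}$ in $k_j - d_2(v)$ flips. The total inside $P_j$ is at most $2k_j$.

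Summing over all sub-polygons gives a flip sequence of length at most $2k = 2\lvert T_1\setminus T_2\rvert$. The only point needing care is that these flips are legal in the whole polygon. This holds because the common edges are never flipped, and each flip's quadrilateral lies inside a single $P_j$. I do not expect a real obstacle; the only subtlety is to count interior edges and not boundary edges, which the decomposition handles automatically.
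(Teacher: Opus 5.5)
Your proposal is correct and follows the same approach as the paper. Both split along the common interior edges and then apply the Sleator--Tarjan--Thurston Lemma~2 argument in each piece. The only difference is that you write out that argument (routing through the fan $F_v$ at a fixed vertex, which costs $k_j-d_1(v)+k_j-d_2(v)\le 2k_j$ flips) instead of citing it as ``every interior edge is flipped at most twice.''
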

	
	\begin{proof}
		We can split the triangulations along a happy edge and obtain smaller triangulations. The flip distance for the original pair of triangulations is then at most the sum of the flip distances for the two smaller pairs of triangulations. We split along all happy edges until we are left with many small instances of the flip distance problem such that all interior edges are not happy edges. In particular, there are $\lvert T_1\setminus T_2\rvert$ interior edges over all instance combined.
		
		By \cite[Lemma 2]{sleator1986rotation} there exists a flip sequence between any pair of triangulations of a convex polygon that flips every interior edge at most twice. This yields the desired flip distance.
	\end{proof}
	
	\cref{lem:symmdiff} concludes the set up. Now, we can greedily insert one edge of $\Lambda(\Delta'_i)$ after another thanks to the existence of $v_1v'_2$ and $v_2v'_2$, as portrayed in \cref{fig:direct} and argued in \cref{lem:directphase}. \cref{lem:directphase} is again a direct result from the proof of Lemma~2 in \cite{sleator1986rotation}.
	
	We denote by $T_{R,i}=(T_{R,i}^{int}\setminus(\Lambda(\Delta_i)\cup \{v_1v_2\}))\cup(\Lambda(\Delta'_i)\cup\{v'_2v_3\})$. 
	
	\begin{figure}[htb]
		\centering
		\includegraphics[scale=0.8,page=29]{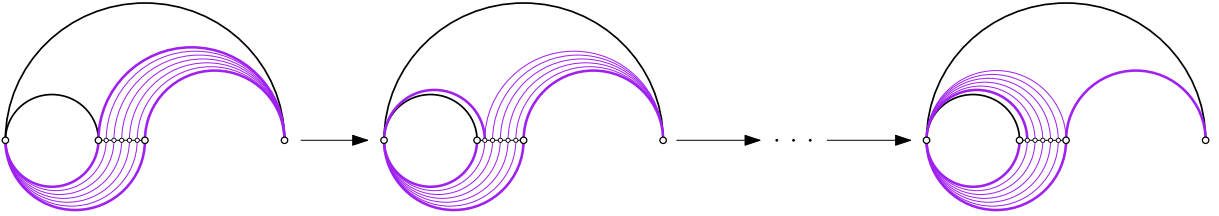}
		\caption{The direct phase executed on a crossing pair.}
		\label{fig:direct}
	\end{figure}
	
	\begin{lemma}\label{lem:directphase}
		The flip distance between $T_{R,i}^{int}$ and $T_{R,i}$ is $\beta+1$.
	\end{lemma}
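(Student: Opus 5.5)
The plan is to prove matching lower and upper bounds of $\beta+1$: the lower bound by a counting argument, the upper bound by an explicit sequence of flips inside a single polygon.

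\textbf{Lower bound.} By definition,
\[
T_{R,i}=\bigl(T_{R,i}^{int}\setminus(\Lambda(\Delta_i)\cup\{v_{i,1}v_{i,2}\})\bigr)\cup\bigl(\Lambda(\Delta'_i)\cup\{v'_{i,2}v_{i,3}\}\bigr).
\]
I first check that $\Lambda(\Delta_i)\cup\{v_{i,1}v_{i,2}\}$ is disjoint from $T_{R,i}$.
\begin{itemize}
\item Edges of $\Lambda(\Delta_i)$ and the edge $v_{i,1}v_{i,2}$ are incident to $v_{i,2}$ and a vertex of $V(\Delta_i)\setminus\{v_{i,3}\}$.
\item Edges of $\Lambda(\Delta'_i)$ and $v'_{i,2}v_{i,3}$ are incident to $v'_{i,2}$ and a vertex of $V(\Delta_i)\setminus\{v_{i,1}\}$.
\item Since $v_{i,2}\neq v'_{i,2}$ and both lie outside $V(\Delta_i)$, none of the removed edges lies in $T_{R,i}$.
\end{itemize}
Hence $\lvert T_{R,i}^{int}\setminus T_{R,i}\rvert=\beta+1$. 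A flip removes exactly one edge, so every flip sequence from $T_{R,i}^{int}$ to $T_{R,i}$ has length at least $\beta+1$.

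\textbf{Upper bound.} I would localize everything to the polygon $Q$ with vertex set $\{v'_{i,2},v_{i,2}\}\cup V(\Delta_i)$. Its boundary consists of the supporting edges $v_{i,1}v'_{i,2}$ and $v_{i,2}v'_{i,2}$, the edge $v_{i,2}v_{i,3}$ of $\Delta_i$, and the spine edges of the subdivided boundary edge of $\Delta_i$. All of these edges are present in $T_{R,i}^{int}$; the supporting edges are there by \cref{lem:support}.

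In a triangulation of a convex polygon, any three pairwise present edges forming a triangle bound a face. So $v_{i,1}v_{i,2}v'_{i,2}$ is a face, and $T_{R,i}^{int}$ restricted to $Q$ is precisely a fan at $v_{i,2}$, with interior edges $v_{i,2}v_{i,1}$ and the edges of $\Lambda(\Delta_i)$.

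Write $V(\Delta_i)=\{u_0=v_{i,1},u_1,\dots,u_\beta,u_{\beta+1}=v_{i,3}\}$ in order along the boundary of $Q$. For $j=0,\dots,\beta$, the edge $v_{i,2}u_j$ is the diagonal of the quadrilateral $v'_{i,2}\,u_j\,u_{j+1}\,v_{i,2}$. By induction on $j$, this quadrilateral is the union of two faces:
\begin{itemize}
\item one face is the triangle $v'_{i,2}u_jv_{i,2}$, which exists for $j=0$ as the face above and afterwards is produced by the previous flip;
\item the other face is the fan triangle $v_{i,2}u_ju_{j+1}$.
\end{itemize}
Flipping $v_{i,2}u_j$ therefore replaces it by $v'_{i,2}u_{j+1}$. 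The $\beta+1$ flips, for $j=0,\dots,\beta$, produce exactly $\Lambda(\Delta'_i)\cup\{v'_{i,2}v_{i,3}\}$ and touch no edge outside $Q$. The result is therefore $T_{R,i}$, and this also confirms that $T_{R,i}$ is a triangulation.

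\textbf{Main obstacle.} The only delicate step is verifying that the faces are as claimed: the boundary of $Q$ must really be present in $T_{R,i}^{int}$, which is where \cref{lem:support} is used, and the cyclic order of $v'_{i,2}$, $v_{i,2}$ and $V(\Delta_i)$ must make $Q$ a convex polygon with the fan structure described. I would check this order once for each of the above, below and crossing types (and their mirrors) using the vertex orderings from \cref{fig:pairs}. In all cases $v'_{i,2}$ and $v_{i,2}$ lie outside the interval spanned by $V(\Delta_i)$, so $Q$ is convex in the cyclic order and the argument above applies uniformly.
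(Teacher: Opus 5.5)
Your proof is correct and follows the paper's approach. Both arguments localize to the polygon on $\{v_{i,2},v'_{i,2}\}\cup V(\Delta_i)$ and turn the fan at $v_{i,2}$ into the fan at $v'_{i,2}$ with one flip per edge; the paper cites \cite[Lemma 2]{sleator1986rotation} for this, while you write the $\beta+1$ flips out explicitly and also state the matching counting lower bound $\lvert T_{R,i}^{int}\setminus T_{R,i}\rvert=\beta+1$, which the paper leaves implicit.
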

	
	\begin{proof}
		Observe that $T_{R,i}^{int}$ and $T_{R,i}$ already coincide in all edges except for the edges that are inside a convex polygon $\bar{P}$ that has as vertices $v_1$, $v_2$, $v_2'$, and $v_3$ as well as all the vertices of $V^O(\Delta_i)$. As for example argued in \cite[Lemma 2]{sleator1986rotation} we can flip any triangulation of $\bar{P}$ to the triangulation of $\bar{P}$ that has all edges incident to the same vertex while flipping every interior edge at most once. In our case, we want all edges to be incident to $v'_2$. The triangulation of $\bar{P}$ with all vertices incident to $v'_2$ contains exactly all the edges of $\Lambda(\Delta'_i)$.
	\end{proof}
	
	We combine our results on the indirect phase in \cref{lem:indirect} together with the flip sequences of the direct phase in \cref{lem:symmdiff} and \cref{lem:directphase} to obtain one flip sequence that transforms $\beta T$ to $\beta T'$ (via $T_{R}$ and $T'_R$) and provide an upper bound their flip distance formulated \cref{prop:upper} which we first stated in \cref{sec:puttogether}.
	
	\upper*
	
	\begin{proof}
		By \cref{lem:indirect} the flip distance of $T$ and $T'$ is at most $\beta(\lvert \Gamma \rvert-ac(H))$ flips more than the flip distance of $T_R$ and $T'_R$. Every iteration of the direct phase takes at most $2(n-3)$ flips in the set up (\cref{lem:symmdiff}), followed by $\beta+1$ direct flips (\cref{lem:directphase}). There are $ac(H)$ iterations in the direct phase. After that, we obtain the triangulation denoted by $T_{R,ac(H)}$. This triangulation coincides with $T'_R$ on all of $\Lambda(\bar{\Delta}_j\bar{\Delta}'_j)$ for indirect pairs and all $\Lambda(\Delta'_j)\cup \Delta'_j$ for all direct pairs. We conclude that there are again at most $n-3$ edges which may be different between the two triangulations. These can, again by \cref{lem:symmdiff}, be flipped into one another in at most $2(n-3)$ flips. The upper bound on the flip distance follows from adding up all flips from the individual steps and estimating $ac(H)+1\leq n$. The flip distance of $\beta T$ and $\beta T'$ is at most
		\vspace{-1pt}
		
		\begin{equation*}
			2\beta(\lvert \Gamma \rvert-ac(H)) + 2(n-3) (ac(H)+1) + (\beta+1) ac(H) \leq \beta(2\lvert \Gamma \rvert-ac(H))+n(2n-5).
		\end{equation*}
	\end{proof}
		
	\section{Conflict Graph based Lower Bounds on the Flip Distance}\label{sec:lowerbound}
	
	We now want to turn our attention to the study of lower bounds on the length of flip sequences between blown up triangulations $\beta T$ and $\beta T'$. We denote a flip sequence $F$ by $\beta T = T_0, T_1, \ldots ,T_\ell =  \beta T'$. We again consider the pairing $\Gamma$ of triangles in $T$ with triangles of~$T'$ that share a common edge on the spine.
	
	For a pair of triangles $(\Delta,\Delta')\in \Gamma$ of we define $gone(\Delta)$ to be the index of the first triangulation $T_{gone(\Delta)}$ in $F$ for which $T_{gone(\Delta)}$ does not contain any edge of $\Lambda(\Delta)$. Since~$\beta T'$ does not contain an edge of $\Lambda(\Delta)$, the notion of $gone(\Delta)$ is well-defined.
	
	We say that a pair $(\Delta,\Delta')$ is \emph{direct} if there is an index $a\leq gone(\Delta)$ for which $T_a$ contains an edge of $\Lambda(\Delta')$, and \emph{indirect} otherwise.
	
	\begin{lemma}\label{lem:directededge}
		Let $(\Delta_i,\Delta_i')\neq(\Delta_j,\Delta_j')$ both be direct pairs such that the conflict graph $H$ of $T,T'$ contains a directed edge $(\Delta_i,\Delta_i')\to(\Delta_j,\Delta_j')$. Then $gone(\Delta_i)< gone(\Delta_j)$.
	\end{lemma}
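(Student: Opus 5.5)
The plan is to use the direct property of $(\Delta_j,\Delta'_j)$ together with the crossing that defines the conflict edge. Since $(\Delta_j,\Delta'_j)$ is direct, there is an index $a\le gone(\Delta_j)$ such that $T_a$ contains some edge $e'$ of $\Lambda(\Delta'_j)$. We may take the smallest such index.

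We do not need $(\Delta_i,\Delta'_i)$ to be direct. The edge $(\Delta_i,\Delta'_i)\to(\Delta_j,\Delta'_j)$ in $H$ means that every edge of $\Lambda(\Delta_i)$ crosses every edge of $\Lambda(\Delta'_j)$; this crossing is a purely combinatorial condition on the cyclic order of endpoints, so it holds in the blown-up polygon. Triangulations are plane, so $T_a$ contains no edge of $\Lambda(\Delta_i)$. We then show that at every index $b\le a$ the triangulation $T_b$ already contains no edge of $\Lambda(\Delta_i)$, or at least that $gone(\Delta_i)\le a$. Since $gone$ is defined as the first index at which no edge of $\Lambda(\Delta_i)$ is present, $T_a$ lacking all of $\Lambda(\Delta_i)$ immediately gives $gone(\Delta_i)\le a\le gone(\Delta_j)$.

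It remains to make the inequality strict. We must exclude $gone(\Delta_i)=a=gone(\Delta_j)$. At index $gone(\Delta_j)$ the triangulation contains no edge of $\Lambda(\Delta_j)$. So if $a=gone(\Delta_j)$, the single flip from $T_{a-1}$ to $T_a$ both removes the last edge of $\Lambda(\Delta_j)$ and creates an edge of $\Lambda(\Delta'_j)$ (by minimality of $a$, provided $a>0$; and $a\ge1$ since $\beta T$ does not contain $\Lambda(\Delta'_j)$). This flip also makes $a=gone(\Delta_i)$, so $T_{a-1}$ still contains an edge of $\Lambda(\Delta_i)$, and the same flip must remove it. A flip removes exactly one edge, so the edge removed would lie in both $\Lambda(\Delta_i)$ and $\Lambda(\Delta_j)$. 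These fans are disjoint, since their edges use distinct subdivision vertices $V^O(\Delta_i)\neq V^O(\Delta_j)$ for distinct pairs. This is a contradiction, so $gone(\Delta_i)<gone(\Delta_j)$. Note also that if $a<gone(\Delta_j)$, strictness is immediate from $gone(\Delta_i)\le a$.

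I expect the main obstacle to be the careful handling of the off-by-one boundary case $a=gone(\Delta_j)$ in the definition of direct. A second point to check is that the fans of distinct pairs are disjoint even in degenerate configurations. This holds because each fan edge has an endpoint in its own private set $V^O$.
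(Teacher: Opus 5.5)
Your proof is correct and is essentially the paper's argument. Directness of $(\Delta_j,\Delta'_j)$ gives some $T_a$ with $a\le gone(\Delta_j)$ containing an edge of $\Lambda(\Delta'_j)$, the conflict edge then rules out every edge of $\Lambda(\Delta_i)$ in that triangulation, and equality of the two $gone$ indices fails because the fans are disjoint and a flip removes only one edge. Working at the index $a$ is in fact slightly more careful than the paper, which asserts that $T_{gone(\Delta_j)}$ itself contains an edge of $\Lambda(\Delta'_j)$, something the definition of direct does not literally give. You should drop the aside that every $T_b$ with $b\le a$ avoids $\Lambda(\Delta_i)$, since this already fails for $T_0=\beta T$; the weaker claim $gone(\Delta_i)\le a$ is all your argument needs.
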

	
	\begin{proof}
		Assume for a contradiction that $gone(\Delta_i) > gone(\Delta_j)$. Note that ``=" is not possible since $\Lambda(\Delta_i)$ and $\Lambda(\Delta_j)$ are disjoint and every flip only removes a single edge. Then $T_{gone(\Delta_j)}$ contains an edge $e_i$ of $\Lambda(\Delta_i)$. 
		Further, $T_{gone(\Delta_j)}$ contains an edge $e'_j$ of $\Lambda(\Delta'_j)$, since $(\Delta_j,\Delta_j')$ is a direct pair. However, since $H$ contains a directed edge $(\Delta_i,\Delta_i')\to(\Delta_j,\Delta_j')$, the two edges $e'_j$ and $e_i$ cross. This contradicts the planarity of a triangulation.
	\end{proof}
	
	For a given flip sequence $F$ from $\beta T$ to $\beta T'$, we write $\delta_F$ for the number of direct pairs and~$\bar{\delta}_F$ for the number of indirect pairs. As a next result we bound the number of direct pairs in the size of the largest acyclic subset, $ac(H)$.
	
	\begin{lemma}\label{lem:direct}
		Let $F$ be a flip sequence from $\beta T$ to $\beta T'$, with $\beta>0$, and $H=H(T,T')$ the conflict graph of $T$ and $T'$ for a given linear representation. Then $\delta_F \leq ac(H)$.
	\end{lemma}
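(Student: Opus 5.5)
The plan is to show that the set $D$ of direct pairs of $F$ is an acyclic subset of $H$. Since $\lvert D\rvert=\delta_F$, this gives $\delta_F\leq ac(H)$ at once. The tool is \cref{lem:directededge}: the function $gone(\cdot)$ on $D$ is a potential that strictly increases along every edge of $H[D]$.

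First, $gone(\Delta)$ must be well defined and finite for every pair, and the values for distinct pairs must differ. Because $\beta>0$, each fan $\Lambda(\Delta)$ is nonempty, and every edge of $\Lambda(\Delta)$ has an endpoint in $V^O(\Delta)$. Since $\beta T'$ contains no edge of $\Lambda(\Delta)$, the index $gone(\Delta)$ exists. For distinct pairs the fans $\Lambda(\Delta_i)$ and $\Lambda(\Delta_j)$ are disjoint (their edges use different newly introduced vertices). Each flip removes exactly one edge, so the step that removes the last edge of $\Lambda(\Delta_i)$ removes no edge of $\Lambda(\Delta_j)$. Hence $gone(\Delta_i)\neq gone(\Delta_j)$, as already noted inside the proof of \cref{lem:directededge}.

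Second, suppose $H[D]$ contained a directed cycle $(\Delta_{1},\Delta'_{1})\to(\Delta_{2},\Delta'_{2})\to\cdots\to(\Delta_{m},\Delta'_{m})\to(\Delta_{1},\Delta'_{1})$ with all vertices in $D$. Applying \cref{lem:directededge} to each edge gives $gone(\Delta_1)<gone(\Delta_2)<\cdots<gone(\Delta_m)<gone(\Delta_1)$, which is a contradiction. Equivalently, ordering $D$ by increasing $gone$ is a topological order of $H[D]$.

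The remaining point, and the only one needing care, is loops (cycles of length one). The conflict graph is defined on pairs, and a pair $(\Delta,\Delta')$ might conflict with itself if edges of $\Lambda(\Delta)$ crossed edges of $\Lambda(\Delta')$; \cref{lem:directededge} excludes $i=j$. I would show directly that such a pair cannot be direct. If $\Lambda(\Delta)$ crosses $\Lambda(\Delta')$, then, by the remark in the definition of the conflict graph that one crossing implies all crossings, every edge of $\Lambda(\Delta)$ crosses every edge of $\Lambda(\Delta')$. So no triangulation $T_a$ with $a<gone(\Delta)$ can contain an edge of $\Lambda(\Delta')$. At index $a=gone(\Delta)$ itself, $T_a$ may contain such an edge, so the direct-pair definition (which allows $a\leq gone(\Delta)$) does not by itself exclude loops. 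I therefore expect to argue instead that self-conflicts never occur: $\Delta$ and $\Delta'$ lie on opposite sides of their common spine edge $e$, and the fans of $\Delta$ and of $\Delta'$ both end at the subdivision points of $e$. Two edges from $v$ and $v'$ to points strictly inside the same segment $e$ can only interleave in the circular order if the points are ordered appropriately; because both fans use the same set $V^O(\Delta)=V^O(\Delta')$, I would check that for consecutive boundary points this does not produce a forced crossing obstruction. If self-loops turn out to be possible, the fallback is to note that at most one new edge appears per flip, so at $a=gone(\Delta)$ an edge of $\Lambda(\Delta')$ present must have been inserted by the very flip removing the last $\Lambda(\Delta)$ edge, and that edge still crosses the edges of the triangle quadrilateral, giving a contradiction. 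With loops handled, $D$ is acyclic and $\delta_F=\lvert D\rvert\leq ac(H)$.
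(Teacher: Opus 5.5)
Your argument is the paper's: order the direct pairs by $gone(\cdot)$ and use \cref{lem:directededge} to see that every conflict edge among them points forward, so they form an acyclic set of size $\delta_F$. The self-loop worry is settled at once and needs no fallback, which would in any case fail because a flip always inserts an edge crossing the one it removes. All endpoints in $V^O(\Delta)=V^O(\Delta')$ lie on the boundary arc strictly between the consecutive original vertices $v_1,v_3$, while $v_2$ and $v'_2$ both lie off that arc, so an edge of $\Lambda(\Delta)$ and an edge of $\Lambda(\Delta')$ never have alternating endpoints and $H$ has no loops.
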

	
	\begin{proof}
		Let $(\Delta_1,\Delta'_1), \ldots, (\Delta_{\delta_F},\Delta'_{\delta_F}) \in \Gamma$ be the direct pairs, labeled such that $gone(\Delta_1) < \ldots < gone(\Delta_{\delta_F})$. Then, by \cref{lem:directededge}, every edge of $H$, $(\Delta_i,\Delta'_i)\to(\Delta_j,\Delta'_j)$, fulfills $i<j$. We conclude that $\{(\Delta_1,\Delta'_1), \ldots, (\Delta_{\delta_F},\Delta'_{\delta_F})\}$ induces an acyclic subgraph of $H$ with size $\delta_F$.
	\end{proof}
	
	Now that we have an upper bound on the number of direct pairs and hence a lower bound on the number of indirect pairs in terms of the size of a maximum acyclic set of the conflict graph, we can start deriving lower bounds on the flip distance between $\beta T$ and $\beta T'$. This will be done by showing that the existence of any indirect pair implies the existence of many edges that appear in the flip sequence, but do not lie in $\beta T$ or $\beta T'$. Further, different indirect pairs give rise to disjoint edge sets. For the following lemma, we use the notation~$E(G)$ to refer to the set of edges of a graph $G$.
	
	Further, we will need the notion of an \emph{ear} of a triangulation. An ear of a triangulation of a polygon is a triangle that is bounded by two edges of the boundary of the polygon. It is well known that every triangulation of a polygon with at least $4$ vertices contains at least two ears~\cite{Meisters1975}. We further collect some helpful properties of triangulations in terms of ears. For this purpose we define the \emph{tip} of an ear of a triangulation as the vertex of an ear that is incident to two boundary edges of the polygon.
	
	\begin{observation}\label{obs:ear1}
		Let $v_1$, $v_2$, and $v_3$ be three vertices of a triangulation $\mathscr{T}$ that form an ear with $v_2$ as a tip.
		Removing $v_2$ and its two incident edges from $\mathcal{T}$ gives again a triangulation $\mathcal{T}'$ of a smaller polygon. The edge $v_1v_3$ becomes a boundary edge in $\mathcal{T}'$.
	\end{observation}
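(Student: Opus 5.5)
The plan is to check the two claims of \cref{obs:ear1} directly from the definitions: that removing $v_2$ leaves a valid polygon, and that the remaining edges form a maximal plane collection in it. Since $v_1v_2v_3$ is an ear with tip $v_2$, the edges $v_1v_2$ and $v_2v_3$ are both boundary edges of the polygon $P$. Hence $v_1$, $v_2$, $v_3$ are consecutive along the boundary. I would first argue that $v_2$ has degree exactly two in $\mathscr{T}$. Any further edge $v_2w$ with $w\notin\{v_1,v_3\}$ would start at $v_2$ and leave through the interior of the triangle $v_1v_2v_3$ toward $w$. Because $v_1,v_2,v_3$ are consecutive and $w$ lies elsewhere on the boundary, the endpoints $v_1,v_2,v_3$ and $w$ alternate. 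So $v_2w$ crosses $v_1v_3$, contradicting planarity, or it subdivides the face $v_1v_2v_3$, contradicting that it is a triangle of $\mathscr{T}$. Consequently, removing $v_2$ deletes exactly the two boundary edges $v_1v_2$ and $v_2v_3$.

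Next I would consider the polygon $P'$ obtained from $P$ by deleting $v_2$. Its vertices are those of $P$ in the same cyclic order, and $v_1$, $v_3$ are now consecutive. In the convex setting the vertices still lie on a circle, so $P'$ is again convex. The edge $v_1v_3$, previously an interior edge or, if $P$ is a triangle, trivially present, now joins consecutive vertices of $P'$ and is therefore a boundary edge. Every other edge of $\mathscr{T}$ joins two vertices of $P'$ and remains pairwise non-crossing, since crossing is defined by alternation along the boundary, and deleting $v_2$ does not change the relative order of the other vertices.

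It remains to check maximality, and I would do this by counting. $\mathscr{T}$ has $n-3$ interior edges. After removal, $v_1v_3$ changes from interior to boundary, so $\mathscr{T}'$ has $n-4=(n-1)-3$ interior edges in an $(n-1)$-gon. A plane collection of that size is a triangulation. Equivalently, the faces of $\mathscr{T}'$ are exactly the faces of $\mathscr{T}$ other than $v_1v_2v_3$, so all of them are triangles. The main (mild) obstacle is the degree-two argument for the tip. Everything else is bookkeeping, and the case $n=3$, where the result is degenerate, can be excluded by the assumption $n\ge 4$.
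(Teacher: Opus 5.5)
Your proof is correct. The paper gives no proof of this observation; it treats it as standard and uses it when peeling ears off $G_1$. Your argument therefore just supplies the routine verification the paper takes for granted: $v_2$ has degree two because any further edge $v_2w$ would alternate with, and hence cross, $v_1v_3$ (so the ``or it subdivides the face'' alternative is unnecessary); deleting $v_2$ leaves a convex $(n-1)$-gon in which $v_1,v_3$ are consecutive; and the count of $(n-1)-3$ pairwise non-crossing interior edges certifies maximality.
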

	
	\pagebreak
	
	\begin{observation}\label{obs:ear2}
		Let $\mathcal{T}$ be a triangulation of a convex polygon. If we remove the boundary edges of $\mathcal{T}$ from $\mathcal{T}$, then we obtain the following connected components in the resulting graph:
		\begin{itemize}
						\setlength\itemsep{0em}
			\item One big connected component consisting of all interior edges of $\mathcal{T}$ and their incident vertices.
			\item Isolated vertices that were tips of ears of $\mathcal{T}$.
		\end{itemize}
	\end{observation}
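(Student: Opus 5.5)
We argue with the dual tree of $\mathcal{T}$, together with the fact that every triangle of a triangulation of a convex polygon contains at least one interior edge as soon as the polygon has at least $4$ vertices. If the polygon is a triangle there are no interior edges and no ears in the relevant sense, so we assume at least $4$ vertices. Removing the boundary edges leaves exactly the interior edges, so the statement has two parts. First, the interior edges together with their endpoints form a single connected graph. Second, a vertex is incident to no interior edge exactly when it is the tip of an ear.

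\textbf{Connectivity.} Take two interior edges $e$ and $f$. The dual graph of $\mathcal{T}$ (one node per triangle, one dual edge per interior edge) is a tree, hence connected. So there is a sequence of triangles $\Delta_1,\dots,\Delta_m$ with $e$ an edge of $\Delta_1$, $f$ an edge of $\Delta_m$, and each pair $\Delta_k,\Delta_{k+1}$ sharing an interior edge $g_k$. Inside one triangle, any two edges share a vertex. Therefore $e, g_1, g_2, \dots, g_{m-1}, f$ is a walk in which consecutive edges are incident, and all these edges are interior. This shows that all interior edges lie in one component.

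\textbf{Isolated vertices.} Every vertex $v$ is incident to at least one triangle. Suppose $v$ has no incident interior edge. Then both edges of a triangle $\Delta$ at $v$ are boundary edges. A vertex has exactly two incident boundary edges, so these are its two polygon neighbours, and $\Delta$ is an ear with tip $v$.

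Conversely, let $v$ be the tip of an ear $v_1vv_3$. With at least $4$ vertices, $v_1v_3$ is interior. The triangle on the other side of $v_1v_3$ does not contain $v$, because only one triangle has both $v_1v_3$ and $v$. By \cref{obs:ear1}, deleting $v$ leaves a triangulation in which $v$ plays no role. Hence every edge of $\mathcal{T}$ at $v$ is either $vv_1$ or $vv_3$, both of which are boundary edges, so $v$ is isolated.

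Finally, every vertex that is not isolated lies on some interior edge and therefore belongs to the big component. This gives exactly the two kinds of components listed in the statement. No step is a real obstacle; the only points needing care are the degenerate triangle case and the fact that the dual graph is connected.
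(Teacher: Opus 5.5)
Your proof is correct. The paper gives no argument for this observation: it is stated as evident and then used in \cref{lem:crucial2}. There, all that matters is that every component other than the one carrying the interior edges is a single vertex that is the tip of an ear, so the number of components is at most one plus the number of ears. Your dual-tree walk $e,g_1,\dots,g_{m-1},f$ gives connectivity of the interior edges cleanly. Your second step is also right: a vertex with no interior edge can only lie in the triangle formed with its two polygon neighbours, which is then an ear with that vertex as tip. Together these already prove the statement as written, so you are supplying a justification the paper omits rather than taking a different route.

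The only weak spot is the converse, that every ear tip is isolated, which the statement does not actually require. The sentence about the triangle on the other side of $v_1v_3$ does no work there. The appeal to \cref{obs:ear1} leans on a statement whose wording (removing the tip ``and its two incident edges'') already presupposes the degree-two fact you want, so the step is asserted rather than argued. The direct reason is a one-line crossing argument: if $vw\in\mathcal{T}$ with $w\notin\{v_1,v_3\}$, then $v_1,v,v_3,w$ appear in this cyclic order along the boundary. Hence $vw$ crosses the ear edge $v_1v_3$, contradicting planarity.

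The triangle case is also not really exceptional. By the paper's definition the single triangle is an ear and all three vertices are tips, so the statement holds with an empty big component.
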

	
	With these preliminaries, we are ready to prove the main technical result of this section.
	
	\begin{lemma}\label{lem:crucial2}
		For every indirect pair $p=(\Delta_p,\Delta'_p)$ in a flip sequence $F$, let $T_p=T_{gone{\Delta_p}}$. There exists a subgraph $G_p$ of $T_p$, a set $K_p$ of edges of $T_p$ and an integer $l_p$ such that:
		\begin{itemize}
			\setlength\itemsep{0em}
			\item $K_p$ contains $l_p$ edges of $T_p$ where each edge connects non-consecutive vertices of $V(\Delta_p)$.
			\item $G_p$ contains $\beta-l_p$ vertices of $V^O(\Delta_p)$, no edges from $\beta T\cup \beta T'$, and has at most $3n$ connected components
			\item $K_p\cap E(G_p) = \emptyset$.
		\end{itemize}
		Furthermore, if $p$ and $q=(\Delta_q,\Delta'_q)$ are different indirect pairs of $F$ then:
		\begin{itemize}
			\setlength\itemsep{0em}
			\item $K_p\cap E(G_q) = \emptyset$.
			\item $K_p\cap K_q = \emptyset$.
		\end{itemize}
	\end{lemma}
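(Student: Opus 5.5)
The plan is to work inside the single triangulation $T_p = T_{gone(\Delta_p)}$ and look at how it restricts to the region around $V(\Delta_p)$. Write $v_1,v_3$ for the endpoints of the spine edge of $\Delta_p$. Then $V(\Delta_p)$ is the run of consecutive spine vertices $v_1=u_0<u_1<\dots<u_\beta<u_{\beta+1}=v_3$.

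By the choice of index, $T_p$ contains no edge of $\Lambda(\Delta_p)$. Since $p$ is indirect, $T_p$ also contains no edge of $\Lambda(\Delta'_p)$; this uses that edges of $\Lambda(\Delta'_p)$ cannot appear before $gone(\Delta_p)$, and we apply it at $a=gone(\Delta_p)$.

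\textbf{Step 1: build $K_p$ and $G_p$.} Let $K_p$ be the set of edges of $T_p$ joining two non-consecutive vertices of $V(\Delta_p)$, and put $l_p=\lvert K_p\rvert$. These edges are pairwise non-crossing arcs on the interval. Together with the spine edges they triangulate a subpolygon, the convex hull region of $V(\Delta_p)$ cut off below the outermost such arcs.

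Consider the vertices of $V^O(\Delta_p)$ that are not hidden underneath an arc of $K_p$, i.e.\ are not enclosed by it. The claim to check is that at most $l_p$ vertices of $V^O(\Delta_p)$ are hidden. A forest of $l_p$ nested arcs triangulating pockets encloses exactly $l_p$ interior vertices in total, counted by ears and triangle counting.

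Every remaining vertex $u\in V^O(\Delta_p)$ must, in $T_p$, have an edge leaving $V(\Delta_p)$. Otherwise $u$ would be enclosed, or an ear tip, by Observation~\ref{obs:ear2}. The one exception is an endpoint situation, which is absorbed by the $K_p$ count using Observation~\ref{obs:ear1} to peel ears.

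For each such $u$, take the edge $e_u$ of $T_p$ at $u$ leaving $V(\Delta_p)$. Choose it as the first edge in the fan around $u$, so that all these edges lie in the triangles of $T_p$ adjacent to the interval. Any edge from $u\in V^O(\Delta_p)$ to a vertex outside $V(\Delta_p)$ lies in $\beta T\cup\beta T'$ only if it belongs to $\Lambda(\Delta_p)$ or $\Lambda(\Delta'_p)$. Hence it lies in neither.

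Define $G_p$ as the graph on these $\beta-l_p$ vertices together with their chosen edges, and all edges of $T_p$ among the other endpoints that lie in the triangles of $T_p$ incident to the exposed part of the interval. Keep only edges not in $\beta T\cup\beta T'$. By construction $K_p\cap E(G_p)=\emptyset$, since every edge of $G_p$ has an endpoint outside $V(\Delta_p)$ or touches $V^O(\Delta_p)$ in a non-$K_p$ way.

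\textbf{Step 2: bound the components by $3n$.} Consecutive exposed vertices $u<u'$ are joined through a triangle of $T_p$ on the spine edge $uu'$. That triangle's apex $w$ lies outside $V(\Delta_p)$, so $uw$ and $u'w$ link $u$ and $u'$. A component can therefore break only where the apex changes across an edge of $\beta T\cup\beta T'$, which had to be dropped.

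Such a dropped edge must be incident to an original vertex of $T$ or to a fan apex. Each of the $n$ original vertices, and each side of the interval, can cause only a bounded number of breaks: at most three, namely the apex $v_2$, the apex $v'_2$, and an original vertex. This gives at most $3n$ components.

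\textbf{Step 3: disjointness across pairs, the main obstacle.} $K_p\cap K_q=\emptyset$ is immediate, because $V(\Delta_p)$ and $V(\Delta_q)$ share at most one vertex and $K$-edges have both endpoints in their own interval.

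The hard part is $K_p\cap E(G_q)=\emptyset$, since $T_p$ and $T_q$ are different triangulations. An edge of $K_p$ has both endpoints in $V(\Delta_p)$ and is non-consecutive, so it has an endpoint in $V^O(\Delta_p)$. An edge of $G_q$ has an endpoint in $V^O(\Delta_q)$. If the two edges coincided, the edge would join $V^O(\Delta_p)$ with $V^O(\Delta_q)$. That contradicts both endpoints lying in $V(\Delta_p)$, unless the intervals share an endpoint, and shared endpoints are original vertices, not in $V^O$.

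I expect the real difficulty to be Step 1's choice of $G_p$ edges. The construction must avoid fan edges while still keeping everything in $T_p$, and it must keep the component count independent of $\beta$. If the ear argument fails, I would instead define $G_p$ as the link of the interval in $T_p$ and count components via boundary crossings.
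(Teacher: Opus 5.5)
\textbf{There is a gap in Step~3 that comes from how you define $G_p$.} Your $K_p$ (all edges of $T_p$ joining non-consecutive vertices of $V(\Delta_p)$) and your count of hidden vertices are fine. The problem is that you also put into $G_p$ ``all edges of $T_p$ among the other endpoints'', i.e.\ edges joining two apexes outside $V(\Delta_p)$. These are the only edges of your $G_p$ that could lie in $\beta T\cup\beta T'$, which is why you need the dropping rule and the break counting in Step~2. Such an edge has no endpoint in $V^O(\Delta_p)$, so your Step~3 sentence ``an edge of $G_q$ has an endpoint in $V^O(\Delta_q)$'' is false for it. Concretely, an apex--apex edge $xx'$ of $G_p$ may join two non-consecutive vertices of some other $V(\Delta_q)$ and still be present in $T_q$. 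It is not in $\beta T\cup\beta T'$, so it survives your filter, and then it lies in $K_q$, violating $K_q\cap E(G_p)=\emptyset$. The paper's proof handles exactly this point: it peels ears inside every $V(\Delta)$, not only $V(\Delta_p)$, and then strips the boundary of $G_2$, so $G_p$ never contains an edge with both endpoints in a common $V(\Delta)$. Separately, your Step~2 claim that each original vertex causes at most three breaks is asserted, not proved.

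\textbf{The fix is already in the first half of your Step~2.} Let $v_1=w_0<w_1<\dots<w_m=v_3$ be $v_1$, the exposed vertices, and $v_3$. Each $w_jw_{j+1}$ is an edge of $T_p$: either a spine edge or a maximal arc of $K_p$, so not necessarily a spine edge as you wrote. Let $x_j$ be the apex of the triangle of $T_p$ on the outer side of $w_jw_{j+1}$. Then $x_j\notin V(\Delta_p)$, since otherwise $x_jw_{j+1}$ or $w_jx_j$ would be an arc of $K_p$ covering the exposed vertex $w_j$ or $w_{j+1}$. Now let $G_p$ consist of both edges $w_jx_j$ and $w_{j+1}x_j$ of every such triangle, rather than one chosen edge per vertex. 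Keep only those whose endpoint $w_j$ resp.\ $w_{j+1}$ lies in $V^O(\Delta_p)$, and include no apex--apex edges. Every such edge joins $V^O(\Delta_p)$ to a vertex outside $V(\Delta_p)$, which gives three things:
\begin{itemize}
\item it is not in $\beta T\cup\beta T'$, because only edges of $\Lambda(\Delta_p)\cup\Lambda(\Delta'_p)$ could be, and $T_p$ contains none of them;
\item it is not in $K_p$, since its other endpoint lies outside $V(\Delta_p)$;
\item it is not in any $K_q$, because $V^O(\Delta_p)\cap V(\Delta_q)=\emptyset$.
\end{itemize}
Moreover $G_p$ is connected, since $w_j$ and $w_{j+1}$ share the neighbour $x_j$. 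So nothing is ever dropped and no break counting is needed.

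Repaired this way, your argument is a genuinely different and simpler route than the paper's cut, ear-peel and boundary-strip construction. It also gives one component instead of $3n$, which would improve the error term in \cref{lem:lower} to $\bar{\delta}_F$.

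Two minor points. No ``endpoint exception'' actually arises in Step~1. Not every edge of $K_p$ has an endpoint in $V^O(\Delta_p)$ (consider $v_1v_3$), but Step~3 does not need this.
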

	
	\begin{proof}
		We repeat the following constructions for every indirect pair $p=(\Delta_p,\Delta'_p)$ to obtain $K_p$ and $G_p$.
		
		Since $p$ is an indirect pair, $T_p \cap \Lambda(\Delta_p) = T_p \cap \Lambda(\Delta') = \emptyset$. Since the only interior edges in $\beta T\cup \beta T'$ that share vertices with $V^O(\Delta_p)$ are the ones in $\Lambda(\Delta_p)$ and~$\Lambda(\Delta'_p)$, no interior edge of $T_p$ that is incident to a vertex in $V^O(\Delta_p)$ is in $\beta T \cup \beta T'$.
		
		We iteratively construct a graph $G_1$ from $T_p$ by removing vertices and edges from $T_p$. First, for every edge $e \in T_p \cap (\beta T \cup \beta T')$ we observe that all points of $V^O(\Delta_p)$ lie on the same side of~$e$. We delete from $T_p$ all the points (and their incident edges) that are contained in the side of $e$ that does not contain $V^O(\Delta_p)$.
		
		Note the resulting graph $G_1$ is a triangulation of a connected convex polygon and contains all vertices of $V^O(\Delta_p)$. Further, all edges of $G_1 \cap (\beta T \cup \beta T')$ lie on the boundary of $G_1$.
		
		Our ultimate goal is for the graph $G_p$ to not contain edges from $\beta T\cup \beta T'$. All remaining such edges lie on the boundary of $G_1$. We will get rid of these edges by removing all the edges on the boundary. But before we can do that we need to make some preparations. Since ears have two edges from the boundary of a triangulation, removing the convex hull  will create an isolated vertex from every tip of an ear. In the following procedure, we deal with this matter to keep the number of connected components low.
		
		We look at all the ears of $G_1$ in which all three vertices of the ear are part of $V(\Delta)$ for some $\Delta$ with $(\Delta,\Delta')\in \Gamma$. We will remove such ears one by one from the triangulation. In the process, we potentially introduce new ears. We repeat the following procedure exhaustively, until there are no longer ears with all three vertices in $V(\Delta)$ for some $\Delta$. For every ear $\Delta_{ear}$ on vertices $v_1$, $v_2$, and $v_3$ in $V(\Delta)$ with tip $v_2$ we do the following:
		
		\begin{itemize}
						\setlength\itemsep{0em}
			\item Remove $v_2$ and its incident edges from $G_1$.
			\item If $\Delta = \Delta_p$, then we add the edge $v_1v_3$ (which is now on the boundary of $G_1$) to $K_p$.
		\end{itemize}
		
		We call the resulting graph $G_2$. Note that $G_2$ is again a triangulation of a connected convex polygon by \cref{obs:ear1}.
		
		The cardinality of $K_p$, denoted by $l_p$, is exactly the number of vertices that we removed from~$V^O(\Delta_p)$.
		
		Further, all edges that connect two vertices of $V(\Delta)$ for any $\Delta$ are either on the boundary of~$G_2$ or are no longer contained in $G_2$.
		
		To observe this assume for a contradiction that there was a remaining interior edge between two vertices of $V(\Delta)$ in the interior of $G_2$. Splitting the triangulation along this edge gives two smaller triangulations. One of these two triangulations exists entirely on points of $V(\Delta)$. We know that this triangulation must have at least two ears, one of which has all its vertices in $V^O(\Delta)$ such that the vertices appear consecutively along the boundary of~$G_2$. This particular ear was already present in~$G_2$, a contradiction.
		
		We conclude that all ears of $G_2$ now contain at most two vertices of some $V(\Delta)$ for any $\Delta$. In particular, every ear has a third vertex that is not in $V(\Delta)$. This allows us to give a very rough estimate on the number of ears in $G_2$ that only depends on $n$, but not on $\beta$.
		
		First, we bound the number of triangles that have vertices in $V(\Delta)$ for some $\Delta$ with $(\Delta,\Delta')\in~\Gamma$. Such ears can only appear when the non-boundary edge of the triangulation connects two vertices where not both of them belong to the same set $V(\Delta)$. This happens once on every side of the points $V(\Delta)$ and at most $2\lvert \Gamma \rvert < 2n$ times in total.
		
		Next, we bound the number of ears that do not contain vertices of any $V^O(\Delta)$. There are at most $n$ vertices that are not in any $V^O(\Delta)$. With these $n$ vertices we can build at most $n/2$ ears. We sum up and conclude that there are less than $3n$ ears in $G_2$.
		
		Finally, we are ready to remove the boundary of $G_2$. We set $G_p$ to be the graph that results from removing the boundary from $G_2$. $G_p$ has at most $3n$ connected components. One connected component contains all the interior edges of $G_2$ and the others are isolated vertices that were formerly tips of ears.
		
		Further, $G_p$ does not contain any edges from $\beta T\cup \beta T'$ since they were all removed, when removing the edges on the boundary. $G_p$ contains all vertices from $V^O(\Delta_p)$ that have not been removed while removing ears. For every vertex of $V^O(\Delta_p)$ that was removed while removing an ear, an edge has been added to $K_p$. Therefore, if $K_p$ contains $l_p$ edges, then $G_p$ contains $\beta-l_p$ vertices from~$V^O(\Delta_p)$.
		
		$E(G_p)$ and $K_q$ for any $q$ are disjoint, since any edge that has two vertices in common with $V^O(\Delta_q)$ has either been removed when removing ears, or when removing the boundary of~$G_2$, or was never in $T_p$ in the first place. Further, the sets $K_p$ and $K_q$ are disjoint, since their endpoints of their edges are different.
	\end{proof}
	
	Next, we use \Cref{lem:crucial2} to prove a lower bound on the length of a flip sequence~$F$.
	
	\begin{lemma}\label{lem:lower}
		Any flip sequence $F$ from $\beta T$ to $\beta T'$ has length at least $\beta (\lvert\Gamma\rvert + \bar{\delta}_F) - 3n\bar{\delta}_F$ = $\beta (2\lvert\Gamma\rvert-\delta_F) - 3n\bar{\delta}_F$.
	\end{lemma}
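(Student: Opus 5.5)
Our plan is to count, for every interior edge that is ever present during $F$, how often it must be created or destroyed, and to charge flips to edges. Every flip removes exactly one edge and inserts exactly one edge, so the length $\ell$ of $F$ equals the number of edge insertions and also the number of edge removals. We lower bound $\ell$ by counting insertions, split into three disjoint classes.

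First, every edge of $\beta T'\setminus\beta T$ must be inserted at least once. This includes all fan edges $\Lambda(\Delta')$ for $(\Delta,\Delta')\in\Gamma$, which gives at least $\beta\lvert\Gamma\rvert$ insertions. Second, for each indirect pair $p$ we count insertions of edges that lie in neither $\beta T$ nor $\beta T'$, using \cref{lem:crucial2}.

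The edges of $K_p$ connect non-consecutive vertices of $V(\Delta_p)$. Such edges lie in neither triangulation, since the only interior edges at $V^O(\Delta_p)$ in $\beta T\cup\beta T'$ are fan edges, and fan edges are not edges within $V(\Delta_p)$. They are present in $T_p$, so each was inserted at some point. This gives $l_p$ insertions, and the $K_p$ are pairwise disjoint over indirect $p$.

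For $G_p$, which has no edges of $\beta T\cup\beta T'$ and at most $3n$ components, every non-isolated vertex lies in a component. A forest spanning the vertices of $G_p$ therefore has at least $\lvert V(G_p)\rvert-3n\ge \beta-l_p-3n$ edges. Hence $E(G_p)$ contains at least $\beta-l_p-3n$ edges not in $\beta T\cup\beta T'$, each inserted at some point. Together with $K_p$, which is disjoint from $E(G_p)$, pair $p$ yields at least $\beta-3n$ such inserted edges.

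The main obstacle is additivity across different indirect pairs $p\neq q$. We need the sets $K_p\cup E(G_p)$ to be pairwise disjoint, or at least for the counted edges to be distinct insertions. \cref{lem:crucial2} gives $K_p\cap K_q=\emptyset$ and $K_p\cap E(G_q)=\emptyset$, but not $E(G_p)\cap E(G_q)=\emptyset$. To handle overlap, we would not count edges of $G_p$ directly. Instead we choose for each $p$ a spanning forest of $G_p$ and charge edges via the vertices of $V^O(\Delta_p)$. Root each component of the forest and assign to every non-root vertex the edge to its parent. The vertices in $V^O(\Delta_p)$ are disjoint across $p$, and an edge can be charged by at most its two endpoints. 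We would then refine the charging so that each edge is charged by the endpoint in $V^O(\Delta_p)$ whose pair became gone first. If needed, we argue instead that an edge in both $G_p$ and $G_q$ must be present at both times $gone(\Delta_p)$ and $gone(\Delta_q)$, and count insertions with multiplicity over time.

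The fallback is to use that each such edge has an endpoint in some $V^O$. We restrict the forest to edges incident to $V^O(\Delta_p)$, paying the loss into the $3n$ slack per pair. Since $\beta$ dominates, even a factor-$2$ loss would not suffice, so we will insist on the exact injective charging.

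Granting additivity, the total count is
\[
\ell\ \ge\ \beta\lvert\Gamma\rvert+\sum_{p\text{ indirect}}(\beta-3n)=\beta(\lvert\Gamma\rvert+\bar\delta_F)-3n\bar\delta_F,
\]
and $\lvert\Gamma\rvert=\delta_F+\bar\delta_F$ gives the second form.

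Disjointness from the first class holds automatically, because the second-class edges are not in $\beta T'$. The first-class count treats only fan edges, while the second class uses the non-fan edges at $V^O(\Delta_p)$, so no insertion is counted twice.
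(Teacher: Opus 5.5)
\textbf{Gap: additivity across indirect pairs is assumed, not proved.} Your bookkeeping matches the paper's: each flip inserts one edge, and the $\beta\lvert\Gamma\rvert$ edges of $\beta T'\setminus\beta T$ must each be inserted. The edges coming from $K_p$ and $G_p$ lie outside $\beta T\cup\beta T'$, so they give further insertions disjoint from the first class. The gap is exactly where you flag it: you write ``granting additivity'' but never establish it, and none of your sketched fixes works.
\begin{itemize}
\item Per-pair spanning forests with parent-edge charging are only 2-to-1. An edge $vw$ with $v\in V^O(\Delta_p)$ and $w\in V^O(\Delta_q)$ can be the parent edge of $v$ in the forest of $G_p$ and of $w$ in the forest of $G_q$. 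As you note yourself, a factor-$2$ loss destroys the bound.
\item Charging ``by the endpoint whose pair became gone first'' leaves the other endpoint with no charged edge, so injectivity is not restored.
\item Counting insertions ``with multiplicity over time'' is false. An edge present in both $T_{gone(\Delta_p)}$ and $T_{gone(\Delta_q)}$ may have stayed in the triangulation throughout and been inserted only once.
\end{itemize}

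\textbf{Missing idea: count distinct edges in the union graph.} Stop working pair by pair and apply the counting once to $U=\bigcup_p G_p$, as the paper does.
\begin{itemize}
\item Each vertex of $U$ lies in some $G_p$, and each component of $G_p$ sits inside one component of $U$. So $U$ has at most $\sum_p 3n = 3n\bar\delta_F$ components.
\item $V(U)$ contains the pairwise disjoint sets $V(G_p)\cap V^O(\Delta_p)$, of sizes $\beta-l_p$.
\item Hence $\lvert E(U)\rvert \ge \lvert V(U)\rvert - c(U) \ge \beta\bar\delta_F - L - 3n\bar\delta_F$, where $L=\sum_p l_p$.
\end{itemize}
These are distinct edges, none in $\beta T\cup\beta T'$, and each appears in some triangulation of $F$, so each is inserted at least once. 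Overlaps between $E(G_p)$ and $E(G_q)$ no longer matter, because the inequality is applied once to $U$; only the vertex sets $V^O(\Delta_p)$ need to be disjoint. By \cref{lem:crucial2}, the $L$ edges of $\bigcup_p K_p$ are pairwise disjoint and disjoint from $E(U)$. Adding them gives the required $\beta\bar\delta_F-3n\bar\delta_F$. In your charging language, the fix is to use a single spanning forest of $U$ with one root per component, instead of separate forests for each $G_p$.

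\textbf{Minor point.} Your argument that $K_p$ avoids $\beta T\cup\beta T'$ misses the chord joining the two original endpoints of the spine edge, which has no endpoint in $V^O(\Delta_p)$. It is still excluded, because any chord between non-consecutive vertices of $V(\Delta_p)$ crosses edges of both $\Lambda(\Delta_p)$ and $\Lambda(\Delta'_p)$.
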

	
	\begin{proof}
		For every indirect pair $(\Delta_i,\Delta'_i)$ with $i$ ranging from $1$ to $\bar{\delta}_F$ let $G_i$ and $K_i$ be the respective subgraph and set of edges obtained by~\cref{lem:crucial2} and $l_i$ the cardinality of $K_i$. Further, let $L$ denote $\sum_{i=1}^{\bar{\delta}_F} l_i$. As $K_i$ and $K_j$ are disjoint for $i\neq j$, we get that $L = \lvert \cup_{i=1}^{\bar{\delta}_F} K_i \rvert$.
		
		Let $U$ be the union of all $G_i$. Then $U$ contains at least $\beta\bar{\delta}_F-L$ vertices, since it contains~$\beta-l_i$ vertices of $V^O(\Delta_i)$ and these sets of vertices are disjoint for different $i$. Further, $U$ has at most $3n\bar{\delta}_F$ connected components. We conclude that $U$ contains at least $\beta\bar{\delta}_F-(L+3n\bar{\delta}_F)$ distinct edges that are neither part of $\beta T$ nor $\beta T'$. By also counting the edges in $K_i$ we get a total of $\beta\bar{\delta}_F-3n\bar{\delta}_F$ edges that have been introduced during the flip sequence~$F$, but are neither part of $\beta T$ nor $\beta T'$. We can add the cardinalities of the two sets since $E(G_i)$ and $K_j$ are disjoint for all choices of $i,j\in\{1,\ldots,\bar{\delta}_F\}$.
		
		Now $\beta T\setminus \beta T'$ contains at least $\beta\Gamma$ edges that all need to get flipped during $F$. Further, there are $\beta\bar{\delta}_F-3n\bar{\delta}_F$ edges that are neither in $\beta T$ nor in $\beta T'$, but appear as edges in $F$. All these edges need to be removed by some flip in $F$. This gives a total of
		\begin{equation*}
			\beta\lvert\Gamma\rvert + \beta\bar{\delta}_F-3n\bar{\delta}_F = \beta (\lvert\Gamma\rvert + \bar{\delta}_F) - 3n\bar{\delta}_F = \beta(2\lvert\Gamma\rvert-\delta_F) - 3n\bar{\delta}_F
		\end{equation*}
		flips in the flip sequence $F$. For the last equality, we use that $\lvert\Gamma\rvert = \delta_F + \bar{\delta}_F$.
	\end{proof}
	
	We obtain our final lower bound that is independent of the flip sequence $F$, by using \cref{lem:direct} to bound the number of direct pairs.
	
	\lower*
	
	\begin{proof}
		We modify the lower bound from \Cref{lem:lower}. We use \cref{lem:direct} to upper bound $\delta_F$ from above with $ac(H)$. Further, $\bar{\delta}_F \leq \lvert\Gamma\rvert \leq n$.
	\end{proof}
	
	\subparagraph{Acknowledgements and Funding.} Research on this project is funded by the Austrian Science Fund (FWF) 10.55776/DOC183. Helpful discussions on this topic took place during workshops and events associated to the PAGCAP project (Austrian Science Fund FWF, grant~I~5788).
	
	I want to thank Anna Hofer and Birgit Vogtenhuber for helpful comments regarding the write-up, Cesar Ceballos for helping with finding related work about lattices and polytopes, Alex Black and Raphael Steiner for pointing out the connection between my result and the computation of shortest paths in simple polytopes, and Oswin Aichholzer for drawing my attention towards this problem and for previous joint research on reconfiguration problems. Further, I want to thank the anonymous reviewers for helpful remarks regarding the illustration of the paper and for pointing out additional connections to related topics.

	\bibliographystyle{alpha} 
	\bibliography{references}

\end{document}